\documentclass[a4paper,USenglish,cleveref, autoref, thm-restate, authorcolumns, numberwithinsect]{lipics-v2021}
\usepackage{graphicx} 
\usepackage{tikz,cite,mathtools}
\usetikzlibrary{decorations.pathreplacing, arrows,math}
\usepackage[algo2e,vlined,linesnumbered,ruled]{algorithm2e}

\usepackage{pdfpages}
\usepackage{lscape}
\usepackage{booktabs}
\usepackage{todonotes}




\usepackage{environ}


\renewcommand{\P}{\ensuremath{\mathsf{P}}}
\newcommand{\NP}{\ensuremath{\mathsf{NP}}}
\newcommand{\cP}{\ensuremath{\mathcal{P}}}
\renewcommand{\O}{\ensuremath{\mathcal{O}}}

\newcommand{\N}{\ensuremath{\mathbb{N}}}

\newcommand{\C}{\ensuremath{\mathcal{C}}}
\newcommand{\K}{\ensuremath{\mathcal{K}}}
\newcommand{\W}{\ensuremath{\mathsf{W}}}
\newcommand{\FPT}{\ensuremath{\mathsf{FPT}}}
\newcommand{\XP}{\ensuremath{\mathsf{XP}}}

\newcommand{\tin}{\mathsf{tree} \textnormal{-} \alpha}
\newcommand{\pin}{\mathsf{path} \textnormal{-} \alpha}
\newcommand{\si}{\mathsf{si}}

\newcommand{\lmim}{\mathsf{linear} \textnormal{-} \mathsf{mim}}
\newcommand{\pw}{\mathsf{pw}}
\newcommand{\tw}{\mathsf{tw}}
\newcommand{\eccn}{\mathsf{ecc}}
\newcommand{\track}{\mathsf{t}}
\newcommand{\inumber}{\mathsf{i}}
\newcommand{\thin}{\mathsf{thin}}
\newtheorem{problem}{Problem}

\definecolor{new-blue}{RGB}{137, 171, 227}
\definecolor{new-red}{RGB}{225, 82, 61}
\definecolor{new-green}{RGB}{0, 155, 19}

\title{The Simultaneous Interval Number}
\subtitle{A New Width Parameter that Measures the Similarity to Interval Graphs}
\author{Jesse Beisegel}{Brandenburg University of Technology, Cottbus, Germany}{jesse.beisegel@b-tu.de}{https://orcid.org/0000-0002-8760-0169}{}
\author{Nina Chiarelli}{FAMNIT and IAM, University of Primorska, Koper, Slovenia}{nina.chiarelli@famnit.upr.si}{https://orcid.org/0000-0002-8169-0925}{}
\author{Ekkehard Köhler}{Brandenburg University of Technology, Cottbus, Germany}{ekkehard.koehler@b-tu.de}{}{}
\author{Martin Milani\v{c}}{FAMNIT and IAM, University of Primorska, Koper, Slovenia}{martin.milanic@upr.si}{https://orcid.org/0000-0002-8222-8097}{}
\author{Peter Mur\v{s}i\v{c}}{FAMNIT, University of Primorska, Koper, Slovenia}{peter.mursic@famnit.upr.si}{https://orcid.org/0000-0002-7350-6809}{}
\author{Robert Scheffler}{Brandenburg University of Technology, Cottbus, Germany}{robert.scheffler@b-tu.de}{https://orcid.org/0000-0001-6007-4202}{}

\titlerunning{The Simultaneous Interval Number} 

\authorrunning{J. Beisegel, N. Chiarelli, E. Köhler, M. Milani\v{c}, P. Mur\v{s}i\v{c}, R. Scheffler} 

\Copyright{Jesse Beisegel, Nina Chiarelli, Ekkehard Köhler, Martin Milani\v{c}, Peter Mur\v{s}i\v{c}, and Robert Scheffler} 

\ccsdesc[500]{Theory of computation~Parameterized complexity and exact algorithms}
\ccsdesc[500]{Theory of computation~Graph algorithms analysis}

\keywords{Interval graph, simultaneous representation, width parameter, algorithm, parameterized complexity} 

\category{} 




\acknowledgements{The authors would like to thank the reviewers of SWAT 2024 for their helpful remarks; in particular, for their hints on thinness and boxicity. The authors acknowledge partial support by the Slovenian Research and Innovation Agency (I0-0035, research programs P1-0285 and P1-0404, and research projects N1-0160, N1-0210, J1-3001, J1-3002, J1-3003, J1-4008, and J1-4084) and by the research program CogniCom (0013103) at the University of Primorska.}

\nolinenumbers 

\hideLIPIcs

\EventEditors{John Q. Open and Joan R. Access}
\EventNoEds{2}
\EventLongTitle{42nd Conference on Very Important Topics (CVIT 2016)}
\EventShortTitle{CVIT 2016}
\EventAcronym{CVIT}
\EventYear{2016}
\EventDate{December 24--27, 2016}
\EventLocation{Little Whinging, United Kingdom}
\EventLogo{}
\SeriesVolume{42}
\ArticleNo{23}

\begin{document}

\maketitle

\begin{abstract}
We propose a novel way of generalizing the class of interval graphs, via a graph width parameter called the simultaneous interval number.
This parameter is related to the simultaneous representation problem for interval graphs and defined as the smallest number $d$ of labels such that the graph admits a $d$-simultaneous interval representation, that is, an assignment of intervals and label sets to the vertices such that two vertices are adjacent if and only if the corresponding intervals, as well as their label sets, intersect.
We show that this parameter is \NP-hard to compute and give several bounds for the parameter, showing in particular that it is sandwiched between pathwidth and linear mim-width.
For classes of graphs with bounded parameter values, assuming that the graph is equipped with a simultaneous interval representation with a constant number of labels, we give \FPT{} algorithms for the clique, independent set, and dominating set problems, and hardness results for the independent dominating set and coloring problems. 
The \FPT{} results for independent set and dominating set are for the simultaneous interval number plus solution size. 
In contrast, both problems are known to be \W[1]-hard for linear mim-width plus solution size.
\end{abstract}


\section{Introduction}

Interval graphs are among the best-known and most studied graph classes, due to their intuitive representation with an interval intersection model, their rich structure, and many algorithmic advantages. 
Many problems that are \NP-hard on general graphs can be solved in polynomial time on interval graphs. 
Examples are the coloring problem~\cite{gupta1979optimal,hashimoto1971wire,kernighan1973optimum}, the dominating set problem~\cite{booth1982dominating}, and the Hamiltonian cycle problem~\cite{keil1985finding}. 
Furthermore, due to their definition via interval representations, there are plenty of real-world applications for interval graphs (see~\cite{jiang2013recognizing} for a nice, short overview of such applications).

There are several different ways to generalize the concept of an interval graph. 
One of these concepts are the so-called \emph{$d$-interval graphs} where every vertex is represented by a set of $d$ intervals on the real line and two vertices are adjacent if any pair of their intervals intersect.
A subclass of these graphs are the \emph{$d$-track interval graphs} where we have $d$ parallel lines and every vertex is represented by $d$ intervals, one on each line. 
It is easy to see that any graph is a $d$-track interval graph (and, thus, a $d$-interval graph) for some $d$. Therefore, it makes sense to define the parameters \emph{interval number} $\inumber(G)$~\cite{griggs1980extremal} and \emph{track number} $\track(G)$~\cite{gyarfas1995multitrack} as the minimal numbers $d$ such that $G$ is a $d$-interval (resp. $d$-track interval) graph. 
There is some work on these graph classes concerning parameterized complexity~\cite{fellows2009parameterized,jiang2010parameterized}. 
However, most of the classical graph problems are \NP-hard for graphs with $\inumber(G) = 2$ or $\track(G) = 3$~\cite{graphclassesinterval,graphclassestrack}. 
Furthermore, even the independent set problem and the dominating set problem are $\W[1]$-hard when parameterized by the solution size for graphs with $\track(G) = 2$~\cite{jiang2010parameterized}.

Another way to define a whole family of generalizations of interval graphs comes from the so-called simultaneous representation problems. 
In this generalization, we are given $d$ interval graphs $G_1, \ldots, G_d$ which may share some vertices and asks for an interval representation that assigns to every vertex in $V(G_1) \cup \dots \cup V(G_d)$ exactly one interval such that for every $i \in \{1, \dots, d\}$ two vertices of $G_i$ are adjacent if and only if their intervals intersect. 
The problem of deciding whether a given set of graphs has such a simultaneous representation was introduced in 2009 by Jampani and Lubiw~\cite{jampani2009simconf}, where they considered chordal graphs, comparability graphs, and permutation graphs, all classes of graphs that can also be defined via certain intersection representations. 
A year later, the same authors considered the problem of simultaneous interval representations~\cite{jampani2010siminterval}. Since then, there has been several results on the complexity of this problem for different classes of graphs~\cite{bok2018note, blaesius2016simultan, rutter2019simultan}.

An equivalent definition for a simultaneous interval representation can be given as follows: For some interval model we add additional label sets in the form of subsets of $\{1, \ldots , d\}$ and two vertices belonging to two intervals are adjacent if these intervals intersect and the intersection of their label sets is non-empty.
This definition leads to an intuitive application in scheduling, where each of the labels $1, \ldots, d$ represents some machine and an interval represents a job with its processing window (the interval) and the set of machines needed to perform the job (the label set). 
An independent set in such a graph would then represent a conflict-free schedule of a subset of jobs.

Similar to $d$-interval graphs and $d$-track interval graphs, any graph can be defined as a $d$-simultaneous interval graph for some $d$. Thus, we can introduce the \emph{simultaneous interval number} $\si(G)$ as the smallest number $d$ for which $G$ is a $d$-simultaneous interval graph. 
Many width parameters are unbounded for interval graphs, as these tend to grow with the clique number (for example treewidth/pathwidth is unbounded for interval graphs). 
Furthermore, even width parameters that can be bounded for dense graphs, such as cliquewidth or twin-width, are unbounded for interval graphs~\cite{bonnet2022twin,golumbic2000clique}. 
On the other hand, those parameters that are bounded for interval graphs, such as linear mim-width or tree-independence number (see~\cite{zbMATH07191154,zbMATH07796423}), do not properly reflect the structural advantages of interval graphs. 
Many of the problems that are easy for interval graphs, such as coloring or independent set, are either para-\NP-hard or $\W[1]$-hard (see~\cref{table:comp}). 
Furthermore, the maximum clique problem is para-\NP-hard when parameterized by one of those parameters, even though the structure of the maximal cliques is very restricted for interval graphs.

When parameterized by the simultaneous interval number, however, the maximum clique problem becomes \FPT, as we will show.
In addition, some of the problems that are $\W[1]$-hard when parameterized by linear mim-width plus solution size, such as independent set and dominating set (see~\cite{fomin2020tractability,jaffke2019mimiii}), are \FPT{} when parameterized by simultaneous interval number plus solution size.
Therefore, we argue that the simultaneous interval number is a strong candidate to fill the gap in describing graphs with a structure similar to interval graphs.

\newcommand{\wh}{\W[1]}
\newcommand{\wht}{\W[2]}
\newcommand{\npc}{pNPh}


\subparagraph{Our Contribution.}
We introduce a new graph width parameter, the \emph{simultaneous interval number}, in~\Cref{sec:si-number}.
This parameter is compared to most of the other common width parameters such as treewidth, cliquewidth, or mim-width in~\Cref{sec:width-parameters}, where we also give several bounds involving the order and the size of the graph, the edge clique cover number, the clique number, and other width parameters. 
In~\Cref{sec:si-complexity} we show that the computation of the simultaneous interval number is \NP-hard. 
Furthermore, we give results on the parameterized complexity of several graph problems, such as clique (\Cref{sec:cliques}), coloring (\Cref{sec:coloring}), and variants of the independent set and dominating set problems (\Cref{sec:domination-independence}). 
For an overview of these results see~\cref{table:comp}.

\newcommand{\our}[1]{\textcolor{new-green}{#1}}

\begin{table}
    \centering
    \caption{Parameterized complexity summary. 
    Abbreviations mean ind $\to$ independent, dom $\to$ dominating, W[1] $\to$ \W[1]-hard, W[2] $\to$ \W[2]-hard, \npc{} $\to$ para-\NP-hard, $\tin$ $\to$ tree-independence number. 
    Green results are given in this paper. 
    Hardness results for problems with given solution size $k$ means that the problem is hard when parameterized by $p+k$.}\label{table:comp}
\resizebox{\textwidth}{!}{
    \begin{tabular}{l c c c c}
    \toprule
    \small
  problem\textbackslash parameter   & $p = \si(G)$ & $p = \lmim(G)$ & $p = \tin(G)$ & $p = \track(G)$  \\
  \midrule
 clique                     & $\our{\O^*(p2^{2^p+2p})}$   & \npc~\cite{vatshelle2012new} & \npc\cite{graphclasses3k1}               & \npc~\cite{francis2015maximum} \\
 clique of size $k$         & $\our{\O^*(2^{kp})}$    & ?                 & $\O^*(2^{k^p})$~\cite{chaplick2021topological,zbMATH07796423}    & $\O^*(p^kk^{k+2})$~\cite{fellows2009parameterized} \\
 coloring                   & \our{\npc}              & \npc~\cite{garey1980complexity}              & \npc~\cite{garey1980complexity}               & \npc~\cite{garey1980complexity} \\
 $k$-coloring  & $\our{\O^*(k^{kp})}$    & $\O(n^{kp})$~\cite{gonzales2024d-stable}        & $\O^*(k^{k^p})$~\cite{chaplick2021topological,zbMATH07796423}    & \npc~\cite{graphclassestrack} \\
 ind set                    & $\O(n^{p})$            & $\wh/\O(n^{2p})$~\cite{fomin2020tractability,jaffke2019mimiii}  & $\O(n^p)$~\cite{zbMATH06850324,zbMATH07796423}         & \npc~\cite{graphclassestrack}\\
 ind set of size $k$        & $\our{\O^*(2^{kp})}$    & $\wh/\O(n^{2p})$~\cite{fomin2020tractability,jaffke2019mimiii}   & ?                  & \wh~\cite{fellows2009parameterized}\\
 dom set                    & $\O(n^{2p})$            & $\wh/\O(n^{2p})$~\cite{fomin2020tractability,jaffke2019mimiii}   & \npc~\cite{bertossi1984dominating,corneil1984clustering}              & \npc~\cite{graphclassestrack} \\
 dom set of size $k$        & $\our{\O^*(2^{kp})}$    & $\wh/\O(n^{2p})$~\cite{fomin2020tractability,jaffke2019mimiii}   & \wht~\cite{liu2011parameterized}                  & \wh~\cite{fellows2009parameterized} \\
 ind dom set                & \our{\wh}/$\O(n^{2p})$  & $\wh/\O(n^{2p})$~\cite{fomin2020tractability,jaffke2019mimiii}   & ?                  & \npc~\cite{graphclassestrack} \\
 ind dom set of size $k$    & $\O(n^{2p}) $           & $\wh/\O(n^{2p})$~\cite{fomin2020tractability,jaffke2019mimiii}   & ?                  & \wh~\cite{fellows2009parameterized} \\
 \bottomrule
\end{tabular}
}
\end{table}

\subparagraph{Definitions and Notation.}
Unless stated otherwise, all the graphs considered are simple, finite, non-empty and undirected.
Given a graph $G$, we denote by $V(G)$ its vertex set and by $E(G)$ its edge set. 
Often we will denote the number of vertices of graph, i.e., $|V(G)|$, as $n$ and the number of edges, i.e., $|E(G)|$, as $m$.
A \emph{matching} in a graph is a set of pairwise disjoint edges; a matching is \emph{induced} if no two vertices belonging to different edges of the matching are adjacent.

Next we define the term \emph{class of intersection graphs}.\label{def-intersection}
Such a graph class $\C$ can be defined via a family $S_\C$ of sets whose elements are also families of sets. 
For the sake of convenience, we assume that $S_\C$ contains a set family that contains a non-empty set.
A \emph{$\C$-representation} of a graph $G$ is a mapping $R : V(G) \to \mathcal{F}$ where $\mathcal{F} \in S_\C$ such that $xy \in E(G)$ 
if and only if $R(x) \cap R(y) \neq \emptyset$. 
We call $\mathcal{F}$ the \emph{ground set family} of $R$. 
By definition, $\C$ consists precisely of graphs $G$ having a $\C$-representation. 

The class of \emph{chordal graphs} is defined via the set $S_\C$ that contains for every tree the set of its subtrees. 
For the class of \emph{interval graphs}, the set $S_\C$ contains only the one set family, namely the set of all open intervals of the real line. 
For any interval representation $R$ of graph $G$, we define $\ell(v)$ and $r(v)$ to be the left and right endpoints of the interval $R(v)$.


A graph $G$ is a \emph{bipartite graph} if its vertex set can be partitioned into two independent sets $A$ and $B$. 
Furthermore, a bipartite graph is \emph{complete bipartite} if every vertex of $A$ is adjacent to every vertex of $B$. 
A graph is a \emph{split graph} if its vertex set can be partitioned into a clique and an independent set. 
A graph is a \emph{complete split graph} if there exists a partition in which every vertex of the independent set is adjacent to all the vertices of the clique. A graph is \emph{$C_4$-free} if it does not contain an induced cycle of length $4$.

\section{Simultaneous Representations and Simultaneous Interval Number}\label{sec:si-number}

In~\cite{jampani2009simconf,jampani2012simultaneous}, Jampani and Lubiw introduce the concept of \emph{simultaneous representations} as well as the \emph{simultaneous representation problem}. This concept was then taken up by Bok and Jedličková~\cite{bok2018note} who give the following definition:

\begin{definition}
Let $\mathcal{C}$ be a class of intersection graphs. Graphs $ G_1, \ldots G_{d} \in \mathcal{C}$
are \emph{simultaneously $\C$-representable} if there exist $\C$-representations $R_1, \ldots , R_d$ of $ G_1, \ldots G_d$ 
with a common ground set family $\mathcal{F} \in S_\C$ such that 
\[\forall i,j \in \{1, \dots, d\}, \forall v \in V(G_i) \cap V(G_j)\colon R_i(v) = R_j (v).\] 
In particular, we say that $G= G_1 \cup \ldots \cup G_{d} $ is a \emph{$d$-simultaneous $\C$-graph.}
\end{definition}

For convenience of notation, we will oftentimes use the following equivalent definition of a simultaneous representation.

\begin{definition}\label{def:labeling}
Let $d\in \mathbb{N}$, let $G$ be a graph, and let $L: V(G) \rightarrow \mathcal{P}(\{1, \ldots , d\})$ be a labeling of the vertices of $G$. 
Furthermore, let $G' \in \mathcal{C}$ with $V(G) = V(G')$ and  $E(G) \subseteq E(G')$ be a graph with a $\C$-representation $R$. 
We say that $(R,L)$ is a \emph{$d$-simultaneous $\C$-representation} of $G$ if it holds that $vw \in E(G)$ if and only if $ R(v) \cap R(w) \neq \emptyset $ and $ L(v) \cap L(w) \neq \emptyset $.
\end{definition}

Note that this definition allows the \emph{empty set} as a label set. Obviously, any vertex with an empty label set is isolated. 
Therefore, the graphs admitting a $0$-simultaneous $\C$-representation are exactly the edgeless graphs.

\begin{observation}\label{obs:labeling}
    Let $\C$ be a class of intersection graphs. 
    Let the graphs $ G_1, \dots, G_d \in \C$ be simultaneously $\C$-representable with $\C$-representations $R_1,\dots,R_d$ with a common ground set family $\mathcal F$. 
    Let $G\coloneqq G_1 \cup \dots \cup G_d$ and let $R\colon V(G) \to \mathcal F$ be defined as $R(v) \coloneqq R_i(v)$ for any $i$ with $v \in V(G_i)$. 
    Let $L$ be the labeling given by $L(v) = \{ i\colon  v \in G_i\}$ for all $v\in V(G)$. 
    Then $(R,L)$ is a $d$-simultaneous $\C$-representation of $G$. 
\end{observation}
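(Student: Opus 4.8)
The plan is to verify directly that the pair $(R,L)$ produced by the construction satisfies every requirement of \Cref{def:labeling}. First I would check that $R$ is well defined: a vertex $v\in V(G)$ may lie in several of the sets $V(G_i)$, but whenever $v\in V(G_i)\cap V(G_j)$ the defining property of simultaneous $\C$-representability guarantees $R_i(v)=R_j(v)$, so the value $R(v)$ does not depend on the chosen index $i$. Since every $R_i$ maps into the common ground set family $\mathcal F\in S_\C$, we also have $R\colon V(G)\to\mathcal F$, and by construction $L(v)\subseteq\{1,\dots,d\}$, so $L\colon V(G)\to\mathcal P(\{1,\dots,d\})$ is a valid labeling.

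Next I would introduce the graph $G'$ demanded by \Cref{def:labeling} as the intersection graph of $R$, that is, $V(G')=V(G)$ with $xy\in E(G')$ exactly when $R(x)\cap R(y)\neq\emptyset$. By the definition of a class of intersection graphs, the existence of the map $R$ into $\mathcal F\in S_\C$ witnesses $G'\in\C$ and shows that $R$ is a $\C$-representation of $G'$. It then remains to check $E(G)\subseteq E(G')$: if $vw\in E(G)=\bigcup_i E(G_i)$, pick an $i$ with $vw\in E(G_i)$; then $v,w\in V(G_i)$, so $R(v)=R_i(v)$ and $R(w)=R_i(w)$, and since $R_i$ represents $G_i$ we get $R(v)\cap R(w)=R_i(v)\cap R_i(w)\neq\emptyset$, i.e.\ $vw\in E(G')$.

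The core of the argument is the biconditional $vw\in E(G)\iff R(v)\cap R(w)\neq\emptyset\text{ and }L(v)\cap L(w)\neq\emptyset$. For the forward direction, an edge $vw\in E(G_i)$ gives $R(v)\cap R(w)\neq\emptyset$ as above, and also places $i$ in both $L(v)$ and $L(w)$, so $L(v)\cap L(w)\neq\emptyset$. For the converse, suppose both intersections are nonempty; choosing $i\in L(v)\cap L(w)$ puts $v,w\in V(G_i)$, whence $R(v)=R_i(v)$ and $R(w)=R_i(w)$, so $R_i(v)\cap R_i(w)\neq\emptyset$, and because $R_i$ is a $\C$-representation of $G_i$ this forces $vw\in E(G_i)\subseteq E(G)$.

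There is no genuine obstacle here; the statement is essentially the assertion that the two equivalent formulations of simultaneous representations agree on this construction, so the proof is a definition chase. The only points that require a moment of care are the well-definedness of $R$ (which is exactly where the hypothesis $R_i(v)=R_j(v)$ on shared vertices is used) and, in the converse direction of the biconditional, remembering to first extract a common index from $L(v)\cap L(w)$ before appealing to the representation $R_i$ of the corresponding graph $G_i$.
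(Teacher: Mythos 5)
Your proof is correct; it is the routine verification of \Cref{def:labeling} (well-definedness of $R$ on shared vertices, taking $G'$ to be the intersection graph of $R$, and the two directions of the biconditional via a common index $i\in L(v)\cap L(w)$). The paper states this as an observation with no proof at all, treating it as immediate from the definitions, so your write-up simply supplies the definition chase the authors leave implicit.
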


\begin{sloppypar}
This observation implies that every $d$-simultaneous $\C$-graph has a $d$-simultaneous \hbox{$\C$-representation}. However, the converse is not true in general.\footnote{As an example, we consider the class $\K$ of complete graphs which can be represented as intersection graphs via the set $S_\K = \{\{1\}\}$. 
The $n$-vertex edgeless graph has a $1$-simultaneous $\K$-representation where all vertices are labeled with the empty set.
However, it is not a $d$-simultaneous $\K$-graph for any $d$. \label{fn:clique}} However, if we exclude empty label sets and unused labels, then there is an analogous result to \cref{obs:labeling}. 
\end{sloppypar}

\begin{observation}\label{obs:labeling-2}
    Let $(R,L)$ be a $d$-simultaneous $\C$-representation of a graph $G$ with $L(v) \neq \emptyset$ for all $v \in V(G)$ and such that for all $i \in \{1,\dots,d\}$ there exists a vertex $v$ with $i \in L(v)$. 
    Let $G_i$ be the subgraph of $G$ induced by the vertex set $\{v\colon i \in L(v)\}$ and let $R_i$ be the restriction of $R$ to $V(G_i)$. 
    Then the graphs $G_1,\dots,G_d$ are simultaneously $\C$-representable with $\C$-representations $R_1,\dots,R_d$.
\end{observation}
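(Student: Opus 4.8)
The plan is to verify directly that the data consisting of a single ground set family, the graphs $G_1,\dots,G_d$, and the maps $R_1,\dots,R_d$ satisfy the definition of simultaneous $\C$-representability. Concretely, I would check three things: that each $R_i$ is a genuine $\C$-representation of $G_i$ over a \emph{common} ground set family, that the consistency requirement $R_i(v)=R_j(v)$ on shared vertices holds, and that each $G_i$ indeed lies in $\C$.

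First I would pin down the ground set family. By \Cref{def:labeling}, the map $R$ is a $\C$-representation of some supergraph $G'$ of $G$ on the same vertex set, so $R\colon V(G)\to\mathcal F$ for a single $\mathcal F\in S_\C$. Each $R_i$ is a restriction of $R$, hence also maps into $\mathcal F$; this provides the common ground set family for free and simultaneously settles the consistency condition, since for $v\in V(G_i)\cap V(G_j)$ both $R_i(v)$ and $R_j(v)$ equal $R(v)$.

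The only real content is the adjacency equivalence for each $R_i$. The key observation I would exploit is that any two vertices $x,y\in V(G_i)$ both carry the label $i$, so $i\in L(x)\cap L(y)$ and the label-intersection condition $L(x)\cap L(y)\neq\emptyset$ of \Cref{def:labeling} is automatically satisfied. Consequently, restricted to $V(G_i)$, adjacency in $G$ collapses to the representation condition alone: $xy\in E(G)$ if and only if $R(x)\cap R(y)\neq\emptyset$. Since $G_i$ is the subgraph of $G$ induced by $\{v : i\in L(v)\}$, we have $xy\in E(G_i)$ iff $xy\in E(G)$ for $x,y\in V(G_i)$, and chaining these equivalences yields $xy\in E(G_i)$ iff $R_i(x)\cap R_i(y)\neq\emptyset$. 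This is exactly the statement that $R_i$ is a $\C$-representation of $G_i$, and therefore $G_i\in\C$.

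The point to handle with care — rather than a genuine obstacle — is the distinction between $G$ and the supergraph $G'$ underlying $R$: the map $R$ witnesses adjacency in $G'$, not in $G$, so one must route the argument through \Cref{def:labeling} (the two-condition ``iff'') together with the induced-subgraph property, rather than through the plain $\C$-representation of $G'$. The non-emptiness hypotheses (that $L(v)\neq\emptyset$ for all $v$ and that every label is used) are needed only to guarantee that each $G_i$ is a well-defined non-empty graph, so that the assertion $G_i\in\C$ is meaningful.
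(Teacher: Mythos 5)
Your verification is correct, and it coincides with what the paper intends: the statement is given as an observation with no written proof, and the routine check you carry out---common ground set family inherited from $R$, consistency $R_i(v)=R_j(v)=R(v)$ on shared vertices, and the collapse of the two-condition adjacency criterion of \cref{def:labeling} to the representation condition alone because any two vertices of $V(G_i)$ share the label $i$---is exactly the argument being left implicit. One small attribution slip in your closing remark: it is the hypothesis that every label $i$ is used by some vertex that guarantees each $G_i$ is non-empty, whereas the hypothesis $L(v)\neq\emptyset$ for all $v$ serves to ensure every vertex of $G$ lies in some $G_i$ (so that $G=G_1\cup\dots\cup G_d$, matching the intended converse to \cref{obs:labeling}); this does not affect the validity of your proof.
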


A vertex with an empty label set would have to be considered as a vertex that is in none of the graphs of a simultaneous representation. 
However, this technical addition to the definition is very useful to address the issue of isolated vertices and leads to more compact statements and simpler proofs. 
For all of the classes considered here, it is always possible to represent isolated vertices without the empty label set. 
For example, for interval graphs we can always represent such a vertex with an interval that intersects nothing else. 
However, in general we cannot assume that this is possible for any class of intersection graphs (see \cref{fn:clique}).

\begin{theorem}\label{thm:label-edges}
For every class of intersection graphs $\C$, every graph $G$ has an $|E(G)|$-simultaneous $\C$-representation.
\end{theorem}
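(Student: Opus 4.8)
The plan is to put all the adjacency information of $G$ into the label part $L$ of the representation, using one label per edge, while making the set part $R$ trivial by assigning every vertex the same ground-set element so that its intersection graph is complete.

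First I would set up $R$. By the standing convenience assumption that $S_\C$ contains a set family having a non-empty member, fix some $\mathcal{F} \in S_\C$ together with a non-empty set $A \in \mathcal{F}$, and define $R\colon V(G) \to \mathcal{F}$ by $R(v) = A$ for every vertex $v$. Then $R(v) \cap R(w) = A \neq \emptyset$ for every pair of vertices, so $R$ is a $\C$-representation of the complete graph $G'$ on the vertex set $V(G)$. In particular $G' \in \C$ and $E(G) \subseteq E(G')$, exactly as required of the host graph in \cref{def:labeling}.

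Next I would set up $L$ with $d \coloneqq |E(G)|$ labels. Fixing a bijection between $E(G)$ and $\{1,\dots,d\}$ and writing $\ell_e$ for the label assigned to edge $e$, I define, for each vertex $v$, the label set $L(v) = \{\ell_e : e \in E(G),\ v \in e\}$ consisting of the labels of the edges incident to $v$. The key observation is that a label lies in $L(v) \cap L(w)$ precisely when some edge is incident to both $v$ and $w$; since $G$ is simple, the only edge that can be incident to both is $vw$ itself. Hence $L(v) \cap L(w) \neq \emptyset$ if and only if $vw \in E(G)$. Combining the two parts, the condition ``$R(v) \cap R(w) \neq \emptyset$ and $L(v) \cap L(w) \neq \emptyset$'' collapses (because $R(v)\cap R(w)$ is always non-empty) to ``$L(v) \cap L(w) \neq \emptyset$'', which equals ``$vw \in E(G)$''; so $(R,L)$ is a $d$-simultaneous $\C$-representation of $G$ with $d = |E(G)|$.

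The only step requiring genuine care is guaranteeing a legitimate host graph $G'$ with an actual $\C$-representation: this is exactly where the assumption that $S_\C$ contains a non-empty set is used, since without it no usable common ground set family need exist. The remaining work is routine verification, and the boundary cases handle themselves: isolated vertices simply receive $L(v) = \emptyset$, and an edgeless $G$ yields $d = 0$, both consistent with the remarks following \cref{def:labeling}.
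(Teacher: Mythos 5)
Your proof is correct and takes essentially the same approach as the paper: assign every vertex the same non-empty set from a fixed ground set family (so the interval part is trivially complete) and label each vertex with the indices of its incident edges, so that adjacency is encoded entirely in the labels. Your additional remarks on the host graph and the boundary cases are accurate but not needed beyond what the paper states.
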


\begin{proof}
    Let $E = \{e_1, \ldots, e_m\}$. 
    By our assumption of classes of intersection graphs, there exists a set family $\mathcal{F} \in S_\C$ that contains some non-empty set $S$. 
    For all vertices $v \in V(G)$, we define $R(v) = S$. 
    Furthermore, let $L(v) \coloneqq \{i\colon v \text{ is an endpoint of  } e_i\}$. It follows directly that two vertices of $G$ are adjacent if and only if their representations and their label sets have a non-empty intersection.
\end{proof}

In particular, this theorem holds for the class of intervals graphs, motivating the following definition.

\begin{definition}
    Let $G$ be a graph. 
    The \emph{simultaneous interval number} $\si(G)$ of $G$ is the smallest integer $d$ such that there exists a $d$-simultaneous interval representation of $G$. 
\end{definition}

As observed before, the graphs with simultaneous interval number 0 are exactly the edgeless graphs. 
Furthermore, the graphs with simultaneous interval number at most 1 are exactly the interval graphs, and 
the class of graphs with the simultaneous interval number equal to 2 contains some asteroidal triples and all cycles (see \cref{fig:si-examples}). 

\begin{figure}
    \centering
    \begin{tikzpicture}[vertex/.style={draw, circle}, int/.style={line width=1.4}, scale=0.9]
    \begin{scope}
        \node[vertex,inner sep=2pt] (1) at (0,0) {};
        \node[vertex,inner sep=2pt] (2) at (0,0.5) {};
        \node[vertex,inner sep=2pt] (3) at (0,1) {};
        \node[vertex,inner sep=2pt] (4) at (-0.354,-0.354) {};
        \node[vertex,inner sep=2pt] (5) at (-0.707,-0.707) {};
        \node[vertex,inner sep=2pt] (6) at (0.354,-0.354) {};
        \node[vertex,inner sep=2pt] (7) at (0.707,-0.707) {};

        \draw (3) -- (2) -- (1) -- (4) -- (5);
        \draw (7) -- (6) -- (1);
    \end{scope}

    \begin{scope}[xshift=3cm]
        \draw[int, lipicsYellow] (0,0) -- (3,0);
        \draw[int, lipicsYellow] (-0.5,-0.25) -- (0.5,-0.25);
        \draw[int, lipicsYellow] (-1.25,-0.5) -- (-0.25,-0.5);

        \draw[int, black] (1,0.25) -- (2,0.25);
        \draw[int, new-blue] (1,0.5) -- (2,0.5);

        \draw[int, lipicsYellow] (2.5,-0.25) -- (3.5,-0.25);
        \draw[int, lipicsYellow] (4.25,-0.5) -- (3.25,-0.5);
    \end{scope}

    \begin{scope}[xshift=9cm, yshift=-0.5cm]
        \node[vertex,inner sep=2pt] (1) at (0,0) {};
        \node[vertex,inner sep=2pt] (2) at (-0.5,0.5) {};
        \node[vertex,inner sep=2pt] (3) at (0,1) {};
        \node[vertex,inner sep=2pt] (4) at (0.5,0.5) {};
      
        \draw (1) -- (2) -- (3) -- (4) -- (1);
    \end{scope}

    \begin{scope}[xshift=11cm, yshift=-0.5cm]
        \draw[int, lipicsYellow] (0,0) -- (1,0);
        \draw[int, new-blue] (0,1) -- (1,1);
        \draw[int, black] (-0.75,0.5) -- (0.25,0.5);
        \draw[int, black] (0.75,0.5) -- (1.75,0.5);
    \end{scope}
    \end{tikzpicture}
    \caption{Two forbidden induced subgraphs of interval graphs with $2$-simultaneous interval representations. 
    Yellow intervals have label set $\{1\}$, blue intervals have label set $\{2\}$ and black intervals have label set $\{1,2\}$. Note that the representation of the 4-cycle can be extended to a $2$-simultaneous interval representation of cycles of arbitrary length.}
    \label{fig:si-examples}
\end{figure}
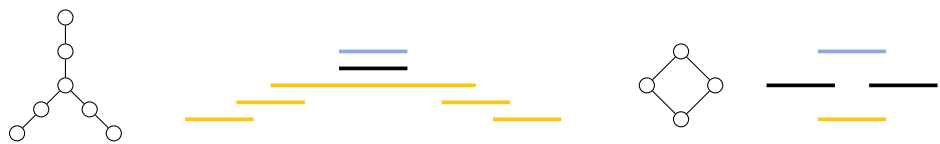

In the following, we show some bounds on the simultaneous interval number. The first result is implied directly by \cref{thm:label-edges}.

\begin{corollary}
    For any graph $G$ it holds that $\si(G) \leq |E(G)|$.
\end{corollary}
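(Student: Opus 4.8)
The plan is to derive this bound as an immediate consequence of \cref{thm:label-edges}, so the proof is essentially a one-line specialization rather than a fresh argument. The crucial observation is that the class of interval graphs is a class of intersection graphs in the sense defined earlier: its set $S_\C$ consists of the single set family given by all open intervals of the real line, and this family certainly contains a non-empty set. Hence the hypotheses of \cref{thm:label-edges} are satisfied for $\C$ equal to the class of interval graphs.

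Applying \cref{thm:label-edges} with this choice of $\C$, every graph $G$ possesses an $|E(G)|$-simultaneous interval representation. First I would invoke the theorem to produce such a representation $(R,L)$ explicitly: every vertex receives one fixed non-empty interval $S$ as its interval, and the label set $L(v)$ records the indices of the edges incident to $v$. Two vertices then share an interval automatically, so adjacency is governed entirely by whether their edge-index label sets meet, which happens precisely when the two vertices span a common edge.

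Finally I would close the argument by appealing to the definition of the simultaneous interval number: since $\si(G)$ is the smallest $d$ for which a $d$-simultaneous interval representation of $G$ exists, the existence of an $|E(G)|$-simultaneous interval representation forces $\si(G) \leq |E(G)|$. There is no genuine obstacle here; the only point requiring a moment of care is confirming that interval graphs are an admissible class of intersection graphs so that \cref{thm:label-edges} applies, and this is already spelled out in the definitions preceding the statement.
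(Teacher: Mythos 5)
Your proposal is correct and matches the paper's intended argument exactly: the corollary is stated as an immediate consequence of \cref{thm:label-edges} specialized to the class of interval graphs, combined with the minimality in the definition of $\si(G)$. Nothing further is needed.
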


Next we show that this bound is tight, up to a constant factor. 

\begin{theorem}\label{thm:3-partite}
    Let $G$ be a complete $3$-partite graph with parts of equal size.
    Then, $\si(G) = \frac19|V(G)|^2 = \frac13 |E(G)|$.
\end{theorem}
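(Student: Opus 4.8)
The plan is to prove the two inequalities separately, writing $A,B,C$ for the three parts, each of size $s=|V(G)|/3$, so that $|E(G)|=3s^2$ and the claimed value is $s^2$.

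For the upper bound $\si(G)\le s^2$ I would use a representation in which all intervals coincide (say $R(v)=(0,1)$ for every $v$), so that adjacency is governed entirely by the label sets; the geometric part then disappears and the task reduces to a purely combinatorial labeling with pairwise disjoint label sets inside each part and pairwise intersecting label sets across parts. The clean way to realize this with only $s^2$ labels is to index the labels by pairs $(i,j)\in\{1,\dots,s\}^2$ and to fix a Latin square $L$ of order $s$: give $a_i$ the labels $\{(i,j):j\}$, give $b_j$ the labels $\{(i,j):i\}$, and give $c_k$ the labels $\{(i,j):L(i,j)=k\}$. The row, column, and symbol properties of the Latin square then yield exactly one shared label for each cross pair and none for any same-part pair, which is precisely the complete $3$-partite adjacency.

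For the lower bound $\si(G)\ge s^2$ I would fix an arbitrary $d$-simultaneous interval representation $(R,L)$ and, for each label $i$, set $V_i=\{v:i\in L(v)\}$. Since any two vertices of $V_i$ share the label $i$, the restriction of $R$ to $V_i$ is an interval representation of $G[V_i]$, so $G[V_i]$ is a complete multipartite interval graph; as a complete multipartite graph is an interval graph only when at most one part has size $\ge 2$ (otherwise it contains an induced $C_4$), each $V_i$ meets at most one of $A,B,C$ in two or more vertices. I would then distinguish two cases. If no $V_i$ meets any part twice, then every $V_i$ is a clique of size at most $3$, so $G[V_i]$ has at most $3$ edges; since every edge of $G$ lies in $G[V_i]$ for some label $i$, summing gives $3s^2=|E(G)|\le 3d$, i.e.\ $d\ge s^2$. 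The interesting case is when some $V_i$ contains two vertices $a,a'$ of one part, say $A$: being non-adjacent but sharing label $i$, their intervals are disjoint, say $r(a)<\ell(a')$, and every vertex of $B\cup C$ is adjacent to both, forcing its interval to contain the whole gap, hence a common point $p\in(r(a),\ell(a'))$. Thus all $B$-intervals pairwise intersect and all $C$-intervals pairwise intersect, so within $B$ and within $C$ the label sets must be pairwise disjoint; consequently each label lies in at most one $L(b)$ and at most one $L(c)$, so each of the $s^2$ edges between $B$ and $C$ is covered by a distinct label, again giving $d\ge s^2$.

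The main obstacle is that the complete-split structure of the label classes is by itself too weak: a single label can cover up to $2s+1$ edges, so naive edge counting only yields a linear bound. The step that closes the quadratic gap is the geometric observation that two equal-part vertices sharing a label pin down a common point for the other two parts; this collapses those two parts' label families to pairwise disjoint ones and turns the corresponding $s^2$ cross edges into $s^2$ labels that cannot be reused, matching the Latin-square construction exactly.
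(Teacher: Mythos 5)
Your proof is correct and follows essentially the same route as the paper's: the upper bound uses the same $s\times s$ grid of labels (the paper gives one part the full grid and separates its vertices by disjoint intervals, where you use a Latin square and identical intervals), and the lower bound rests on the same key observation that two of the three parts are forced to have pairwise-intersecting intervals, hence pairwise-disjoint label families, so the $s^2$ edges between them require $s^2$ distinct labels. Your extra case (no label shared within a part, handled by counting at most three edges per label class) is subsumed in the paper by a uniform geometric argument, but it is a valid alternative.
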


\begin{proof}
    Let $n = |V(G)|$. 
    First we show that $\si(G) \geq \frac19n^2$. 
    Consider an arbitrary simultaneous interval representation of $(R,L)$ of $G$.
    Assume that there are two intervals that do not intersect. 
    Then these intervals belong to vertices of the same partition set. This implies that the intervals of the vertices of the other two partition sets have to intersect both intervals. 
    Therefore, they all share a common point. 
    Thus, in any case there are at least two of the three partition sets whose intervals all share a common point. 
    Let $X$ and $Y$ be two such partition sets. 
    As the vertices of a partition set form an independent set, their label sets have to be disjoint. 
    Furthermore, the label sets of all vertices of $X$ have to contain at least one label of all the label sets of the vertices in $Y$. 
    As the label sets of $Y$ are also disjoint, each label set of $X$ has at least $\frac n3$ elements. 
    Thus, in total there have at least $\frac n3 \cdot \frac n3 = \frac19 n^2$ different labels.

    Now we construct a simultaneous interval representation of $G$ with exactly $\frac19 n^2$ labels. 
    Let $X$, $Y$, and $Z$ be the three partition sets of $G$. 
    We consider $\frac n3$ pairwise disjoint sets $S_1, \ldots, S_{\frac{n}{3}}$ and enumerate their elements as $S_i = \{s_1^i, \ldots, s_{\frac{n}{3}}^i\}$. 
    Note that $|\bigcup_i S_i| = \frac19 n^2$. For all $x \in X$ let $L(x) = \bigcup_i S_i$. 
    Let $Y = \{y_1, \ldots, y_{\frac{n}{3}}\}$ and $Z = \{z_1, \ldots, z_{\frac{n}{3}}\}$. We set $L(y_i) = S_i$. 
    Furthermore, we define $L(z_i) = \{s_i^j\colon 1 \leq j \leq \frac{n}{3}\}$ for all $i\in \{1,\ldots \frac n3\}$. 
    We give all vertices of $Y$ and $Z$ the same interval and all the vertices of $X$ get pairwise disjoint intervals that are contained in the intervals given to vertices in $Y$ and $Z$. 
    It is easy to see that both the intervals and the label sets of two vertices have a non-empty intersection if and only if the vertices belong to different partition sets. 
\end{proof}

\begin{theorem}\label{thm:si-bipartite}
    Let $G = (V, E)$ be a bipartite graph with a bipartition $V = X \dot\cup Y$. 
    Then $\si(G) \leq \min\{|X|,|Y|\}$. This bound is tight for complete bipartite graphs.
\end{theorem}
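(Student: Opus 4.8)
The statement has two parts: the general upper bound and its tightness for complete bipartite graphs.

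For the upper bound, I would assume without loss of generality that $|X|\le |Y|$ and write $X=\{x_1,\dots,x_s\}$ with $s=|X|$. The plan is to let the $s$ labels encode adjacency to $X$, while using the interval structure to keep the two sides independent. Concretely, I would set $L(x_i)=\{i\}$ and, for each $y\in Y$, let $L(y)=\{i : x_iy\in E(G)\}$ be the label set of its neighbours in $X$. For the intervals I would give every vertex of $X$ one and the same interval $I$ and place the $|Y|$ intervals of $Y$ pairwise disjointly inside $I$ (for instance, the unit intervals $(j-1,j)$ inside $I=(0,|Y|)$). Then vertices of $X$ have disjoint singleton labels and stay non-adjacent; vertices of $Y$ have disjoint intervals and stay non-adjacent; and each pair $x_i,y$ has intersecting intervals, so it becomes an edge exactly when $i\in L(y)$, i.e.\ exactly when $x_iy\in E(G)$. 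This yields an $s$-simultaneous interval representation, so $\si(G)\le |X|$, and by symmetry $\si(G)\le\min\{|X|,|Y|\}$.

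For tightness I must prove the matching lower bound $\si\ge\min\{s,t\}$ for the complete bipartite graph with parts of sizes $s=|X|\le t=|Y|$. Here I would reuse the geometric idea behind \cref{thm:3-partite}. The key claim is that in any simultaneous interval representation, either all intervals of $X$ share a common point or all intervals of $Y$ do. To see this, note that if two intervals of $X$ are disjoint, then since every $y\in Y$ is adjacent to both endpoints' vertices, each interval of $Y$ must cross the gap between them, forcing all intervals of $Y$ through a common point; otherwise the intervals of $X$ pairwise intersect and, by the Helly property of intervals, share a common point. Once one side, say $X$, passes through a common point $p$, any two vertices $x_i,x_{i'}$ have intersecting intervals (both contain $p$), so their non-adjacency forces $L(x_i)\cap L(x_{i'})=\emptyset$; since each $L(x_i)$ is non-empty, these $s$ label sets are pairwise disjoint and non-empty, requiring at least $s$ labels. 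In the other case the same argument applied to $Y$ gives at least $t\ge s$ labels. Either way $\si\ge s=\min\{s,t\}$, matching the upper bound.

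The upper-bound construction is routine once the encoding is chosen; I expect the main obstacle to be the dichotomy in the lower bound, namely arguing cleanly that one of the two sides must have a common intersection point. Some care is needed with open intervals (checking the boundary case where the two disjoint $X$-intervals merely abut, where the shared point is the common endpoint) and with the degenerate cases in which one part is empty, where the bound holds trivially because the graph is then edgeless and $\si=0$.
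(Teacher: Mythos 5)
Your proposal is correct and follows essentially the same route as the paper: the identical upper-bound construction (one common interval for $X$ with singleton labels, disjoint sub-intervals for $Y$ with neighbourhood-encoding label sets), and the same lower-bound mechanism for $K_{s,t}$ (two disjoint intervals on one side force all intervals on the other side through a common point, whereupon independence forces pairwise disjoint non-empty label sets). The only cosmetic difference is that you organize the lower bound as a Helly-based dichotomy while the paper runs it as a contradiction from $d<|X|$; the content is the same.
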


\begin{proof}
    We may assume without loss of generality that $|X| \leq |Y|$. 
    Let $X = \{x_1, \ldots, x_d\}$. 
    We build a $d$-simultaneous interval representation of $(R,L)$ of $G$ as follows: 
    All the vertices of $X$ get the same interval $(0,1)$. 
    All the vertices of $Y$ get pairwise disjoint intervals that are completely contained in the interval $(0,1)$.
    Vertex $x_i \in X$ gets label set $L(x_i) = \{i\}$. 
    The label set of a vertex $y \in Y$ is chosen so that it holds $i \in L(y)$ if and only if $x_iy \in E$. 
    It follows immediately that two vertices are adjacent in $G$ if and only if both their intervals and their label sets have a non-empty intersection.

    Now consider a complete bipartite graph $G$ with a bipartition $\{X,Y\}$ where $|X| \leq |Y|$ and assume for a contradiction that $G$ admits a $d$-simultaneous interval representation of $(R,L)$ of $G$ with $d < |X|$. 
    If $|X| \le 1$, then $d = 0$ and hence $G$ is edgeless, implying $|X| = 0$, which contradicts  $d < |X|$.
    Thus, $|X|\ge 2$ and every vertex of $X$ is labeled with a nonempty label set. 
    Since $X$ is an independent set and the maximum number of pairwise disjoint label sets used on $X$ is $d$, there must exist two vertices in $X$ whose intervals are disjoint. 
    However, this implies that the intervals of all vertices in $Y$ are pairwise intersecting. 
    Since there are fewer than $|Y|$ labels, at least one label appears in the label set of two vertices of $Y$, a contradiction.
\end{proof}

The \emph{complement of a matching} is a graph obtained from a complete graph of even order~$n$ by removing from it $\frac n2$ pairwise disjoint edges.

\begin{lemma}\label{lem:complement-matching-1}
    If $G$ is the complement of a matching with $n$ vertices, then $\si(G) \geq \log_2(n-1)$.
\end{lemma}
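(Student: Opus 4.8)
The plan is to fix an arbitrary $d$-simultaneous interval representation $(R,L)$ of $G$ and to prove that $2^{d}\ge n-1$. Write $M$ for the perfect matching whose complement is $G$, so that the only non-adjacent pairs of $G$ are the $\frac n2$ edges of $M$. Following \cref{def:labeling}, let $G'$ be the interval graph whose interval representation is $R$; since $E(G)\subseteq E(G')$, every non-edge of $G'$ is an edge of $M$. I would call a matched pair \emph{split} if its two intervals are disjoint (so the pair is a non-edge of $G'$) and \emph{tight} otherwise. For a tight pair $\{v,v'\}$ the intervals intersect but $vv'\notin E(G)$, so \cref{def:labeling} forces $L(v)\cap L(v')=\emptyset$.

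The key step is to bound the number of split pairs by exploiting that $G'$, being an interval graph, is chordal and hence contains no induced $C_4$. If two distinct matched pairs $\{a,a'\}$ and $\{b,b'\}$ were both split, then in $G'$ the pairs $aa'$ and $bb'$ are non-edges, while all four cross pairs $ab,ab',a'b,a'b'$ are edges (each is an edge of $G$, and $E(G)\subseteq E(G')$); thus the four distinct vertices $a,b,a',b'$ induce a $C_4$ in $G'$, a contradiction. Therefore at most one matched pair is split, so at least $\frac n2-1$ pairs are tight and at least $n-2$ vertices lie in tight pairs. This chordality observation is the only real obstacle; once at most one matched pair may have disjoint intervals, the remaining label-set counting is routine.

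It remains to count label sets. The case $n=2$ is trivial, as then $G$ is edgeless and $\log_2(n-1)=0$, so assume $n\ge 4$; then every vertex has a neighbour and hence a non-empty label set. I would first show that each vertex $v$ of a tight pair $\{v,v'\}$ has a label set shared by no other vertex: if $L(u)=L(v)$ for some $u\neq v$, then $u\neq v'$ (since $L(v)\cap L(v')=\emptyset$ while $L(v)\neq\emptyset$), so $u$ is adjacent to $v'$ in $G$ (the unique non-neighbour of $v'$ is $v$), giving $L(u)\cap L(v')\neq\emptyset$ and hence $L(v)\cap L(v')\neq\emptyset$, a contradiction. Consequently the at least $n-2$ tight-pair vertices carry pairwise distinct label sets, each occurring on no other vertex of $G$. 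If there is no split pair, all $n$ vertices are tight-pair vertices and we already have at least $n$ distinct label sets; if there is a split pair, then the non-empty label set of one of its endpoints differs, by the same uniqueness, from all tight-pair label sets, yielding at least $(n-2)+1=n-1$ distinct subsets of $\{1,\dots,d\}$. In either case $n-1\le 2^{d}$, and since $d$ may be taken to be $\si(G)$ this gives $\si(G)\ge \log_2(n-1)$.
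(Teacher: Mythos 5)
Your proof is correct and follows essentially the same strategy as the paper's: show that at most one matched pair can have disjoint intervals, deduce that the remaining vertices must carry pairwise distinct (indeed globally unique) label sets because each is forced to be label-disjoint from its matched partner, and conclude that $2^{\si(G)}\ge n-1$. The only cosmetic difference is that you obtain the ``at most one split pair'' claim from $C_4$-freeness of interval graphs, whereas the paper argues directly that once two intervals are disjoint, every other interval must span the gap between them, so all remaining intervals pairwise intersect.
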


\begin{proof}
      Consider an arbitrary simultaneous interval representation of $(R,L)$ of $G$. 
      If there exist two intervals $R(x)$ and $R(y)$ that do not intersect, then the vertices $x$ and $y$ are not adjacent. 
      Since all the other vertices must be adjacent to both $x$ and $y$ all their intervals have to pairwise intersect. 
      Furthermore, since all vertices have different neighborhoods, it follows immediately that except for $x$ and $y$ no pair of vertices can have the same label set. 
      Thus, we need at least $n-1$ different label sets. 
      This implies that the number of labels is at least $\log_2(n-1)$. 
\end{proof}

We will see later, in \cref{lemma:matching-labels}, that this bound is tight.

\section[Placing si(G) in the Zoo of Graph Width Parameters]{Placing $\si(G)$ in the Zoo of Graph Width Parameters}\label{sec:width-parameters}

In this section we compare the simultaneous interval number to several other graph width parameters. 
See \cref{fig:parameters} for an overview. 
A verification of the figure can be found in \cref{tab:my_label} on p.~\pageref{tab:my_label}.

\begin{figure}
\centering
\begin{tikzpicture}[xscale=2.5,yscale=1]
  \tikzset{box/.style={draw,rectangle}}
  \tikzset{roundedbox/.style={draw,rectangle,rounded corners}}
  \tikzset{markedbox/.style={draw,rectangle,very thick, fill=lipicsYellow}}
    \footnotesize
  \node[roundedbox] (pw) at (0.75,0.75) {\mathstrut pathwidth};
  \node[roundedbox] (ecc) at (2,0.75) {\mathstrut edge clique cover number};
  \node[roundedbox] (tw) at (1.25,2.25) {\mathstrut treewidth};
  \node[markedbox] (si) at (0,1.6) {\mathstrut simultaneous interval number};
  \node[roundedbox] (palpha) at (-1.4,2.8) {\mathstrut path-independence number};
  \node[roundedbox] (talpha) at (-1.4,3.75) {\mathstrut tree-independence number};
  \node[roundedbox] (cw) at (1.85,3) {\mathstrut cliquewidth};
  \node[roundedbox] (rw) at (2.85,3) {\mathstrut rank-width};
  \node[roundedbox] (twinw) at (1.85,3.85) {\mathstrut twin-width};
  \node[roundedbox] (lmimw) at (0,3.8) {\mathstrut linear mim-width};
  \node[roundedbox] (mimw) at (0,4.75) {\mathstrut mim-width};
  \node[roundedbox] (omimw) at (0,5.75) {\mathstrut o-mim-width};
  \node[roundedbox] (simw) at (0,6.75) {\mathstrut sim-width};
  \node[roundedbox] (tn) at (1.55,5.75) {\mathstrut track number};
  \node[roundedbox] (in) at (2.7,5.75) {\mathstrut interval number};
  \node[roundedbox] (thin) at (0,2.5) {\mathstrut thinness};
  \node[roundedbox] (box) at (0.9,4.75) {\mathstrut boxicity};
 
  \draw[thick,-stealth',font=\sffamily] (pw.20) to (tw);
  \draw[thick,-stealth',font=\sffamily] (pw) to (si.-10);
  \draw[thick,-stealth',font=\sffamily] (ecc) to (si.-5);
  \draw[thick,-stealth',font=\sffamily] (ecc.143) to (cw);
  \draw[thick,-stealth',font=\sffamily] (si.170) to (palpha);
  \draw[thick,-stealth',font=\sffamily] (si) to (thin);
  \draw[thick,-stealth',font=\sffamily] (thin) to (lmimw);
  \draw[thick,-stealth',font=\sffamily] (si.10) to (tn);
      \draw[thick,stealth'-stealth',font=\sffamily] (tn) to (in);
  \draw[thick,-stealth',font=\sffamily] (lmimw) to (mimw);
  \draw[thick,-stealth',font=\sffamily] (palpha) to (talpha);
  \draw[thick,-stealth',font=\sffamily] (talpha) to (omimw);
  \draw[thick,-stealth',font=\sffamily] (tw) to (cw);
  \draw[thick,-stealth',font=\sffamily] (tw) to (talpha.-7);
  \draw[thick,-stealth',font=\sffamily] (tw) to (tn.-60);
  \draw[thick,stealth'-stealth',font=\sffamily] (cw) to (rw);
  \draw[thick,-stealth',font=\sffamily] (cw) to (twinw);
  \draw[thick,-stealth',font=\sffamily] (cw) to (mimw);
  \draw[thick,-stealth',font=\sffamily] (mimw) to (omimw);
  \draw[thick,-stealth',font=\sffamily] (omimw) to (simw);
  \draw[thick,-stealth',font=\sffamily] (thin.25) to (box.-95);
  \draw[thick,-stealth',font=\sffamily] (tw) to (box.-60);

\end{tikzpicture}
\caption{Diagram illustrating the relations between different graph width parameters. A directed edge from parameter $P$ to parameter $Q$ means that a bounded value of $P$ implies a bounded value for $Q$. If a directed path from $P$ to $Q$ is missing, then parameter $Q$ is unbounded for the graphs of bounded $P$.}\label{fig:parameters}
\end{figure}
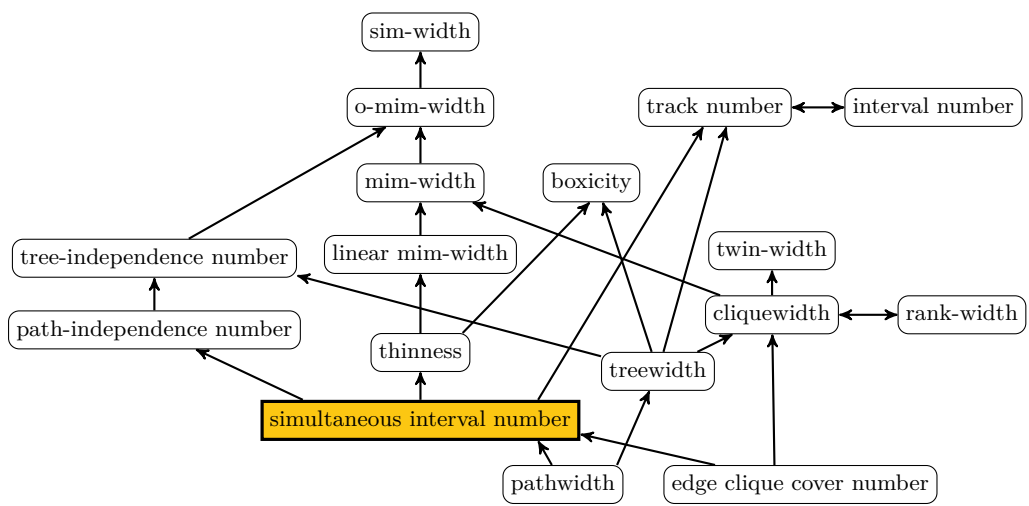


\subsection{Lower Bounds}

It is easy to see that $d$-simultaneous interval graphs are $d$-track interval graphs. 
This implies the following result.

\begin{theorem}\label{thm:track}
    Every graph satisfies $\track(G) \leq \si(G)$.
\end{theorem}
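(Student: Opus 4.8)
The plan is to show that any $d$-simultaneous interval representation of $G$ can be converted into a $d$-track interval representation of $G$, which immediately yields $\track(G) \leq \si(G)$. Let $d = \si(G)$ and fix a $d$-simultaneous interval representation $(R,L)$ of $G$, where $R$ assigns to each vertex $v$ an interval $R(v)$ on the real line and $L(v) \subseteq \{1,\dots,d\}$ is its label set. Recall that by \Cref{def:labeling}, two vertices $v,w$ are adjacent in $G$ if and only if $R(v) \cap R(w) \neq \emptyset$ \emph{and} $L(v) \cap L(w) \neq \emptyset$.

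The key construction is to use the $d$ labels as the $d$ tracks. For each track $i \in \{1,\dots,d\}$ and each vertex $v$, I would place an interval on track $i$ as follows: if $i \in L(v)$, assign to $v$ the interval $R(v)$ on track $i$; if $i \notin L(v)$, assign to $v$ a tiny ``private'' interval on track $i$ disjoint from every other interval on that track (for instance, a point-like interval placed in a region of track $i$ used by no other vertex, with a distinct such region reserved for each vertex lacking label $i$). This gives a collection of $d$ intervals per vertex, one on each of the $d$ tracks, so it is a candidate $d$-track interval representation.

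The correctness check is the main thing to verify, and it is where the logic of the equivalence lives. Two vertices $v,w$ are adjacent in the track representation if and only if their intervals intersect on \emph{some} track $i$. By the private-interval trick, an intersection on track $i$ can only occur when both $v$ and $w$ carry their ``real'' interval $R(\cdot)$ on that track, i.e.\ when $i \in L(v) \cap L(w)$, and in that case the intervals coincide with $R(v)$ and $R(w)$ and hence intersect exactly when $R(v) \cap R(w) \neq \emptyset$. Therefore an intersection occurs on some track if and only if there is a label $i \in L(v) \cap L(w)$ with $R(v) \cap R(w) \neq \emptyset$, which is precisely the adjacency condition of \Cref{def:labeling}. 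Thus the track representation realizes exactly $E(G)$, so $G$ is a $d$-track interval graph and $\track(G) \leq d = \si(G)$.

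I expect no serious obstacle here; the only point requiring care is ensuring the private intervals are genuinely pairwise disjoint from all other intervals on their track (both the real ones $R(\cdot)$ and the other private ones), so that they never create spurious adjacencies. This is easily arranged by translating each private interval far enough to the right (or into disjoint reserved slots), using that we have a finite graph and finitely many intervals. A subtle but harmless case is a vertex $v$ with $L(v) = \emptyset$, which is isolated in $G$; it receives only private intervals on every track and so remains isolated in the track representation, as required.
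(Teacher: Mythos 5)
Your proof is correct and is precisely the argument the paper has in mind: the paper simply asserts that ``it is easy to see that $d$-simultaneous interval graphs are $d$-track interval graphs,'' and your label-to-track conversion with private filler intervals for vertices missing a label is the standard way to make that observation precise. No issues.
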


The concept of \emph{thinness} was introduced by Mannino et al.~\cite{mannino2007stable}.

\begin{definition}[Thinness]
The \emph{thinness} $\thin(G)$ of a graph $G$ is the smallest integer $k$ such that there is a partition $\{V_1, \dots, V_k\}$ of $V(G)$ and a vertex ordering $(v_1, \dots, v_n)$ of $G$ fulfilling that for any three vertices $v_a, v_b, v_c$ with $a < b < c$ and $v_a, v_b \in V_i$ for some $i$ it holds that $v_bv_c \in E(G)$ if $v_av_c \in E(G)$.
\end{definition}

\begin{theorem}\label{thm:thin}
For any graph $G$ it holds that $\thin(G) \leq 2^{\si(G)}$.
\end{theorem}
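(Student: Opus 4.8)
The plan is to start from an optimal $d$-simultaneous interval representation $(R,L)$ of $G$, where $d = \si(G)$, and to extract from it both a vertex ordering and a partition of $V(G)$ into at most $2^d$ classes that together witness $\thin(G) \le 2^d$. For the ordering I would sort the vertices by the right endpoints $r(v)$ of their intervals, breaking ties arbitrarily, and call the resulting sequence $(v_1,\dots,v_n)$. For the partition I would place two vertices in the same class exactly when they receive the same label set, i.e.\ $u$ and $w$ are in the same class iff $L(u) = L(w)$. Since $L$ takes values in $\cP(\{1,\dots,d\})$, this produces at most $2^d$ classes.

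It then remains to verify the defining inequality of thinness for this ordering and partition: for all $a < b < c$ with $v_a, v_b$ in the same class, $v_a v_c \in E(G)$ must imply $v_b v_c \in E(G)$. By \cref{def:labeling}, adjacency means simultaneous intersection of the intervals \emph{and} of the label sets, so I would treat the two requirements separately. The label requirement is immediate: since $v_a$ and $v_b$ lie in the same class we have $L(v_a) = L(v_b)$, hence $L(v_a) \cap L(v_c) = L(v_b) \cap L(v_c)$, so whenever $v_a v_c$ meets the label-intersection condition, so does $v_b v_c$.

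The interval requirement is the classical fact that interval graphs have thinness one, and it is exactly where the right-endpoint ordering pays off. Because the ordering is by right endpoints and $a < b < c$, we have $r(v_a) \le r(v_b) \le r(v_c)$. From $R(v_a) \cap R(v_c) \ne \emptyset$ (with open intervals) we obtain $\ell(v_c) < r(v_a)$, and combining with $r(v_a) \le r(v_b)$ gives $\ell(v_c) < r(v_b)$; on the other side, $\ell(v_b) < r(v_b) \le r(v_c)$ gives $\ell(v_b) < r(v_c)$, so $R(v_b) \cap R(v_c) \ne \emptyset$. Conjoining the interval part and the label part yields $v_b v_c \in E(G)$, which establishes the thinness condition and hence the bound $\thin(G) \le 2^{\si(G)}$.

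I do not expect a genuine obstacle here; the crux is conceptual rather than computational, namely realizing that the two intersection constraints decouple cleanly across the partition. Once one orders by right endpoints (so that interval-intersection monotonicity holds) and partitions by label set (so that the label-intersection constraint is literally identical for $v_a v_c$ and $v_b v_c$), the verification is routine. The only mildly delicate points worth mentioning are that vertices with empty label set are isolated and so satisfy the condition vacuously while occupying a single class, and that the arbitrary tie-breaking in the ordering is harmless because the interval argument uses only the weak inequalities $r(v_a) \le r(v_b) \le r(v_c)$.
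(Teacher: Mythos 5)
Your proposal is correct and follows essentially the same route as the paper's proof: order the vertices by right endpoints, partition them by label set into at most $2^{\si(G)}$ classes, and verify the thinness condition by decoupling the interval-intersection and label-intersection requirements. Your write-up is, if anything, slightly more explicit about the interval-intersection check than the paper's.
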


\begin{proof}
    Consider a $\si(G)$-simultaneous interval representation $(R,L)$ of $G$. 
    We define a partition of $V(G)$ as follows. 
    Two vertices $v_i, v_j$ are in the same partition set if and only if $L(v_i) = L(v_j)$. 
    Hence, there are at most $2^{\si(G)}$ sets in the partition.
    Let $\sigma = (v_1, \ldots, v_n)$ be a vertex ordering of $G$ such that for all $i,j \in \{1,\dots,n\}$ with $i < j$ it holds that $r(v_i) \leq r(v_j)$. 
    Let $v_a, v_b, v_c$ be three vertices with $a < b < c$ where $L(v_a) = L(v_b)$. 
    If $v_a$ is adjacent to $v_c$, then $r(v_a) > \ell(v_c)$ and $L(v_a) \cap L(v_c) \neq \emptyset$. 
    By construction of $\sigma$, it holds $\ell(v_c) < r(v_a) \leq r(v_b) \leq r(v_c)$ and, thus, the intervals of $v_b$ and $v_c$ intersect. 
    Furthermore, $L(v_b) \cap L(v_c) \neq \emptyset$ since $L(v_a) = L(v_c)$. 
    Hence $v_bv_c \in E(G)$.
\end{proof}
    
Complements of matchings with $n$ edges have thinness $n$~\cite{chandran2007independent}. 
We will later see in \cref{lemma:matching-labels} that the simultaneous interval number of such a graph is $\O(\log n)$. 
This implies that the bound given in \cref{thm:thin} is asymptotically sharp.  
Bipartite permutation graphs and, hence, also complete bipartite graphs have thinness at most~2~\cite{bonomo2023intersection}. 
As we have seen in \cref{thm:si-bipartite}, the simultaneous interval number of complete bipartite graphs is unbounded. 
Therefore, this class shows that bounded thinness does not imply bounded simultaneous interval number.

The concept of a linearized version of mim-width was introduced by Vatshelle~\cite{vatshelle2012new} as mim-width using a caterpillar decomposition. 
This concept has since been called \emph{linear mim-width} (for example by Golovach et al.~\cite{golovach2015output}).

\begin{sloppypar}
\begin{definition}[Linear mim-width]
    Given a graph $G$ and a vertex ordering $\sigma = (v_1, \dots, v_n)$ of $G$, we define the quantity $\lmim(G, \sigma, i)$ for $1 \leq i \leq n$ to be the maximum size of an induced matching in the bipartite graph that contains all the edges of $G$ between the two sets $\{v_1, \dots, v_i\}$ and $\{v_{i+1}, \dots, v_n\}$. 
    We define $\lmim(G, \sigma) := \max_{i \in \{1,\dots,n\}} \lmim(G, \sigma, i)$. 
    The \emph{linear mim-width} of $G$, denoted $\lmim(G)$, is defined as the minimum value $\lmim(G, \sigma)$ among all vertex orderings $\sigma$ of $G$.
\end{definition}
\end{sloppypar}

It was shown by Bonomo and de Estrada~\cite{bonomo2019thinness} that for any graph $G$ it holds that $\lmim(G) \leq \thin(G)$. 
Combining this with \cref{thm:thin} we see that bounded simultaneous interval number also implies bounded linear mim-width. 
Moreover, using a more direct argumentation, the lower bound on the simultaneous interval number given by the logarithm of the linear mim-width can be improved to a linear lower bound.

\begin{theorem}\label{thm:lmim}
    For any graph $G$ it holds that $\lmim(G)\le \si(G)$.
\end{theorem}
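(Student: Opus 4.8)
The plan is to reuse the vertex ordering from the proof of \cref{thm:thin}, but instead of tracking each vertex's whole label set, I would track a single \emph{witness label} for every edge of an induced matching. Starting from a $\si(G)$-simultaneous interval representation $(R,L)$ with $d = \si(G)$ labels, I would fix the ordering $\sigma = (v_1, \dots, v_n)$ with $r(v_1) \le \cdots \le r(v_n)$ and aim to show that $\lmim(G,\sigma,i) \le d$ for every cut position $i$. Taking the minimum over all orderings then gives $\lmim(G) \le \lmim(G,\sigma) \le d = \si(G)$, which is the claim.

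Fix $i$, set $A = \{v_1, \dots, v_i\}$ and $B = \{v_{i+1}, \dots, v_n\}$, and let $M = \{a_1 b_1, \dots, a_k b_k\}$ be an induced matching in the bipartite graph of crossing edges, with $a_j \in A$ and $b_j \in B$. Since each $a_j b_j$ is an edge of $G$, its intervals and label sets both meet, so I can choose a label $\lambda_j \in L(a_j) \cap L(b_j)$. The whole argument then reduces to a single claim: the labels $\lambda_1, \dots, \lambda_k$ are pairwise distinct. Once this is established, $k \le d$ is immediate because there are only $d$ labels in total.

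The crux is proving distinctness, and this is where the interval geometry and the choice of ordering come together. I would argue by contradiction: suppose $\lambda_j = \lambda_{j'}$ for some $j \neq j'$. Then this common label lies in both $L(a_j)\cap L(b_{j'})$ and $L(a_{j'})\cap L(b_j)$, so the label sets of $a_j, b_{j'}$ and of $a_{j'}, b_j$ intersect. Because $M$ is an \emph{induced} matching, the crossing pairs $a_j b_{j'}$ and $a_{j'} b_j$ are non-edges of $G$; as their label sets already meet, their intervals must be disjoint. For $a \in A$ and $b \in B$ the ordering guarantees $r(a) \le r(b)$, so disjointness of $R(a_j)$ and $R(b_{j'})$ forces $r(a_j) \le \ell(b_{j'})$ (the alternative $r(b_{j'}) \le \ell(a_j)$ is impossible, since it would give $r(a_j) \le r(b_{j'}) \le \ell(a_j) < r(a_j)$), and symmetrically $r(a_{j'}) \le \ell(b_j)$. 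On the other hand, the edges $a_j b_j$ and $a_{j'} b_{j'}$ give $\ell(b_j) < r(a_j)$ and $\ell(b_{j'}) < r(a_{j'})$. Chaining these four inequalities yields $r(a_{j'}) \le \ell(b_j) < r(a_j) \le \ell(b_{j'}) < r(a_{j'})$, a contradiction.

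I expect the only genuine subtlety to be the bookkeeping of open-interval intersection, i.e.\ keeping the strict-versus-nonstrict inequalities straight, and making sure the ordering by right endpoints is really used when ruling out the wrong disjointness alternative; everything else is routine. Ties among the right endpoints need no special treatment, since any refinement of $r(v_1) \le \cdots \le r(v_n)$ still satisfies $r(a) \le r(b)$ whenever $a \in A$ and $b \in B$, which is all the argument requires.
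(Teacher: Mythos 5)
Your proof is correct and follows essentially the same route as the paper's: the same ordering by right endpoints, and the same use of the induced-matching property to force label disjointness between endpoints of distinct matching edges via the interval geometry. The only cosmetic difference is that you pick a single witness label per matching edge and derive a contradiction from a four-term inequality chain, whereas the paper assigns to each edge the whole set $L(v_a)\cap L(v_b)$ and shows these sets are pairwise disjoint directly by exhibiting one cross pair whose intervals must intersect.
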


\begin{proof}
    Consider a $\si(G)$-simultaneous interval representation $(R,L)$ of $G$. 
    Let $(v_1, \ldots, v_n)$ be a vertex ordering of $G$ such that for all $i,j \in \{1,\dots,n\}$ with $i < j$ it holds that $r(v_i) \leq r(v_j)$. Let $i \in \{1,\dots,n\}$ be arbitrary and consider the spanning bipartite subgraph $B$ of $G$ containing the edges of $G$ between the vertex sets $\{v_1, \dots, v_i\}$ and $\{v_{i+1}, \dots, v_n\}$. Let $M$ be a maximum induced matching of $B$. For every edge $v_av_b$ in $M$ we define $\lambda(v_av_b) := L(v_a) \cap L(v_b)$. Note that $\lambda(v_av_b) \neq \emptyset$. 
    Consider two edges $v_av_b$ and $v_cv_d$ in $M$. 
    We may assume without loss of generality that $a \leq c \leq i$. 
    We know that $\ell(v_b) < r(v_a) \leq r(v_c)\leq r(v_b)$ and, thus, the intervals of $v_b$ and $v_c$ have a non-empty intersection. 
    As $M$ is induced, the edge $v_bv_c$ is not present in the graph $B$ and, thus, not part of $G$. 
    Therefore, $L(v_b) \cap L(v_c) = \emptyset$. 
    This also implies that $\lambda(v_av_b) \cap \lambda(v_cv_d) = \emptyset$. 
    Hence, the $\lambda$-sets of all edges in $M$ are pairwise disjoint. 
    As there are at most $\si(G)$ many non-empty pairwise disjoint label sets, we know that $|M| \leq \si(G)$ and, thus, $\lmim(G) \leq \si(G)$.    
\end{proof}


A \emph{tree decomposition} of a graph $G$ is a pair $(T,\{X_t\}_{t\in V(T)})$ consisting of a tree $T$ and a mapping asigning to each node $t\in V(T)$ a set $X_t\subseteq V(G)$ (called a \emph{bag}) such that the following conditions are satisfied: (i)~the union of all the bags equals $V(G)$, (ii)~for every edge $uv\in E(G)$ there exists a bag $X_t$ such that $u,v\in X_t$, and (iii)~for every vertex $v\in V(G)$ the bags containing $v$ form a subtree of $T$.
A \emph{path decomposition} of $G$ is a tree decomposition of $G$ such that $T$ is a path.
For simplicity, we will denote a path decomposition simply by the corresponding sequence $\cP = (X_1,\dots,X_k)$ of bags. 
Note also that in this case, condition (iii) simplifies to: for every vertex $v\in V(G)$ the bags containing $v$ form a consecutive subsequence of \cP.
The \emph{width} of a tree decomposition  is the maximal size of its bags minus 1. 
The \emph{treewidth} of a graph $G$, denoted by $\tw(G)$, is the minimum width of a tree decomposition of $G$.
The \emph{pathwidth}, denoted by $\pw(G)$, is defined analogously, with respect to path decompositions.
Yolov~\cite{zbMATH06850324} and independently Dallard et al.~\cite{zbMATH07796423} introduced the parameter called tree-independence number (or \emph{$\alpha$-treewidth}).
The \emph{independence number} of a tree decomposition $(T,\{X_t\}_{t\in V(T)})$ of a graph $G$ is defined as the maximum cardinality of an independent set $I$ in $G$ such that there exists a bag $X_t$ with $I\subseteq X_t$, or, equivalently, the maximum, over all bags $X_t$, of the independence number of the subgraph of $G$ induced by the bag $X_t$. 
The \emph{tree-independence number} of a graph $G$, denoted by $\tin(G)$, is defined as the minimum independence number of a tree decomposition of $G$.
We define the \emph{path-independence number}, denoted by $\pin(G)$, analogously, with respect to path decompositions.

We now prove a characterization of the path-independence number, which relies on the concept of the \emph{intersection} of two graphs $G_1 = (V_1,E_1)$ and $G_2 = (V_2,E_2)$, denoted by $G_1\cap G_2$ and defined as the graph $(V_1\cap V_2,E_1\cap E_2)$.
In the proof we will use the following two known facts about path decompositions and interval graphs (see~\cite{zbMATH01238685,zbMATH03214396}):
\begin{itemize}
\item Let $G$ be a graph, let $\cP$ be a path decomposition of $G$, and let $S\subseteq V(G)$ be a set of vertices of $G$ such that for every two vertices $u,v\in S$ there exists a bag $X_i$ of $P$ such that $u,v\in X_i$. 
Then there exists a bag $X_j$ of $\cP$ such that $S\subseteq X_j$.
\item A graph $G$ is an interval graph if and only if it admits a path decomposition in which each bag is a clique in $G$.
\end{itemize}

\begin{theorem}\label{tree-alpha-characterization}
Let $G$ be a graph.
Then, the path-independence number of $G$ equals the minimum integer $k\ge 0$ such that $G$ is the intersection of an interval graph and a graph with independence number at most $k$. 
\end{theorem}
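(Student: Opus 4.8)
The plan is to prove both inequalities by passing back and forth between path decompositions of $G$ and pairs $(H,F)$ in which $H$ is an interval graph and $F$ has bounded independence number, using the two quoted facts about path decompositions and interval graphs. Throughout I would assume without loss of generality that $V(H)=V(F)=V(G)$, so that $G=H\cap F$ means precisely $E(G)=E(H)\cap E(F)$; for the direction that starts from a given pair, this is justified by restricting to the induced subgraphs on $V(G)$, which keeps $H$ an interval graph and can only lower $\alpha(F)$.

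For the inequality $\pin(G)\ge \min k$, I would start from an optimal path decomposition $\cP=(X_1,\dots,X_m)$ of independence number $k=\pin(G)$ and manufacture the two graphs. Let $H$ be the graph on $V(G)$ in which $u$ and $v$ are adjacent exactly when some bag contains both of them. Then $\cP$ is a path decomposition of $H$ in which every bag is a clique, so $H$ is an interval graph by the second quoted fact, and $E(G)\subseteq E(H)$ since every edge of $G$ lies in a bag. For the second graph I would take $F$ with $E(F):=E(G)\cup\bigl(\binom{V(G)}{2}\setminus E(H)\bigr)$, i.e.\ I keep the edges of $G$ and additionally make every non-edge of $H$ an edge; a one-line computation then gives $E(H)\cap E(F)=E(G)$. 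The crux is bounding $\alpha(F)$: an independent set $I$ of $F$ is precisely a set all of whose pairs are edges of $H$ but non-edges of $G$, so $I$ is simultaneously a clique of $H$ and an independent set of $G$. Since all pairs of $I$ are edges of $H$, every pair appears together in some bag of $\cP$, and the Helly-type fact (the first quoted fact, applied to the path decomposition $\cP$ of $H$) forces all of $I$ into a single bag. Thus $I$ is an independent set of $G$ inside one bag, whence $|I|\le k$, giving $\alpha(F)\le k$ and the desired bound.

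For the reverse inequality $\pin(G)\le \min k$, I would take any witnessing pair $G=H\cap F$ with $H$ interval and $\alpha(F)\le k$, invoke the second fact to obtain a path decomposition $\cP$ of $H$ whose bags are cliques of $H$, and check that $\cP$ is also a path decomposition of $G$ (conditions (i) and (iii) are immediate since the vertex sets agree, and (ii) holds because $E(G)\subseteq E(H)$). To bound its independence number, consider an independent set $I$ of $G$ inside a bag $X_j$; since $X_j$ is a clique of $H$, every pair of $I$ is an edge of $H$, and as no pair of $I$ is an edge of $G=H\cap F$, no pair can be an edge of $F$. Hence $I$ is independent in $F$, so $|I|\le \alpha(F)\le k$, and the independence number of $\cP$ is at most $k$.

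The two directions together establish the equality. The main obstacle, and the step I would treat most carefully, is the independence-number transfer in the first direction: getting the definition of $F$ right so that its independent sets correspond exactly to the sets that are cliques of $H$ yet independent in $G$, and then correctly invoking the Helly property to confine such a set to a single bag. Everything else reduces to routine verification of the path-decomposition axioms and a short computation on edge sets.
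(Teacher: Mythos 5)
Your proof is correct and follows essentially the same route as the paper's: both directions use the same two constructions (the interval graph obtained by filling each bag into a clique, and the auxiliary graph whose independent sets are exactly the $G$-independent sets confined to a single bag via the Helly property), and the same two quoted facts. Your $F$, defined by $E(F)=E(G)\cup\bigl(\binom{V(G)}{2}\setminus E(H)\bigr)$, is precisely the paper's $G_2$ (namely $G$ plus all pairs not sharing a bag), so the arguments coincide step for step.
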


\begin{proof}
Let use denote by $k$ the minimum nonnegative integer such that $G$ is the intersection of an interval graph and a graph with independence number at most~$k$.

We first show that $\pin(G)\le k$.
Let $G$ be the intersection of an interval graph $G_1$ and a graph $G_2$ with independence number at most $k$.
Each of the properties of being an interval graph and having independence number at most $k$ is preserved under vertex deletion, hence we may assume that $V(G_1) = V(G_2) = V(G)$.
Since $G_1$ is an interval graph, it admits a path decomposition $\cP$ such that each bag is a clique in $G_1$.
Note that $G$ is a spanning subgraph of $G_1$ and therefore $\cP$ is also a path decomposition of $G$.
To show that $\pin(G)\le k$, it suffices to show that for every bag, the subgraph of $G$ induced by the bag has independence number at most $k$.
Consider an arbitrary bag $X_i$ of $\cP$ and the subgraph of $G$ induced by the bag.
Note that this graph is the intersection of the subgraphs of $G_1$ and $G_2$ induced by $X_i$.
Since each bag is a clique in $G_1$, the graph $G_1[X_i]$ is complete and hence $G[X_i]$ equals the subgraph of $G_2$ induced by $X_i$.
This implies that $\alpha(G[X_i])\le \alpha(G_2)\le k$, which is what we wanted to show.

Next, we show that $k\le \pin(G)$.
Let $p = \pin(G)$ and let $\cP$ be a path decomposition of $G$ with independence number $p$.
We show that $G$ is the intersection of an interval graph and a graph with independence number at most~$p$; note that this will imply $k\le p$.
Let $G_1$ be the graph obtained from $G$ by adding edges so that each bag of the path decomposition $\cP$ becomes a clique and let $G_2$ be the graph obtained from $G$ by adding edges between any two vertices that are not contained in the same bag in $\cP$.
We claim that $G = G_1\cap G_2$, $G_1$ is an interval graph, and $G_2$ has independence number at most $p$.

First, observe that $\P$ is a path decomposition of $G_1$ in which each bag is a clique and, hence, by the first property above, $G_1$ is an interval graph.

To show that $\alpha(G_2)\le p$, consider an arbitrary independent set $I$ in $G_2$. 
Then $I$ is also an independent set in $G$ and, moreover, for any two vertices $u,v\in I$ there exists a bag $X_i$ of $\cP$ such that $u$ and $v$ both belong to that bag.
By the second property above, there exists a bag $X_j$ of $\cP$ such that $I\subseteq X_j$. 
Thus, since $\cP$ is a path decomposition of $G$ with independence number $p$, we infer that $|I|\le p$.
Since this holds for an arbitrary independent set of $G_2$, the independence number of $G_2$ is at most $p$.

It remains to verify that $G$ is the intersection of $G_1$ and $G_2$.
By construction, $G$ is a spanning subgraph of both $G_1$ and $G_2$.
Thus, $E(G)\subseteq E(G_1)\cap E(G_2)$.
Consider now two vertices $u$ and $v$ of $G$ that are adjacent in $G_1$.
Then, there exists a bag containing both $u$ and $v$. 
This implies that $u$ and $v$ are adjacent in $G_2$ if and only if they are adjacent in $G$. 
Hence, $G = G_1\cap G_2$.
\end{proof}

With a similar approach as that used to prove \cref{tree-alpha-characterization}, it can be proved that the tree-independence number of a graph $G$ equals the minimum integer $k\ge 0$ such that $G$ is the intersection of a chordal graph and a graph with independence number at most $k$.

\Cref{tree-alpha-characterization} has the following consequence.

\begin{corollary}\label{corol:si-pin}
Every graph $G$ satisfies $\pin(G) \leq \si(G)$.
\end{corollary}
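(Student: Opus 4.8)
The plan is to read the statement off the characterization in \Cref{tree-alpha-characterization}: since $\pin(G)$ equals the least integer $k\ge 0$ for which $G$ can be written as the intersection of an interval graph and a graph of independence number at most $k$, it suffices to exhibit such a decomposition with $k=\si(G)$. Concretely, I would fix a $\si(G)$-simultaneous interval representation $(R,L)$ of $G$, write $d=\si(G)$, and split it into the two graphs it naturally induces on $V(G)$: the interval graph $G_1$ in which $uv\in E(G_1)$ iff $R(u)\cap R(v)\neq\emptyset$, and the \emph{label graph} $G_2$ in which $uv\in E(G_2)$ iff $L(u)\cap L(v)\neq\emptyset$. By \Cref{def:labeling}, $uv\in E(G)$ holds exactly when both $R(u)\cap R(v)\neq\emptyset$ and $L(u)\cap L(v)\neq\emptyset$, which is precisely the assertion $G=G_1\cap G_2$. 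Since $G_1$ is an interval graph, the whole argument reduces to bounding $\alpha(G_2)$ by $d$.

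The main obstacle is that empty label sets can wreck this bound. Every vertex $v$ with $L(v)=\emptyset$ is isolated in $G_2$, so a graph $G$ carrying many isolated vertices via empty labels would have $\alpha(G_2)$ arbitrarily large even though $\si(G)$ is small. I would remove this difficulty by normalizing the representation before extracting $G_2$. Assuming $d\ge 1$, any vertex $v$ with $L(v)=\emptyset$ is isolated in $G$, so I can reassign to it a fresh interval disjoint from all the others together with the label set $\{1\}$. This leaves $v$ isolated in the new interval graph (hence isolated in the intersection, matching $G$), alters no other adjacency, and keeps the number of labels at $d$; thus $(R,L)$ can be taken with every label set nonempty without changing the represented graph. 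The routine verification that $G=G_1\cap G_2$ still holds after this surgery is exactly as in the previous paragraph.

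After normalization, an independent set of $G_2$ is a set of vertices whose label sets are pairwise disjoint and nonempty; since these are pairwise disjoint nonempty subsets of $\{1,\dots,d\}$, their number is at most $d$. Hence $\alpha(G_2)\le d=\si(G)$, and feeding $G=G_1\cap G_2$ into \Cref{tree-alpha-characterization} gives $\pin(G)\le\si(G)$. I would dispatch the degenerate case $\si(G)=0$ (that is, $G$ edgeless) separately, since then there are no labels to normalize; this boundary case is where one must be slightly careful, whereas the genuinely delicate step in the general argument is the normalization that forces all label sets to be nonempty while preserving both $d$ and the represented graph.
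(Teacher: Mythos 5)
Your proof is correct and follows essentially the same route as the paper: you form the interval graph $G_1$ from $R$ and the label-intersection graph $G_2$ from $L$, observe $G = G_1 \cap G_2$ with $\alpha(G_2) \le \si(G)$, and invoke \cref{tree-alpha-characterization}. The only difference is that you explicitly normalize away empty label sets before bounding $\alpha(G_2)$ -- a point the paper's proof passes over silently (it simply asserts that independent sets in $G_2$ correspond to families of pairwise disjoint \emph{nonempty} label sets) -- and this added care is a small improvement rather than a departure.
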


\begin{proof}
Let $(R,L)$ be a $\si(G)$-simultaneous interval representation of $G$. Let $G_1$ be the interval graph with vertex set $V(G)$ and edge set $\{uv\colon u,v \in V(G), u\neq v,  R(u) \cap R(v) \neq \emptyset\}$. 
Furthermore, let $G_2$ be the graph with vertex set $V(G)$ in which two vertices $u$ and $v$ are adjacent if and only if $L(u) \cap L(v) \neq \emptyset$.
Then $G = G_1\cap G_2$, the graph $G_1$ is an interval graph and since any independent set in $G_2$ corresponds to a family of pairwise disjoint nonempty subsets of the label set $\{1,\ldots, \si(G)\}$, the graph $G_2$ has independence number at most $\si(G)$.
Therefore, $\pin(G)\le \si(G)$ by \cref{tree-alpha-characterization}.
\end{proof}



Note that complements of matchings have independence number 2 and, thus, also path-independence number at most 2. Due to \cref{lem:complement-matching-1}, they form a class of graphs with bounded path independence number but unbounded simultaneous interval number.

In the spirit of Dallard et al.~\cite{dallard2021treewidth}, we can also show that graphs with bounded simultaneous interval number are $(\pw, \omega)$-bounded (and consequently $(\tw, \omega)$-bounded; note that Chaplick et al.~\cite{chaplick2021topological} refer to the same property as the \emph{clique-treewidth property}). 
A graph class $\mathcal{G}$ is said to be \emph{$(\pw, \omega)$-bounded} (resp., \emph{$(\tw, \omega)$-bounded}) if there is a function $f$ such that for all graphs $G\in \mathcal{G}$ and all induced subgraphs $G'$ of $G$, it holds that $\pw(G') \leq f(\omega(G'))$ (resp., $\tw(G') \leq f(\omega(G'))$), where $\omega(G')$ is the clique number of $G'$.

\begin{theorem}\label{thm:binding}
    Every graph $G$ satisfies $\pw(G) \leq \si(G) \omega(G)-1$.
\end{theorem}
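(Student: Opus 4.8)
The plan is to reuse the decomposition $G = G_1\cap G_2$ from the proof of \cref{corol:si-pin}. Fix a $\si(G)$-simultaneous interval representation $(R,L)$, set $d=\si(G)$, and let $G_1$ be the interval graph on $V(G)$ in which $uv$ is an edge if and only if $R(u)\cap R(v)\neq\emptyset$. Since $G_1$ is an interval graph, by the second of the two stated facts about interval graphs it admits a path decomposition $\cP$ in which every bag is a clique of $G_1$. Because $G$ is a spanning subgraph of $G_1$ (every edge of $G$ is an edge of $G_1$), the decomposition $\cP$ is also a path decomposition of $G$. Hence $\pw(G)$ is at most the width of $\cP$, and it suffices to bound the size of each bag by $\si(G)\,\omega(G)$; this is the entire content of the theorem, the ``$-1$'' coming from the convention that width equals maximum bag size minus one. (I will treat the meaningful case $\si(G)\ge 1$, i.e.\ $G$ has an edge; when $\si(G)=0$ the graph is edgeless and the statement degenerates.)

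The heart of the argument is the following bound on any clique $X$ of $G_1$. Such an $X$ is a set of pairwise intersecting intervals, so within $X$ the interval intersection condition holds automatically, and two vertices of $X$ are adjacent in $G$ exactly when their label sets intersect. For each label $i\in\{1,\dots,d\}$ consider $X_i\coloneqq\{v\in X: i\in L(v)\}$. The vertices of $X_i$ pairwise intersect as intervals and their label sets all contain $i$, so $X_i$ is a clique of $G$ and therefore $|X_i|\le\omega(G)$. Every vertex of $X$ with a nonempty label set lies in some $X_i$, so the number of such vertices is at most $\sum_{i=1}^{d}|X_i|\le d\,\omega(G)=\si(G)\,\omega(G)$.

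It remains to dispose of vertices carrying the empty label set, which are isolated in $G$. The clean way is a harmless normalization of the representation: if $L(v)=\emptyset$ then $v$ is nonadjacent to every other vertex of $G$ regardless of $R(v)$, so we may reassign $R(v)$ to a fresh interval disjoint from all other intervals without changing $G$. After this normalization every empty-label vertex is isolated in $G_1$ as well, so any clique $X$ of $G_1$ containing such a vertex is a singleton; in particular a clique of $G_1$ either is a singleton or consists entirely of nonempty-label vertices, and in both cases $|X|\le\si(G)\,\omega(G)$ (using $\si(G)\ge 1$ and $\omega(G)\ge 1$). Applying this bound to every bag of $\cP$ gives a path decomposition of $G$ of width at most $\si(G)\,\omega(G)-1$, which is the claim. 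The main obstacle is conceptual rather than computational: one must recognize that inside a clique of the auxiliary interval graph $G_1$ the $G$-adjacency is governed solely by label intersection, after which the factor $\si(G)$ emerges naturally from covering the clique by the $d$ label classes, and one must handle the empty-label vertices so that they do not inflate any bag.
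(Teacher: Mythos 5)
Your proof is correct and follows essentially the same route as the paper's: take the interval graph induced by $R$, use its path decomposition with clique bags as a path decomposition of $G$, and bound each bag by covering it with the at most $\si(G)$ label classes, each of which is a clique of $G$ of size at most $\omega(G)$. Your explicit normalization of empty-label vertices (and your remark on the degenerate case $\si(G)=0$) is a small point of extra care that the paper's proof leaves implicit.
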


\begin{proof}
  Fix an $\si(G)$-simultaneous interval representation $(R,L)$ of $G$ and let $H$ be the interval graph corresponding to the interval representation $R$.
  It is clear that $V(G)=V(H)$ and $E(G) \subseteq E(H)$. The pathwidth of $H$ is then exactly $\omega(H)-1$ and $H$ admits a path decomposition $\mathcal{P}$ in which each of the bags is a clique, as $H$ is an interval graph. 
    Furthermore, the path decomposition $\mathcal{P}$ of $H$ is also a path decomposition of $G$, as $E(G) \subseteq E(H)$.

    On the other hand, we know that for each clique of $H$, say for example $C$, the intervals belonging to that clique have a common intersection in the interval representation of $G$. Consequently, any intervals of $C$ that share a common label in the interval model of $G$ must also form a clique in $G$. Therefore, any bag of $\mathcal{P}$ can have size at most $\si(G)\omega(G)$. 
    As $\mathcal{P}$ is also a path decomposition of $G$, we get $\pw(G) \leq \si(G) \omega(G)-1$.
\end{proof}

\subsection{Upper Bounds}

We begin our discussion on upper bounds by proving that bounded pathwidth implies bounded simultaneous interval number.

\begin{theorem}\label{thm:pw}
    Every graph $G$ satisfies $\si(G) \leq \pw(G)^2 + \pw(G)$.
\end{theorem}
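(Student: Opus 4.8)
The plan is to read an interval representation directly off a path decomposition of width $p = \pw(G)$ and then use a bounded palette of labels, organized by pairs of colors, to carve the edges of $G$ out of the resulting interval graph. First I would fix a path decomposition $\cP = (X_1,\dots,X_k)$ with all bags of size at most $p+1$, and assign to each vertex $v$ the interval $R(v)$ spanning the (consecutive) bags containing $v$. Then two vertices share a bag if and only if their intervals intersect, and since every edge of $G$ lies in some bag, the interval graph $H$ determined by $R$ satisfies $V(H)=V(G)$ and $E(G)\subseteq E(H)$. The intervals are now fixed; the entire difficulty is to design label sets so that $L(u)\cap L(v)\neq\emptyset$ exactly when $uv\in E(G)$, among the pairs $u,v$ that already intersect in $H$.

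Next I would properly color $H$. Because $H$ is an interval graph, its chromatic number equals its clique number, which by the Helly property equals the largest bag size, so $\chi(H)\le p+1$; fix a proper coloring $c\colon V(G)\to\{1,\dots,p+1\}$. Each color class is then an independent set in $H$ (hence in $G$), and any two vertices of the same color have disjoint intervals, so they are non-adjacent in the final representation regardless of their labels. I would then introduce, for each unordered pair of distinct colors $\{a,b\}$, a private pool of two labels. This gives $2\binom{p+1}{2}=p^2+p$ labels in total, matching the claimed bound, and crucially the only labels that a color-$a$ vertex and a color-$b$ vertex can possibly share come from the single pool assigned to $\{a,b\}$, so the problem decouples completely across color pairs.

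The heart of the argument, and the step I expect to be the main obstacle, is to show that two labels per color pair always suffice. For fixed colors $a,b$, consider the bipartite graph $B = H[V_a\cup V_b]$ of pairs that share a bag; I would argue that $B$ is a \emph{forest}. Indeed $B$ is bipartite (color classes are independent) and chordal (an induced subgraph of the interval graph $H$), and a bipartite chordal graph has no induced cycle of length $\ge 4$ and no odd cycle, hence no cycle at all. On this forest, let $B'$ be the subgraph of genuine $G$-edges. Rooting each tree and assigning singleton label sets --- copy the parent's label across an edge of $B'$ and switch to the other label across an edge of $B\setminus B'$ --- produces, with no conflicts since the host graph is acyclic, an assignment in which two bag-sharing vertices of colors $a,b$ intersect in their $\{a,b\}$-labels precisely when their edge belongs to $G$.

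Finally I would assemble $L(v)$ as the union, over all colors $b\neq c(v)$, of the single label chosen for $v$ in the pool $\{c(v),b\}$, and verify correctness. For $u,v$ of distinct colors, $R(u)\cap R(v)\neq\emptyset$ says they share a bag (an edge of the relevant $B$), and then the color-pair labeling makes their labels meet if and only if $uv\in E(G)$; pairs that do not share a bag are non-adjacent through their intervals regardless of labels, and same-color pairs are handled by their disjoint intervals. This yields a $(p^2+p)$-simultaneous interval representation of $G$, so $\si(G)\le \pw(G)^2+\pw(G)$.
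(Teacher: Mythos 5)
Your proposal is correct and follows essentially the same route as the paper: intervals read off a path decomposition, a partition of the vertices into at most $\pw(G)+1$ classes whose intervals are pairwise disjoint within each class, and two private labels per pair of classes assigned by a keep/switch rule according to whether the crossing pair is a $G$-edge or merely an $H$-edge. The only difference is how the consistency of that rule is certified --- the paper normalizes the decomposition and sweeps left to right with an \emph{active} vertex and label, whereas you root the forest $H[V_a\cup V_b]$ (correctly identified as such via bipartite $+$ chordal) and propagate; both give $(p+1)p=\pw(G)^2+\pw(G)$ labels.
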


\begin{proof}
\begin{figure}
\centering
    \begin{tikzpicture}[yscale=0.8]
        \tikzmath{\eps = 0.10; \i=1; \j=0;}
        \node (i) at (-\eps,\i) {$i$};
        \node (j) at (-\eps,\j) {$j$};

        \draw[very thick] (0.5+0.5*\eps,\i) -- (0.5+0.5*\eps,\j);
        \draw[very thick] (2.5+0.5*\eps,\i) -- (2.5+0.5*\eps,\j);
        \draw[very thick] (5+0.5*\eps,\i) -- (5+0.5*\eps,\j);
        \draw[very thick] (7.5+0.5*\eps,\i) -- (7.5+0.5*\eps,\j);
        \draw[very thick] (9.5+0.5*\eps,\i) -- (9.5+0.5*\eps,\j);
        \draw[very thick] (12.5+0.5*\eps,\i) -- (12.5+0.5*\eps,\j);
        
        \draw[line width=4, new-blue] (0 + \eps,\j) -- node[below, black] {$a_{ij}$} (4,\j);
        \draw[very thick, new-blue] (4 + \eps ,\j) -- node[below, black] {$b_{ij}$} (6,\j);
        \draw[very thick, new-blue] (6 + \eps, \j) -- node[below, black] {$\varnothing$} (7,\j);
        \draw[line width=4, new-blue] (7 + \eps, \j) -- node[below, black] {$b_{ij}$} (11,\j);
        \draw[very thick, new-blue] (11 + \eps, \j) -- node[below, black] {$\varnothing$} (12,\j);
        \draw[very thick, new-blue] (12 + \eps, \j) -- node[below, black] {$a_{ij}$} (13,\j);

        \draw[very thick, lipicsYellow] (0 + \eps,\i) -- node[above, black] {$a_{ij}$} (1,\i);
        \draw[very thick, lipicsYellow] (1 + \eps ,\i) -- node[above, black] {$\varnothing$} (2,\i);
        \draw[very thick, lipicsYellow] (2 + \eps, \i) -- node[above, black] {$a_{ij}$} (3,\i);
        \draw[line width=4, lipicsYellow] (3 + \eps, \i) -- node[above, black] {$b_{ij}$} (8,\i);
        \draw[very thick, lipicsYellow] (8 + \eps, \i) -- node[above, black] {$\varnothing$} (9,\i);
        \draw[very thick, lipicsYellow] (9 + \eps, \i) -- node[above, black] {$b_{ij}$} (10,\i);
        \draw[line width=4, lipicsYellow] (10 + \eps, \i) -- node[above, black] {$a_{ij}$} (13,\i);

    \end{tikzpicture}
    \caption{Illustration of the proof of \cref{thm:pw}. 
    The thick intervals mark the active intervals. 
    A black edge between two intervals means that the corresponding vertices are adjacent in $G$. 
    The symbol $\varnothing$ means that the corresponding vertex has neither label $a_{ij}$ nor the label $b_{ij}$. 
    However, the vertex will have other labels.}\label{fig:pw}
\end{figure}
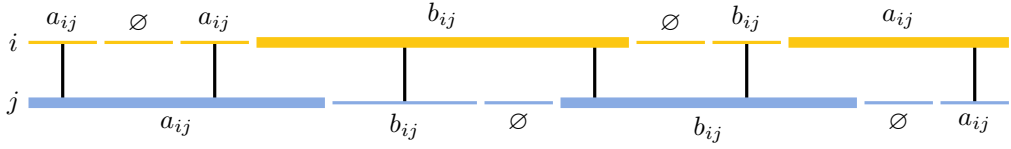
Let $k := \pw(G) + 1$. 
Consider a path decomposition $\cP$ of $G$ with maximal bag size $k$. 
It is easy to see that we can transform $\cP$ in such a way that every bag has size $k$. Furthermore, we can ensure that two consecutive bags differ only in two vertices, i.e., both vertices are part of exactly one of the two bags and all the other vertices are part of both bags or of none of them. 
This can be done by adding a sequence of new bags between two old ones in which the vertices are removed and introduced one by one.
Let $\cP' = (X_1, \dots, X_p)$ be the resulting path decomposition of $G$. 
Now there exists a mapping $f : V(G) \to \{1,\dots,k\}$ such that every bag of $\cP'$ contains a vertex $v$ with $f(v) = i$ for all $i \in \{1,\dots,k\}$. 
For every vertex of $G$, we define the interval $R(v)$ as $(a - \varepsilon, b + \varepsilon)$ where $0 < \varepsilon < \frac12$, $a$ is the smallest index such that $X_a$ contains $v$ and $b$ is the largest index such that $X_b$ contains $v$. 
It follows that the intervals of two vertices have a non-empty intersection if and only if these vertices are part of a common bag. 
Therefore, intervals of vertices with the same $f$-value have an empty intersection.
    
    It remains to show that we can label the vertices with at most $k \cdot (k-1)$ labels in such a way that the defined intervals form a simultaneous interval representation of $G$. 
    For every set $\{i,j\} \subseteq \{1,\dots,k\}$ with $i\neq j$, we introduce labels $a_{ij}$ and $b_{ij}$. Note that $a_{ij} = a_{ji}$ and $b_{ij} = b_{ji}$. 
    In the following we describe a procedure how to label the vertices of $G$ (see \cref{fig:pw} for an illustration). 
    During that labeling procedure, we will always have one \emph{active vertex} $\hat{v}$ and one \emph{active label} $c_{ij} \in \{a_{ij}, b_{ij}\}$. 
    To define the first active vertex let $x$ be the vertex with $f(x) = i$ whose interval ends first and let $y$ be the vertex with $f(y) = j$ whose interval ends first. 
    Without loss of generality, we may assume that $r(x) < r(y)$. 
    We define the first active vertex $\hat{v}$ to be $y$. The first active label $c_{ij}$ is $a_{ij}$. The active vertex $\hat{v}$ gets the label $c_{ij}$. For all vertices $z$ with $f(z) \in \{i,j\} \setminus f(\hat{v})$ and $\ell(\hat{v}) < r(z) < r(\hat{v})$, we add $c_{ij}$ to $L(z)$ if and only if $\hat{v}z \in E(G)$. Now consider the vertex $w$ with $f(w) \in \{i,j\} \setminus f(\hat{v})$ and $\ell(w) < r(\hat{v}) < r(w)$. Vertex $w$ becomes the new active interval. If $\hat{v}w \in E(G)$, then the active label stays the same, otherwise the new active label becomes the other one. In any case $w$ gets the new active label. Note that $L(\hat{v}) \cap L(w) \neq \emptyset$ if and only if $\hat{v}w \in E(G)$. 
    We repeat this procedure until the end of the interval representation. 
    Furthermore, we repeat the whole procedure for all sets $\{i,j\} \subseteq \{1,\dots, k\}$.
    In the end, we obtain a $d$-simultaneous interval representation $(R,L)$ of $G$ where $d= 2{k\choose 2} = k(k-1) = \pw(G)^2 + \pw(G)$.
\end{proof}

Observe at this point that bounded simultaneous interval number does not imply bounded pathwidth as is proven by the class of interval graphs.

An \emph{edge clique cover} of a graph $G$ is a set $\K$ of cliques of $G$ such that every edge of $G$ is contained in some clique of $\K$. 
We denote by $\eccn(G)$ the \emph{edge clique cover number} of $G$, that is, the minimum size of an edge clique cover of $G$.

\begin{lemma}\label{lemma:ecc-general}
    Let $\C$ be a class of intersection graphs. 
    Let $G$ be a graph, let $d\ge 0$ be an integer, and let $R$ be a $\C$-representation of some graph $F \in \C$. 
    Then, there exists a $d$-simultaneous $\C$-representation $(R, L)$ of $G$ if and only if there exists a graph $H$ with $\eccn(H) \leq d$ and $G$ is the intersection of $F$ and $H$.
\end{lemma}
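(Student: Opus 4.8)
The plan is to recognize that the labeling $L$ and the second factor $H$ in the intersection are two encodings of one and the same combinatorial object, and that the bound $\eccn(H)\le d$ corresponds precisely to the condition of using at most $d$ labels. Since $R$ is a $\C$-representation whose domain must be $V(G)$ for $(R,L)$ to be a representation of $G$, I work under the (forced) convention $V(F)=V(G)$; and on the ``if'' direction I may assume $V(H)=V(G)$, because restricting $H$ to $V(G)$ leaves $E(F)\cap E(H)$ unchanged (every edge of $F$ lies inside $V(F)=V(G)$, and $V(G)\subseteq V(H)$ follows from $V(G)=V(F)\cap V(H)$) and does not increase the edge clique cover number. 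Thus $R$, $F$, $H$ and the auxiliary graph below all live on the common vertex set $V(G)$.

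The central step is a dictionary between labelings and graphs of bounded edge clique cover number. Given a labeling $L\colon V(G)\to\mathcal{P}(\{1,\dots,d\})$, I define the graph $H_L$ on $V(G)$ by making two distinct vertices $u,v$ adjacent exactly when $L(u)\cap L(v)\neq\emptyset$. For each label $i$ the set $C_i=\{v : i\in L(v)\}$ is a clique of $H_L$, and every edge $uv$ of $H_L$ is covered by some $C_i$ (take any $i\in L(u)\cap L(v)$); hence $\{C_1,\dots,C_d\}$ is an edge clique cover and $\eccn(H_L)\le d$. Conversely, from a graph $H$ on $V(G)$ with an edge clique cover $C_1,\dots,C_d$ (padding with empty cliques if there are fewer than $d$), I set $L(v)=\{i : v\in C_i\}$ and verify $H_L=H$: if $uv\in E(H)$ then some $C_i$ covers it, so $i\in L(u)\cap L(v)$; and if $i\in L(u)\cap L(v)$ then $u,v\in C_i$, whence $uv\in E(H)$ since $C_i$ is a clique.

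With this dictionary the equivalence is immediate. Because $R$ represents $F$, for distinct $u,v$ we have $R(u)\cap R(v)\neq\emptyset$ iff $uv\in E(F)$, and by construction $L(u)\cap L(v)\neq\emptyset$ iff $uv\in E(H_L)$. Reading off \cref{def:labeling}, $(R,L)$ is a $d$-simultaneous $\C$-representation of $G$ exactly when, for all distinct $u,v$, $uv\in E(G)$ iff both $uv\in E(F)$ and $uv\in E(H_L)$ --- that is, exactly when $E(G)=E(F)\cap E(H_L)$, i.e.\ $G=F\cap H_L$. For the forward direction I take $H:=H_L$, which satisfies $\eccn(H)\le d$ and $G=F\cap H$. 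For the backward direction I take the labeling $L$ produced from an edge clique cover of the given $H$, so that $H_L=H$ and hence $G=F\cap H_L$, which is precisely the condition that $(R,L)$ be a $d$-simultaneous $\C$-representation of $G$.

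I do not expect a genuine obstacle; the only care needed is the bookkeeping: aligning vertex sets so that $R$, $F$, $H$ and $H_L$ all sit on $V(G)$, remembering that both the edge clique cover and the adjacency conditions concern only \emph{distinct} vertices, and checking the degenerate case $d=0$ (there $\eccn(H)\le 0$ forces $H$ edgeless, all label sets are empty, and both sides assert that $G$ is edgeless on $V(G)$). Everything else unpacks directly from the definitions once the equivalence $G=F\cap H_L$ is established.
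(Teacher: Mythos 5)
Your proposal is correct and follows essentially the same route as the paper: the forward direction builds $L(v)=\{i: v\in C_i\}$ from an edge clique cover of $H$, and the backward direction takes $H$ to be the intersection graph of the label sets, with the cliques $C_i=\{v: i\in L(v)\}$ witnessing $\eccn(H)\le d$. Your packaging of the two constructions as a single dictionary $L\leftrightarrow H_L$ and the extra bookkeeping on vertex sets are harmless refinements of the same argument.
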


\begin{proof}
First assume that $H$ is a graph with $\eccn(H) \leq d$ and $G = F \cap H$. 
Let $\K = \{C_1, \dots, C_d\}$ be an edge clique cover of $H$. 
For each vertex $v$ in $G$, we define $L(v)$ as follows: $L(v) := \{i\colon v \in C_i\}$. 
We claim that $(R,L)$ forms a $d$-simultaneous $\C$-representation of $G$. 
If two vertices $u,v \in V(G)$ are adjacent in $G$, then they are adjacent in $F$ and, thus, the sets $R(u)$ and $R(v)$ have a non-empty intersection. 
Furthermore, $u$ and $v$ are adjacent in $H$ and, hence, they are part of a common clique in $\K$. 
Therefore, the sets $L(u)$ and $L(v)$ share at least one label. 
If $u$ and $v$ are not adjacent in $G$, then either they are not adjacent in $F$ -- which implies that the sets $R(u)$ and $R(v)$ do not intersect -- or they are not adjacent in $H$ -- which implies that no clique contains both of them, and, hence their label sets are disjoint. 
Therefore, $(R,L)$ is a $d$-simultaneous $\C$-representation of $G$.

Now let $(R, L)$ be a $d$-simultaneous $\C$-representation of $G$. 
We define $H$ as the intersection graph of the label sets of $G$, i.e., $V(H) = V(G)$ and there is an edge between two distinct vertices $u$ and $v$ in $H$ if and only if $L(u) \cap L(v) \neq \emptyset$. 
It is easy to see that the intersection of $F$ and $H$ is the graph $G$. It remains to show that $\eccn(H) \leq d$. For any $i \in \{1,\dots, d\}$, let $C_i$ be the set of vertices of $G$ whose label set contains $i$. By definition of $H$, all the sets $C_i$ form cliques in $H$. Furthermore, the vertices of any edge in $H$ have to share some label and, thus, they are contained in a common set $C_i$. Therefore, the sets $C_i$ form an edge clique cover of $H$ of size $d$ implying $\eccn(H) \leq d$.
\end{proof}

\Cref{lemma:ecc-general} implies the following.

\begin{corollary}\label{cor:ecc-general}
Let $\C$ be a class of intersection graphs, let $G$ be a graph, and let $d\ge 0$ be an integer.
Then, $G$ has an \hbox{$d$-simultaneous} $\C$-representation if and only if $G$ is the intersection of a graph in $\C$ and a graph with edge clique cover number at most $d$.
\end{corollary}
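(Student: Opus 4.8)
The plan is to derive both implications directly from \cref{lemma:ecc-general}, which already carries all the substantive content; the corollary merely repackages the lemma by quantifying existentially over the graph in $\C$ and its $\C$-representation, rather than fixing them in advance. So I would treat this as a bookkeeping exercise that matches the two quantifier patterns.

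For the forward direction, suppose $G$ has a $d$-simultaneous $\C$-representation $(R,L)$. By \cref{def:labeling}, such a representation comes together with a graph $F \in \C$ satisfying $V(F) = V(G)$ and $E(G) \subseteq E(F)$, for which $R$ is a $\C$-representation. I would then apply \cref{lemma:ecc-general} to this particular $F$ and $R$: since the $d$-simultaneous $\C$-representation $(R,L)$ exists, the lemma yields a graph $H$ with $\eccn(H) \le d$ such that $G = F \cap H$. As $F \in \C$, this exhibits $G$ as the intersection of a graph in $\C$ and a graph of edge clique cover number at most $d$, which is exactly what is required.

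For the converse, suppose $G = F \cap H$ where $F \in \C$ and $\eccn(H) \le d$. Since $F$ belongs to the class of intersection graphs $\C$, it admits a $\C$-representation $R$ with ground set family in $S_\C$. I would apply \cref{lemma:ecc-general} in the other direction, to $F$ and this $R$: the existence of a graph $H$ with $\eccn(H) \le d$ and $G = F \cap H$ guarantees a $d$-simultaneous $\C$-representation $(R,L)$ of $G$, as desired.

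Since the lemma handles the genuine combinatorial step (translating an edge clique cover of $H$ into label sets and back), no part of this argument is difficult. The only point requiring care is ensuring that the graph $F$ implicit in the definition of a $d$-simultaneous $\C$-representation is the same object to which the lemma is applied, and conversely that a $\C$-representation of a given $F \in \C$ can be fixed before invoking the lemma. In particular, I would confirm that the $R$ supplied by \cref{def:labeling} in the forward direction is indeed a $\C$-representation of a graph $F$ lying in $\C$ — which holds by the very definition of a class of intersection graphs — so that the application of \cref{lemma:ecc-general} is legitimate and the two statements align precisely.
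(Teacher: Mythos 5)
Your proposal is correct and matches the paper's approach: the paper states this corollary with no separate proof, simply noting that it follows from \cref{lemma:ecc-general} by existentially quantifying over the graph $F\in\C$ and its $\C$-representation $R$, which is exactly the bookkeeping you carry out. Your attention to the alignment between the graph $F$ implicit in \cref{def:labeling} and the one fed to the lemma is the only point of substance, and you handle it correctly.
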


\Cref{lemma:ecc-general} also implies the following strengthening of \Cref{thm:label-edges}.

\begin{sloppypar}
\begin{theorem}\label{thm:label-ecc}
For every class of intersection graphs $\C$, every graph $G$ has an \hbox{$\eccn(G)$-simultaneous} $\C$-representation.
\end{theorem}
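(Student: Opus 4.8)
The plan is to derive this directly from \Cref{cor:ecc-general} with $d = \eccn(G)$. That corollary says that $G$ admits an $\eccn(G)$-simultaneous $\C$-representation precisely when $G$ can be written as the intersection of a graph in $\C$ and a graph whose edge clique cover number is at most $\eccn(G)$. So the whole task reduces to exhibiting such a decomposition of $G$. The natural choice for the second factor is $G$ itself: then the edge-clique-cover requirement is met with equality, and nothing further has to be argued about it.

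For the first factor I would take the complete graph $K$ on the vertex set $V(G)$. This graph lies in $\C$: by the standing assumption on classes of intersection graphs, some set family $\mathcal{F} \in S_\C$ contains a non-empty set $S$, and assigning $R(v) = S$ to every vertex $v \in V(G)$ produces a $\C$-representation in which every pair of vertices is adjacent, so $K \in \C$. This is exactly the device used in the proof of \Cref{thm:label-edges}, where all vertices receive the same set $S$.

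It then remains to verify that $G = K \cap G$. Since $V(K) = V(G)$, the vertex sets intersect in $V(G)$; and since every edge of $G$ is also an edge of $K$, we have $E(K) \cap E(G) = E(G)$. Hence $G$ is the intersection of $K \in \C$ and the graph $G$, whose edge clique cover number is $\eccn(G)$. Applying \Cref{cor:ecc-general} yields an $\eccn(G)$-simultaneous $\C$-representation of $G$, as required.

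There is no substantial obstacle here; the one point that deserves care is the claim $K \in \C$, which is exactly where the assumption that $S_\C$ contains a family with a non-empty set is needed. The statement genuinely strengthens \Cref{thm:label-edges} because $\eccn(G) \le |E(G)|$ always holds (each single edge is itself a clique), so the present construction replaces the edge-based labeling of that theorem by an optimal edge clique cover of $G$.
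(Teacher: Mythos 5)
Your proof is correct and follows essentially the same route as the paper: both take the first factor to be the complete graph on $V(G)$ (shown to lie in $\C$ by mapping every vertex to a non-empty set $S$ from some $\mathcal{F}\in S_\C$), take the second factor to be $G$ itself, and conclude via \cref{lemma:ecc-general} (of which your \cref{cor:ecc-general} is the direct restatement). No gaps.
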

\end{sloppypar}

\begin{proof}
By our assumption on classes of intersection graphs, there exists a set family $\mathcal{F} \in S_\C$ that contains some non-empty set $S$. 
Let $F$ be the complete graph with $|V(G)|$ vertices and let $H = G$.
Note that mapping every vertex of $F$ to $S$ yields a $\C$-representation of $F$, and, thus, $F$ belongs to $\C$.
Moreover, $G$ is the intersection of $F$ and $H$ and $\eccn(H) = \eccn(G)$.
Hence,  by \Cref{lemma:ecc-general}, $G$ admits an $\eccn(G)$-simultaneous $\C$-representation.
\end{proof}

\begin{corollary}\label{thm:si-eccn}
     Every graph $G$ satisfies $\si(G) \leq \eccn(G)$.
\end{corollary}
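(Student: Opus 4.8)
The statement to prove is \Cref{thm:si-eccn}: every graph $G$ satisfies $\si(G) \leq \eccn(G)$.

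The plan is to derive this directly as a corollary of \Cref{thm:label-ecc}, specialized to the class of interval graphs. Recall that \Cref{thm:label-ecc} asserts that for \emph{every} class of intersection graphs $\C$, every graph $G$ admits an $\eccn(G)$-simultaneous $\C$-representation. Since the class of interval graphs is a class of intersection graphs (its ground set family is the set of all open intervals of the real line, which indeed contains a non-empty set, as required by our standing assumption on classes of intersection graphs), we may instantiate $\C$ to be the interval graphs.

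First I would invoke \Cref{thm:label-ecc} with $\C$ equal to the class of interval graphs, obtaining an $\eccn(G)$-simultaneous interval representation of $G$. By the \emph{definition} of the simultaneous interval number $\si(G)$ as the smallest integer $d$ such that $G$ admits a $d$-simultaneous interval representation, the existence of such a representation with $\eccn(G)$ labels immediately yields $\si(G) \leq \eccn(G)$. This is the entire argument.

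There is essentially no obstacle here, since all the work has already been done in the general \Cref{thm:label-ecc} (which in turn rests on \Cref{lemma:ecc-general}, via taking $F$ to be a complete graph and $H = G$). The only point meriting a line of care is confirming that interval graphs genuinely fit the framework of \enquote{classes of intersection graphs} so that \Cref{thm:label-ecc} applies verbatim; this was already verified when interval graphs were introduced in the definitions, where the ground set family is the collection of all open real intervals. Thus the proof reduces to a single application of the preceding theorem together with the definition of $\si(G)$.
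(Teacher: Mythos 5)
Your proposal is correct and is exactly the derivation the paper intends: the corollary is stated immediately after \Cref{thm:label-ecc} precisely because it follows by instantiating that theorem with $\C$ equal to the class of interval graphs and then applying the definition of $\si(G)$ as the minimum number of labels. Nothing further is needed.
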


Interval graphs, and in particular paths, have unbounded edge clique cover number. Thus, bounded simultaneous interval number does not imply bounded edge clique cover number.

The bound given by \Cref{thm:si-eccn} is tight. 
Let us denote by $K_n^p$ the complete multipartite graph on $p$ partite sets of the same size $n$ and by $\lambda(n)$ the largest size of a family of mutually orthogonal Latin squares of order $n$. 
It is known that $\lambda(n)\le n-1$ and that equality holds if and only if there exists a projective plane of order $n$. 
Thus $\lambda(q) =q-1$ if $q$ is a prime power, but in general the exact computation of the value of $\lambda(n)$ is difficult.
Park, Kim, and Sano showed in~\cite{MR2558610} that for any two integers $p$ and $n$ such that $3\le p\le \lambda(n)+2$, the edge clique cover number of $K_n^p$ equals $n^2$.
Taking $p = 3$, we thus obtain, by combining \cref{thm:3-partite,thm:si-eccn}, that for the complete $3$-partite graph $G$ with parts of equal size, we have $\si(G) = \eccn(G) = \frac{|V(G)|^2}{9}$.

\section{Complexity of Computing the Simultaneous Interval Number}\label{sec:si-complexity}

Using the characterization from \Cref{def:labeling}, we can state three natural recognition problems for $d$-simultaneous \C-representations. 

\begin{problem}[Simultaneous $\C$-Representation Problem]~
\begin{description}
\item \textbf{Input:} A graph $G$ and a labeling $L:V(G)\to {\mathcal P}(\{1,\ldots, d\})$ of $G$.
\item \textbf{Question:} Does there exist a $d$-simultaneous $\C$-representation $(R, L)$ of $G$?
\end{description}    
\end{problem}

By Observations~\ref{obs:labeling}  and~\ref{obs:labeling-2}, Problem~1 is a generalization of the simultaneous representation problems by Jampani and Lubiw~\cite{jampani2009simconf}.

In the second problem we are given the graph and some representation and want to find a suitable labeling.


\begin{problem}[Simultaneous Labeling Problem Given a $\C$-Representation]~
\begin{description}
\item \textbf{Input:} A graph $G$ and a $\C$-representation $R$ of a graph $F$ with $V(G) = V(F)$ and $E(G)\subseteq E(F)$.
\item \textbf{Question:} What is the smallest number $d \in \N$ such that there exists a $d$-simultaneous $\C$-representation $(R,L)$ of $G$?
\end{description}    
\end{problem}

In the third version, we are given just a graph and wish to compute the smallest number of labels needed for the graph to have a $d$-simultaneous $\C$-representation.


\begin{problem}[Generalized Simultaneous $\C$-Representation Problem]~
\begin{description}
\item \textbf{Input:} A graph $G$.
\item \textbf{Question:} What is the smallest number $d \in \N$ such that there exists a $d$-simultaneous $\C$-representation of $G$?
\end{description}    
\end{problem}



Recall the definition of a class of intersection graphs given on p.~\pageref{def-intersection}.

\begin{theorem}\label{thm:labeling-hard}
    The Simultaneous Labeling Problem Given a $\C$-Representation is \NP-hard for any class of intersection graphs $\C$, even if all sets in the given $\C$-representation pairwise intersect.
\end{theorem}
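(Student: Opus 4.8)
The plan is to reduce from the problem of computing the edge clique cover number $\eccn(G)$, which is well known to be NP-hard (Orlin; Kou, Stockmeyer, and Wong). The bridge between the two problems is \Cref{lemma:ecc-general}: once the $\C$-representation $R$ of some graph $F \in \C$ is fixed, the smallest $d$ admitting a $d$-simultaneous $\C$-representation $(R,L)$ of $G$ equals the smallest value of $\eccn(H)$ over all graphs $H$ with $G = F \cap H$. So the task is to choose the input representation $R$ so that this minimum collapses exactly onto $\eccn(G)$.

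First I would exploit the restriction that all sets in $R$ pairwise intersect. By the definition of a $\C$-representation, this means that every pair of vertices is adjacent in $F$, i.e.\ $F$ is the complete graph on $V(G) = V(F)$. To build such an $R$, I would invoke the standing assumption on classes of intersection graphs: there is a family $\mathcal{F} \in S_\C$ containing some non-empty set $S$; mapping every vertex to $S$ yields a $\C$-representation of the complete graph $F$ on $V(G)$, and all of its sets (being equal to the non-empty $S$) pairwise intersect. Since $E(G) \subseteq E(F)$ holds trivially for a complete $F$, the pair $(G,R)$ is a legitimate input of the required special form, and it is clearly computable in polynomial time.

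Next I would compute what the optimum equals for this input. Take $V(H) = V(G)$; since $F$ is complete, $E(F \cap H) = E(F) \cap E(H) = E(H)$, so the condition $G = F \cap H$ forces $H = G$. Hence the only candidate for $H$ in \Cref{lemma:ecc-general} is $G$ itself, and the smallest $d$ for which $(R,L)$ is a $d$-simultaneous $\C$-representation of $G$ is precisely $\eccn(G)$. Therefore, solving the Simultaneous Labeling Problem on the constructed instance is exactly computing $\eccn(G)$, which establishes NP-hardness of the labeling problem even under the stated restriction.

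The only points that need care, rather than a genuine obstacle, are the bookkeeping around vertex sets in the intersection $F \cap H$ (handled by fixing $V(H) = V(G) = V(F)$) and the guarantee that a representation $R$ with all sets pairwise intersecting exists for an \emph{arbitrary} class $\C$ (handled by the assumption that $S_\C$ contains a family with a non-empty set). I expect the main subtlety to be making the equivalence with $\eccn(G)$ airtight: one must verify that no choice of $H$ other than $G$ itself can satisfy $G = F\cap H$ once $F$ is complete on $V(G)$, so that the optimum genuinely equals $\eccn(G)$ and not merely a lower bound for it.
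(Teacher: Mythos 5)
Your proposal is correct and follows essentially the same route as the paper: reduce from the \NP-hard computation of $\eccn(G)$, take $F$ to be the complete graph represented by mapping every vertex to a single non-empty set $S \in \mathcal{F}$ (so all sets pairwise intersect), and apply \cref{lemma:ecc-general} to conclude that the optimum for this instance is exactly $\eccn(G)$ because $F$ being complete forces $H = G$. No gaps.
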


\begin{proof}
    We use the fact that it is \NP-hard to compute the edge clique cover number $\eccn(G)$ of a given graph $G$~\cite{kou1978covering}. 
    Let $G$ be a graph and let $F$ be the complete graph with $V(F) = V(G)$.
    By our assumption on classes of intersection graphs, there exists a set family $\mathcal{F} \in S_\C$ that contains some non-empty set $S$. 
    Consider the $\C$-representation $R$ of $F$ mapping all vertices of $F$ to $S$.  
    Due to \cref{lemma:ecc-general}, there exists a $d$-simultaneous $\C$-representation $(R,L)$ of $G$ if and only if there exists a graph $H$ with $\eccn(H) \leq d$ and $F \cap H = G$. 
    As $F$ is complete, it must hold that $H = G$. Therefore, there exists a $d$-simultaneous $\C$-representation $(R,L)$ of $G$ if and only if $\eccn(G) \leq d$. 
    Hence, the Simultaneous Labeling Problem Given a $\C$-Representation is \NP-hard even for $\C$-representations like $R$.
\end{proof}

\begin{theorem}
   The Generalized Simultaneous $\C$-Representation Problem is \NP-hard for every class of intersection graphs that is a subclass of the class of $C_4$-free graphs and contains the class of complete split graphs. 
\end{theorem}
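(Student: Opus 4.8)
The plan is to reduce from the problem of computing the edge clique cover number, which is \NP-hard~\cite{kou1978covering}, and to exploit the characterization of \Cref{cor:ecc-general}: a graph $G$ admits a $d$-simultaneous $\C$-representation if and only if $G$ is the intersection of a graph $F\in\C$ and a graph $H$ with $\eccn(H)\le d$. Hence the quantity to be computed equals $\min\{\eccn(H) : G = F\cap H,\ F\in\C\}$, and I will show this minimum equals $\eccn(G_0)$ for a suitable $G$ built from the input. Given an instance $G_0$ of edge clique cover, which we may assume has no isolated vertices, I would construct $G$ on vertex set $K\cup I$ (with $K\cap I=\emptyset$), where $K = V(G_0)$ induces a copy of $G_0$, the set $I$ is independent with $|I| = |E(G_0)|+1$, and every vertex of $I$ is adjacent to every vertex of $K$.

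For the upper bound I would take $F^\ast$ to be the complete split graph with clique $K$ and independent set $I$; this graph lies in $\C$ by hypothesis. Let $H$ be the graph whose only non-edges are the non-edges of $G_0$ inside $K$. One checks directly that $G = F^\ast\cap H$, that $H[K]$ is a copy of $G_0$, and that $I$ together with all $K$--$I$ pairs is complete in $H$. Extending an optimal edge clique cover $\{C_1,\dots,C_t\}$ of $G_0$ (with $t=\eccn(G_0)$) to the cliques $C_1\cup I,\dots,C_t\cup I$ of $H$ then covers every edge of $H$ --- here the absence of isolated vertices in $G_0$ is used --- so $\eccn(H)\le\eccn(G_0)$, and therefore $G$ has an $\eccn(G_0)$-simultaneous $\C$-representation.

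The hard part is the matching lower bound, and this is where both hypotheses on $\C$ enter. Suppose $G = F\cap H$ with $F\in\C$, so that $F$ is $C_4$-free, and recall that every edge of $F$ missing from $G$ is a non-edge of $H$. The only induced four-cycles of $G$ that straddle the two sides are of the form $i\,k_1\,i'\,k_2$ with $i,i'\in I$ and $k_1k_2$ a non-edge of $G_0$; for each such four-cycle, $C_4$-freeness of $F$ forces $F$ to contain one of the chords $ii'$ or $k_1k_2$. Consequently, if $F$ omits even a single non-edge $k_1k_2$ of $G_0$, then $F$ must contain $ii'$ for all pairs $i,i'\in I$, i.e.\ $I$ is a clique of $F$; then all $\binom{|I|}{2}$ pairs inside $I$ become non-edges of $H$ while each $i\in I$ retains all of its edges to $K$, so every clique of $H$ meets $I$ in at most one vertex and $\eccn(H)\ge|I|>\eccn(G_0)$. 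Otherwise $F[K]$ is complete, which makes $H[K]$ exactly a copy of $G_0$ and hence $\eccn(H)\ge\eccn(H[K])=\eccn(G_0)$. In either case $\eccn(H)\ge\eccn(G_0)$, so the minimum equals $\eccn(G_0)$ and the reduction is complete.

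The main obstacle I anticipate is precisely the case analysis in the lower bound: one must choose $|I|$ large enough (the value $|E(G_0)|+1$ suffices, since $\eccn(G_0)\le|E(G_0)|$) that turning $I$ into a clique is never worthwhile, and one must argue that no mixed choice of chords for $F$ can undercut $\eccn(G_0)$. The construction is deliberately arranged so that the upper bound uses only that complete split graphs belong to $\C$, while the lower bound uses only that $\C$ is $C_4$-free; this is exactly what lets the single graph $G$ witness \NP-hardness simultaneously for every class sandwiched between these two.
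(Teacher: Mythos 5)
Your proposal is correct and uses essentially the same construction and key ideas as the paper: the gadget of $|E(G_0)|+1$ independent vertices completely joined to $G_0$, the complete split graph for the upper bound, and $C_4$-freeness forcing $F[K]$ to be complete (or else forcing $I$ to be a clique in $F$, which you rule out by an $\eccn(H)\ge|I|$ count where the paper instead rules it out by a pigeonhole on the at most $d$ pairwise disjoint nonempty label sets). The only organizational difference is that you reduce directly from edge clique cover via \cref{cor:ecc-general}, whereas the paper factors the argument through the intermediate Simultaneous Labeling Problem of \cref{thm:labeling-hard}.
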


\begin{proof}
    Let $\C$ be a class fulfilling the properties given in the theorem. 
    Due to \cref{thm:labeling-hard}, the Simultaneous Labeling Problem Given a $\C$-Representation is \NP-hard for $\C$ even if all sets in the given $\C$-representation pairwise intersect.
    We will reduce that problem to the Generalized Simultaneous $\C$-Representation Problem.

    Let $G$ be an arbitrary graph. We add $|E(G)| + 1$ pairwise non-adjacent vertices to $G$ and make them all adjacent to all vertices of $G$. We call the resulting graph $H$. Let $S$ be the set of these added vertices. We claim that for any $d \leq |E(G)|$, there exists a $d$-simultaneous $\C$-representation of $G$ whose elements pairwise intersect if and only if there exists a $d$-simultaneous $\C$-representation of $H$. 

    First assume that we have a $d$-simultaneous representation $(R_G,L_G)$ of $G$ where all sets in the image of $R_G$ pairwise intersect.
    Let $H'$ be the graph obtained from $H$ by making the vertices of $G$ a clique. Obviously, $H'$ is a complete split graph. Since $\C$ contains the complete split graphs, there is a $\C$-representation $R_H$ of $H'$. We define a label function $L_H$ of $H$ as follows: For all vertices $v \in V(G)$, we set $L_H(v) := L_G(v)$. For all vertices $v \in S$, we choose $L_H(v)$ to be the \emph{universal} label set of $L_G$, that is, the union of all label sets of $L_G$. 
    We claim that $(R_H, L_H)$ is a $d$-simultaneous representation of $H$. 
    If two vertices $x, y \in V(H)$ are adjacent, then they are also adjacent in $H'$. Thus, their representations in $R_H$ intersect. Furthermore, if both $x$ and $y$ are in $G$, then their label sets in $L_G$ are non-disjoint and, thus, their label sets in $L_H$ are non-disjoint. 
    If one of $x$ and $y$ is in $S$, then the label sets of $x$ and $y$ in $L_H$ are non-disjoint since one of them is universal. 
    If $x$ and $y$ are not adjacent in $H$, then either both are in $G$ or both are in $S$. If both are in $G$, then their label sets in $L_G$ (and, thus, $L_H$) are disjoint since their representations in $R_G$ are non-disjoint. If both vertices are in $S$, then they are also not adjacent in $H'$ and, thus, their representations in $R_H$ are disjoint. Summing up, $(R_H,L_H)$ is an $d$-simultaneous $\C$-representation of $G$. 

    Now assume that there exists a $d$-simultaneous $\C$-representation $(R_H, L_H)$ of $H$. 
    Since no vertex in $S$ is isolated in $H$, the label sets $L_H(v)$, $v\in S$, are all nonempty.
    There are at most $d$ pairwise disjoint nonempty label sets in $L_H$.
    Hence, since $S$ contains $|E(G)| + 1 > d$ pairwise non-adjacent vertices, there are at least two vertices $x$ and $y$ in $S$ such that $R_H(x)$ and $R_H(y)$ are disjoint.
    We claim that all the sets $R_H(v)$, $v\in V(G)$, have to pairwise intersect. 
    Assume for contradiction that this is not the case and let $u$ and $v$ be two vertices in $G$ such that $R_H(u)$ and $R_H(v)$ are disjoint. 
    As $u$ and $v$ are adjacent to both $x$ and $y$ in $H$, each of the sets $R_H(u)$ and $R_H(v)$ intersects each of the sets $R_H(x)$ and $R_H(y)$.
    This implies that the graph $F$ such that $R_H$ is a $\C$-representation of $F$ contains the induced cycle $(x,u,y,v)$; a contradiction to the fact that $F\in \C$ and $\C$ is a subclass of the class of $C_4$-free graphs. 
    Hence, the representations of all vertices of $G$ in $R_H$ pairwise intersect. 
    Let $(R_G, L_G)$ be the restriction of $(R_H, L_H)$ to the vertices of $G$. 
    Then, $(R_G, L_G)$ is a $d$-simultaneous $\C$-representation of $G$ where all the elements of $R_G$ pairwise intersect.
\end{proof}

\begin{corollary}
Let $\C$ be the class of interval graphs or the class of chordal graphs.
Then, the Generalized Simultaneous $\C$-Representation Problem is \NP-hard.
\end{corollary}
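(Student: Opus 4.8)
The plan is to obtain this as a direct instance of the preceding theorem, which establishes \NP-hardness of the Generalized Simultaneous $\C$-Representation Problem for every class of intersection graphs $\C$ that is simultaneously a subclass of the $C_4$-free graphs and a superclass of the complete split graphs. Thus for each choice $\C \in \{\text{interval graphs}, \text{chordal graphs}\}$ it suffices to verify three things: (i) $\C$ is a class of intersection graphs in the sense defined on p.~\pageref{def-intersection}; (ii) every graph in $\C$ is $C_4$-free; and (iii) every complete split graph belongs to $\C$. Once these are checked, the theorem applies verbatim.

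For (i), both classes were introduced at the outset precisely as intersection graph classes — chordal graphs via the subtrees of a tree and interval graphs via the open intervals of the real line — and each associated $S_\C$ clearly contains a set family with a non-empty set, so nothing further is needed. For (ii), I would invoke the standard characterization that a graph is chordal if and only if it has no induced cycle of length at least four; in particular no chordal graph contains an induced $C_4$, and since every interval graph is chordal the same holds for interval graphs. Hence both classes sit inside the $C_4$-free graphs.

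The only claim that requires a genuine (if short) argument is (iii). Let $G$ be a complete split graph with clique $K$ and independent set $I$, every vertex of $I$ being adjacent to every vertex of $K$. To place $G$ in the class of interval graphs (which, being a subclass of chordal graphs, settles both cases at once), I would exhibit an explicit interval model: assign to every vertex of $K$ the common long interval $(0,\,|I|+1)$, and to the $\ell$-th vertex of $I$ a short interval such as $\bigl(\ell - \tfrac12,\, \ell + \tfrac12\bigr)$. The short intervals are pairwise disjoint, realizing the independence of $I$, while each of them meets the long interval, realizing adjacency of every $I$-vertex to every $K$-vertex; the $K$-intervals mutually overlap, giving the clique. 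This reproduces exactly the adjacencies of $G$, so $G$ is an interval graph and, a fortiori, chordal.

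With (i)--(iii) confirmed, the preceding theorem yields \NP-hardness of the Generalized Simultaneous $\C$-Representation Problem for both $\C$ = interval graphs and $\C$ = chordal graphs, which is the corollary. I do not anticipate any real obstacle here: the statement is essentially a specialization of the general theorem, and the only point needing verification rather than citation is the interval realization of complete split graphs in step (iii), which the model above handles directly.
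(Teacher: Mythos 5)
Your proposal is correct and matches the paper's (implicit) argument: the corollary is stated in the paper as an immediate consequence of the preceding theorem, with exactly the three verifications you carry out — both classes being intersection graph classes as defined, both being $C_4$-free via chordality, and complete split graphs being interval (hence chordal) — left to the reader. Your explicit interval model for complete split graphs is a correct way to discharge the one nontrivial check.
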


\begin{corollary}
    It is \NP-hard to compute the simultaneous interval number of a graph $G$.
\end{corollary}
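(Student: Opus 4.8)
The plan is to observe that computing $\si(G)$ is literally the task of solving the Generalized Simultaneous $\C$-Representation Problem for the class $\C$ of interval graphs: by the definition of the simultaneous interval number, $\si(G)$ is the smallest $d$ for which $G$ admits a $d$-simultaneous interval representation, which is exactly the quantity that this problem asks for. Hence it suffices to invoke the \NP-hardness of that problem, specialised to the interval-graph case, as already recorded in the preceding corollary.

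To see that the interval-graph case is in the scope of the governing theorem, I would verify the theorem's two structural hypotheses for the class of interval graphs. First, interval graphs are chordal and therefore contain no induced cycle of length at least four; in particular they are $C_4$-free, so the class is a subclass of the $C_4$-free graphs. Second, I would check that every complete split graph is an interval graph by exhibiting a representation: realise the clique part $\{c_1,\dots,c_k\}$ by the nested intervals $c_j \mapsto (-j,j)$, all containing the origin, and realise each vertex of the independent part by a short interval placed inside the innermost clique interval $(-1,1)$, choosing these short intervals pairwise disjoint. Each short interval is contained in $(-1,1)$ and hence in every $c_j$, so it meets all clique intervals and no other independent-part interval; this is precisely the adjacency pattern of a complete split graph. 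Thus the class of interval graphs contains every complete split graph.

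With both hypotheses confirmed, the theorem gives that the Generalized Simultaneous Interval Representation Problem is \NP-hard, and since its optimal value is exactly $\si(G)$, it follows that computing the simultaneous interval number is \NP-hard.

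I do not expect a genuine obstacle here: the statement is a direct corollary, and the only content beyond unwinding the definition of $\si$ is the routine verification that interval graphs fit the hypotheses of the theorem (chordality yielding $C_4$-freeness, and the explicit representation of complete split graphs). The substantive work lies in the reduction behind the theorem itself, which ultimately descends from the \NP-hardness of computing $\eccn$; here that work is simply being transferred to the interval setting.
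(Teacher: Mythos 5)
Your proposal is correct and follows exactly the paper's (implicit) route: the paper derives this corollary directly from the preceding corollary specializing the Generalized Simultaneous $\C$-Representation Problem theorem to interval graphs, whose hypotheses ($C_4$-freeness via chordality and containment of complete split graphs) you verify correctly. Your explicit interval representation of complete split graphs is a valid, if routine, addition that the paper leaves unstated.
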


\section{Cliques}\label{sec:cliques}

In this section, we focus on the \emph{Maximum Clique} problem: Given a graph $G = (V,E)$, compute a largest clique in $G$.
The problem can be naturally generalized to the weighted case, where the input graph is equipped with a vertex weight function $w:V\to \mathbb{Q}_+$ and the task is to find a clique $C$ in $G$ maximizing its weight, $w(C)$, defined as the sum of the weights of the vertices in $C$.

\begin{theorem}\label{thm:cliques}
A graph $G$ has at most $2^{2^{\si(G)}} \cdot n$ many maximal cliques.
\end{theorem}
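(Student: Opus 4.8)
The plan is to fix a $\si(G)$-simultaneous interval representation $(R,L)$ of $G$, write $d:=\si(G)$, and separate the two adjacency conditions. I would first build the interval graph $H$ on $V(G)$ with $uv\in E(H)$ exactly when $R(u)\cap R(v)\neq\emptyset$; then $E(G)\subseteq E(H)$, so every clique of $G$ is a clique of $H$. As $H$ is an interval graph, it is chordal and therefore has at most $n$ maximal cliques $Q_1,\dots,Q_t$ with $t\le n$. Every maximal clique $C$ of $G$ has pairwise intersecting intervals, hence lies in some $Q_i$ and is a maximal clique of the induced subgraph $G[Q_i]$. I would split the maximal cliques of $G$ into singletons and those of size at least $2$ and bound the two types differently.

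For the cliques of size at least $2$ I would analyse a single $G[Q_i]$. By the Helly property of intervals the sets $\{R(v):v\in Q_i\}$ have a common point, so the interval condition holds automatically inside $Q_i$; consequently, for $u,v\in Q_i$ we have $uv\in E(G)$ if and only if $L(u)\cap L(v)\neq\emptyset$. Grouping the vertices of $Q_i$ by their label set, any two vertices carrying the same non-empty label set are true twins in $G[Q_i]$, so every maximal clique of $G[Q_i]$ contains all or none of each such twin class; a maximal clique of size at least $2$ is thus exactly the union of the twin classes of a family of pairwise-intersecting non-empty label sets. Since $\{1,\dots,d\}$ has only $2^d-1$ non-empty subsets, $G[Q_i]$ has at most $2^{2^d-1}$ maximal cliques of size at least $2$, and summing over the at most $n$ cliques $Q_i$ bounds the number of maximal cliques of $G$ of size at least $2$ by $n\cdot 2^{2^d-1}$.

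The delicate point, and the main obstacle, is the contribution of singletons, where a per-$Q_i$ count would badly overcount: a vertex that is isolated in $G[Q_i]$ (for instance one with empty label set) is its own singleton maximal clique and may belong to many of the $Q_i$. I would therefore count singleton maximal cliques of $G$ globally rather than per clique: they are exactly the isolated vertices of $G$, of which there are at most $n$. Combining the two bounds gives at most $n+n\cdot 2^{2^d-1}$ maximal cliques, and since $1\le 2^{2^d-1}$ for all $d\ge 0$ this is at most $n\cdot 2^{2^d}=2^{2^{\si(G)}}\cdot n$. The remaining work is routine: verifying the true-twin claim and checking that the final arithmetic closes to exactly the stated constant.
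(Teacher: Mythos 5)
Your proof is correct and takes essentially the same route as the paper: both arguments use the Helly property to reduce each maximal clique of $G$ to a choice of one of at most $n$ positions in the underlying interval graph (your maximal cliques $Q_i$ of $H$ correspond to the paper's ``points before the endpoint of some interval'') together with a family of label sets, yielding the factor $2^{2^{\si(G)}}$. Your write-up is in fact somewhat more careful than the paper's own proof, making the twin-class structure explicit and treating singleton maximal cliques separately, but the underlying decomposition is identical.
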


\begin{proof}
Let $d = \si(G)$ and fix a $d$-simultaneous interval representation $(R,L)$ of $G$. 
Let $C$ be a maximal clique of $G$. 
There exists a point $p$ on the real line that is contained in any interval of the vertices contained in $C$. 
Furthermore, for every subset $S \subseteq \{1,\ldots,d\}$, if there is any vertex $u\in C$ such that $L(u) = S$, then the clique $C$ contains all the vertices $v$ whose label set is exactly $S$ and whose interval $R(v)$  contains $p$.
    There are at most $n$ points on the real line such that the sets of intervals containing these points are pairwise incomparable with respect to inclusion. 
    These are always points before the endpoint of some interval.
    For each of those points we have to decide for every subset of $\{1,\dots,d\}$ if vertices having this subset as label set are contained in the maximal cliques. There are $2^d$ many subsets. Therefore, there are $2^{2^d}$ different decisions and, thus, there are at most $2^{2^d}n$ many maximal cliques.
\end{proof}

\begin{theorem}\label{thm:maximal-cliques}
    Given a graph $G$ with $n$ vertices and a $d$-simultaneous interval representation of $G$, the maximal cliques of $G$ can be enumerated in time $\O(d\cdot 2^{2^d+2d} \cdot n \log n)$.
\end{theorem}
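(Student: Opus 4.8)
The plan is to convert the structural analysis behind \cref{thm:cliques} into a single left-to-right sweep over the given representation. First I would sort the $2n$ endpoints of the intervals $R(v)$ in $\O(n\log n)$ time and sweep a point from left to right, maintaining the \emph{active set} $A_p$ of all vertices whose interval contains the current sweep point $p$. As is standard for interval graphs, $A_p$ changes only at endpoints, and the inclusion-maximal active sets occur exactly at the at most $n$ points lying just to the left of a right endpoint; these are precisely the points singled out in the proof of \cref{thm:cliques}. Throughout the sweep I would keep the active vertices bucketed by their label set, so that at each such point $p$ I have at hand the family $\mathcal S_p\subseteq \mathcal P(\{1,\dots,d\})$ of label sets that actually occur on $A_p$ together with the corresponding buckets of vertices. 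Since $|\mathcal S_p|\le 2^d$, each insertion into or deletion from these buckets costs $\O(\log n)$, and the total bucket maintenance over the whole sweep is $\O(n\log n)$.

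The key reduction, extracted from the proof of \cref{thm:cliques}, is the \emph{all-or-nothing} property: two active vertices carrying the same nonempty label set are adjacent in $G$ to exactly the same vertices of $A_p$, so any clique contained in $A_p$ contains either all or none of the vertices of a given bucket. Consequently the maximal cliques of $G$ having $p$ as a common point correspond bijectively to the maximal cliques of the auxiliary graph $\Gamma_p$ whose vertices are the nonempty label sets of $\mathcal S_p$ and in which $S$ and $S'$ are adjacent iff $S\cap S'\neq\emptyset$; a maximal clique $K$ of $\Gamma_p$ yields the candidate $\bigcup_{S\in K}\{v\in A_p : L(v)=S\}$, which I would report compactly as the pair $(p,K)$ rather than as an explicit vertex list (this compact encoding is what keeps the running time sub-quadratic in $n$). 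Because $\Gamma_p$ has at most $2^d$ vertices, I would enumerate its maximal cliques by iterating over all $\le 2^{2^d}$ subsets of $\mathcal S_p$ and testing, for each, whether its label sets are pairwise intersecting and whether no further label set of $\mathcal S_p$ can be added. Each such test costs $\O((2^d)^2)=\O(4^d)$ pairwise intersection checks on sets of size at most $d$, i.e.\ $\O(d\,4^d)$ work per subset, hence $\O(2^{2^d}\cdot d\,4^d)=\O(d\,2^{2^d+2d})$ work per point.

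Correctness rests on the Helly property of intervals: every clique of $G$ has a common point, which may be taken to be one of the at most $n$ event points, so every maximal clique of $G$ is produced as some candidate. Aggregating over the $\le n$ points and attaching the $\O(\log n)$ overhead for the bucket and dictionary operations and for storing the at most $2^{2^d}n$ candidates gives the claimed bound $\O(d\cdot 2^{2^d+2d}\cdot n\log n)$. The main obstacle I anticipate is the final bookkeeping step: a candidate that is a maximal clique of $\Gamma_p$ (equivalently, maximal within the single active set $G[A_p]$) need not be maximal in $G$, since it may be extendable by a vertex whose interval misses $p$, and the same maximal clique may be witnessed at several event points. I would resolve both issues by retaining only the inclusion-maximal candidates, which is sound because any non-maximal candidate is a proper subset of a genuine maximal clique that is itself generated elsewhere in the sweep; the delicate part is arguing that this filtering and deduplication can be carried out on the compact $(p,K)$ encodings within the $\O(\log n)$ overhead, so that global maximality is certified without ever materializing the cliques as vertex sets.
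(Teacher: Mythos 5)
Your proposal is correct and is essentially the paper's proof with the two enumeration loops transposed: the paper first enumerates the at most $2^{2^d}$ pairwise-intersecting families of label sets and, for each such family, runs the off-the-shelf $\O(n\log n)$ maximal-clique algorithm for interval graphs on the induced subrepresentation, whereas you sweep the line once and brute-force the $2^{2^d}$ families at each of the at most $n$ event points; both rest on the same structural fact underlying \cref{thm:cliques} and give the same running time. The concern you flag --- that a clique maximal within one event point (respectively, within one label-set family) need not be maximal in $G$ and may be generated at several points, so a final inclusion-maximality filter and deduplication are needed --- is genuine, but it applies verbatim to the paper's own proof, which silently omits this step, so it should not be counted against you.
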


\begin{proof}
    We generate all binary vectors with $2^d$ many entries. Every entry stands for some label set. 
    We only keep those vectors where the label sets with entry 1 are pairwise non-disjoint.
    This can be checked in time $\O(d 2^{2d})$ per vector by comparing the label sets with entry 1 pairwise. 
    For every of the remaining binary vectors, we create the interval representation only containing the intervals whose vertices have a label set with entry 1. 
    This can be done in total time $\O(2^{2^d}\cdot n)$.
    Finally, we have to compute for every of those interval representations the set of maximal cliques which can be done in time $\O(n \log n)$~\cite{gupta1982efficient}.
\end{proof}

This result implies directly that we can compute a maximum-weight clique of a graph $G$ within the same time bound.

\begin{corollary}\label{thm:maximum-clique}
    Given a vertex-weighted graph $G$ with $n$ vertices and a $d$-simultaneous interval representation of $G$, we can find a maximum weight clique of $G$ in time $\O(d \cdot 2^{2^d+2d} \cdot n \log n)$.
\end{corollary}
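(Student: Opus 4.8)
The plan is to reduce the weighted maximum-clique problem directly to the enumeration algorithm of \Cref{thm:maximal-cliques}. The key observation is that a maximum-weight clique is always contained in some maximal clique, since vertex weights are nonnegative: if a clique $C$ is not maximal, we may add any vertex whose inclusion keeps $C$ a clique without decreasing the total weight $w(C)$. Therefore it suffices to enumerate all maximal cliques, compute the weight of each, and return one of maximum weight.

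First I would invoke \Cref{thm:maximal-cliques} to produce the list of all maximal cliques of $G$ in time $\O(d \cdot 2^{2^d+2d} \cdot n \log n)$; by \Cref{thm:cliques} there are at most $2^{2^d}\cdot n$ of them. Next, for each maximal clique in the list I would sum the weights of its vertices. The main point for the running time is that this weighting step must not dominate the enumeration cost. Naively, summing weights over each of up to $2^{2^d}\cdot n$ cliques, each of size up to $n$, would cost $\O(2^{2^d}\cdot n^2)$, which is worse than the claimed bound; so I would instead observe that the enumeration procedure of \cite{gupta1982efficient} can report each maximal clique incrementally (as consecutive maximal cliques in the interval orderings differ by a bounded number of vertices), allowing the weight of each clique to be maintained in time proportional to the size of the symmetric difference between successive cliques rather than to the full clique size. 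This keeps the total weight-bookkeeping within the same $\O(d \cdot 2^{2^d+2d} \cdot n \log n)$ budget.

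The main obstacle I anticipate is exactly this bookkeeping issue: making sure the overhead of computing and comparing the weights of the enumerated cliques stays within the stated time bound rather than introducing an extra factor of $n$. The cleanest way to handle this is to rely on the fact that within each of the $\O(2^{2^d})$ restricted interval representations built in the proof of \Cref{thm:maximal-cliques}, the maximal cliques are produced by a left-to-right sweep over the interval endpoints, so a running weight total can be updated in amortized $\O(\log n)$ time per endpoint event. Taking the maximum over all computed weights then adds only a constant factor per clique, so the overall complexity matches that of \Cref{thm:maximal-cliques}, giving the claimed bound of $\O(d \cdot 2^{2^d+2d} \cdot n \log n)$.
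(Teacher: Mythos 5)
Your proposal matches the paper's approach exactly: the paper derives this corollary directly from the maximal-clique enumeration of \cref{thm:maximal-cliques} with no further argument. Your extra care about the weight bookkeeping (maintaining a running weight total during the sweep within each of the $2^{2^d}$ restricted interval representations, rather than summing over each clique explicitly) addresses a legitimate detail that the paper glosses over, and it is the right fix.
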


Tsukiyama et al.~\cite{tsukiyama1977new} gave an algorithm that generates all maximal cliques in time $\O(n^3\mu)$ where $\mu$ is the number of maximal cliques. 
Using this algorithm, we can drop the requirement in \cref{thm:maximal-cliques,thm:maximum-clique} that the input graph is given together with a $d$-simultaneous interval representation.

\begin{theorem}
Given a vertex-weighted graph $G$ with $n$ vertices, we can find a list of all maximal cliques and a maximum weight clique of $G$ in time $\O(2^{2^{\si(G)}} \cdot n^3)$.
\end{theorem}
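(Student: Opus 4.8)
The plan is to treat this as a corollary that packages two facts already available: the representation-free maximal-clique enumeration of Tsukiyama et al.\ and the counting bound of \Cref{thm:cliques}. First I would run the algorithm of Tsukiyama et al.\ on $G$; this lists all maximal cliques of an $n$-vertex graph in time $\O(n^3\mu)$, where $\mu$ denotes the number of maximal cliques, and, crucially, it takes only the graph $G$ as input. This representation-freeness is exactly the point: it lets us drop the assumption present in \Cref{thm:maximal-cliques,thm:maximum-clique} that a $d$-simultaneous interval representation is supplied. Then I would invoke \Cref{thm:cliques} to bound $\mu \leq 2^{2^{\si(G)}}\cdot n$, and substitute this bound into the enumeration time to obtain the claimed running time for producing the complete list of maximal cliques.

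For the second task, finding a maximum-weight clique, I would reduce it to scanning the list just produced. Since the weights are taken from $\mathbb{Q}_+$, any clique can be greedily extended to a maximal clique without decreasing its total weight; hence some maximum-weight clique is maximal, and it suffices to compute $w(C)$ for each listed maximal clique $C$ and return one of maximum weight. Each such weight computation costs $\O(n)$ time, so this post-processing runs in $\O(n\mu)$ total, which is dominated by the enumeration step and therefore absorbed into the stated bound.

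There is no genuinely new idea required here, so the ``main obstacle'' is really a matter of correctly combining the ingredients rather than a technical difficulty. The two points to verify carefully are (i) that Tsukiyama's procedure is oblivious to any interval structure and so is applicable without first computing $\si(G)$ or a representation (indeed, $\si(G)$ need never be computed, which is fortunate since it is \NP-hard to determine), and (ii) that the passage from ``maximum-weight clique'' to ``maximum-weight \emph{maximal} clique'' is justified by the nonnegativity of the weight function assumed in the problem statement. With these observations in place, substituting the bound $\mu \leq 2^{2^{\si(G)}}\cdot n$ into the enumeration complexity yields both the list of all maximal cliques and a maximum-weight clique within the claimed time bound.
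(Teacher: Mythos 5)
Your proposal matches the paper's own (implicit) argument exactly: the theorem is presented as an immediate consequence of Tsukiyama et al.'s representation-free $\O(n^3\mu)$ enumeration algorithm combined with the bound $\mu \le 2^{2^{\si(G)}}\cdot n$ from \cref{thm:cliques}, and your weight-scanning post-processing is the obvious intended final step. The only caveat is that a literal substitution yields $\O(2^{2^{\si(G)}}\cdot n^4)$ rather than the stated $n^3$; this discrepancy is inherited from the paper's own statement and is not a defect introduced by your argument.
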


Let us remark that a faster dependency on $n$ (although still slower than quadratic in $n$) could be obtained by using some of the more recent maximal clique enumeration algorithms (see, e.g.,~\cite{MR3864713}). 

Note that the unweighted maximum clique problem is already \NP-hard for 2-unit interval graphs and 3-track interval graphs~\cite{francis2015maximum} while it is polynomial-time solvable for 2-track interval graphs. 
However, there is an \FPT{} algorithm for the clique problem on $d$-interval graphs when parameterized by $d$ plus solution size~\cite{fellows2009parameterized}.
 
Next we prove that the bound given in \cref{thm:cliques} is tight. To this end, we consider complements of matchings. As we have seen in \cref{lem:complement-matching-1}, the simultaneous interval number of those graphs is at least $\log_2(n-1)$ where $n$ is the number of vertices. Here, we show that this bound is tight. Let $M_m$ be the complement of a matching with $m$ edges. Gregory and Pullman~\cite{MR0676866} showed that $\lim_{m\to \infty}\frac{\eccn(M_m)}{\log_2(m)} = 1$. As we have seen in \cref{thm:si-eccn}, it holds that $\si(G) \leq \eccn(G)$. This implies the following result.

\begin{lemma}\label{lemma:matching-labels}
    For any $\varepsilon>0$, there exists some $n' \in \mathbb{N}$ such that for all even $n \geq n'$, the following holds: If $G$ is the complement of a matching with $n$ vertices, then $\si(G) \leq (1+\varepsilon)\log_2 n$.
\end{lemma}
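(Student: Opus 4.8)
The plan is to combine the asymptotic result of Gregory and Pullman on the edge clique cover number of complements of matchings with the bound $\si(G) \leq \eccn(G)$ from \Cref{thm:si-eccn}. Let $M_m$ denote the complement of a matching with $m$ edges, so that $M_m$ has $n = 2m$ vertices. Gregory and Pullman~\cite{MR0676866} established that
\[
\lim_{m\to\infty}\frac{\eccn(M_m)}{\log_2 m} = 1,
\]
which says precisely that for every $\delta > 0$ there is some threshold $m_0$ such that for all $m \geq m_0$ we have $\eccn(M_m) \leq (1+\delta)\log_2 m$.

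First I would fix an arbitrary $\varepsilon > 0$ and translate the stated quantity $(1+\varepsilon)\log_2 n$ from the variable $n$ (the number of vertices) to the variable $m$ (the number of matching edges) used in the Gregory--Pullman statement. Since $n = 2m$, we have $\log_2 n = \log_2 m + 1$, so the target bound $(1+\varepsilon)\log_2 n = (1+\varepsilon)(\log_2 m + 1)$ grows strictly faster than $(1+\delta)\log_2 m$ for a suitably small $\delta$ and large $m$. Concretely, I would pick $\delta$ (for instance $\delta = \varepsilon/2$) and choose the threshold large enough that the additive slack coming from the $+1$ and the gap between $(1+\delta)$ and $(1+\varepsilon)$ absorbs any discrepancy; then define $n'$ to be twice the corresponding threshold $m_0$ (rounded up to an even integer). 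For all even $n \geq n'$, writing $m = n/2$, we get
\[
\si(M_m) \leq \eccn(M_m) \leq (1+\delta)\log_2 m \leq (1+\varepsilon)\log_2 n,
\]
where the first inequality is \Cref{thm:si-eccn} and the last is the elementary estimate just described.

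The argument is essentially a packaging of an already-cited limit, so there is no genuinely hard step; the only point requiring a little care is the change of variables between the number of vertices and the number of edges together with the choice of constants, so that the $\log_2 m$ bound of Gregory and Pullman is correctly converted into a $\log_2 n$ bound with the prescribed multiplicative constant $(1+\varepsilon)$. I would make sure the chosen $\delta$ and $n'$ leave enough room for both the additive $+1$ term and the multiplicative gap, which is routine once the inequalities are written out as above.
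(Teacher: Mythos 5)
Your proposal is correct and follows exactly the paper's own argument: the paper derives this lemma by combining the Gregory--Pullman limit $\lim_{m\to\infty}\eccn(M_m)/\log_2 m = 1$ with the bound $\si(G)\le\eccn(G)$ from \cref{thm:si-eccn}, leaving the change of variables from $m$ to $n=2m$ implicit. Your careful handling of that change of variables and the choice of $\delta$ and $n'$ is the only detail the paper omits, and it is handled correctly.
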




Using this result, we are able to prove that the bound given in \cref{thm:cliques} is tight.

\begin{theorem}\label{thm:comatch}
     For any $\varepsilon$ with $0 < \varepsilon < 1$ and any $k \in \N$, there is an infinite family $\cal F$ of graphs such that any graph $G \in \cal F$ with $n$ vertices has at least $2^{2^{(1-\varepsilon)\si(G)}} \cdot n^k$ many maximal cliques.
\end{theorem}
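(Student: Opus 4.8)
The plan is to use the complements of matchings as the witnessing family $\cal F$, since these graphs are simultaneously easy to analyze combinatorially and have a simultaneous interval number that is logarithmic in the number of vertices, as established in \cref{lem:complement-matching-1,lemma:matching-labels}. First I would count the maximal cliques of a complement of a matching $G$ on $n$ vertices. Writing the vertex set as pairs $\{a_1,b_1\},\dots,\{a_{n/2},b_{n/2}\}$ whose only non-edges are the pairs themselves, any clique contains at most one vertex of each pair, and a clique is maximal precisely when it contains exactly one vertex of each pair. Hence $G$ has exactly $2^{n/2}$ maximal cliques, all of size $n/2$.

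Next I would control the simultaneous interval number from above. By \cref{lemma:matching-labels}, for every $\varepsilon' > 0$ there is a threshold beyond which $\si(G) \leq (1+\varepsilon')\log_2 n$. The key observation is that, for fixed $n$, the target quantity $2^{2^{(1-\varepsilon)\si(G)}}\cdot n^k$ is monotonically increasing in $\si(G)$ (since $1-\varepsilon>0$), so it suffices to bound it from above using this estimate for $\si(G)$. Substituting gives
\[
2^{2^{(1-\varepsilon)\si(G)}} \cdot n^k \;\le\; 2^{\,n^{(1-\varepsilon)(1+\varepsilon')}} \cdot n^k .
\]
Since $1-\varepsilon < 1$, I would then fix $\varepsilon'$ small enough that $(1-\varepsilon)(1+\varepsilon') \le 1 - \varepsilon/2$; for instance $\varepsilon' = \varepsilon/(2(1-\varepsilon))$ works, and the exponent on $n$ in the double exponential becomes at most $n^{1-\varepsilon/2}$.

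Finally I would compare this with the exact count $2^{n/2}$ of maximal cliques. The desired inequality $2^{n/2} \ge 2^{\,n^{1-\varepsilon/2}}\cdot n^k$ reduces, upon taking logarithms, to $n/2 \ge n^{1-\varepsilon/2} + k\log_2 n$, which holds for all sufficiently large $n$ because the linear term $n/2$ eventually dominates the sublinear term $n^{1-\varepsilon/2}$ together with the logarithmic term $k\log_2 n$. Taking $\cal F$ to consist of all complements of matchings on $n$ vertices with $n$ even and exceeding a threshold $n_0 = n_0(\varepsilon,k)$ then produces the required infinite family.

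The subtle point — and the reason the theorem is phrased with the factor $(1-\varepsilon)$ rather than claiming exact tightness of \cref{thm:cliques} — is exactly this final comparison. If one plugged $\si(G)\approx\log_2 n$ directly into $2^{2^{\si(G)}}$ one would obtain roughly $2^{n}$, which overshoots the true count $2^{n/2}$; the factor $(1-\varepsilon)$ is precisely what brings the double exponential down from $2^{n}$ to the sublinear-exponent regime $2^{n^{1-\varepsilon/2}}$. Thus the main thing to watch is the interplay of the two approximations: one must choose $\varepsilon'$ only after $\varepsilon$ and $k$ are fixed, so that the strict slack $(1-\varepsilon)(1+\varepsilon') < 1$ survives, and then choose $n_0$ large enough to absorb both the error term from \cref{lemma:matching-labels} and the polynomial factor $n^k$.
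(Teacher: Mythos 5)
Your proposal is correct and follows essentially the same route as the paper: take complements of matchings, count their $2^{n/2}$ maximal cliques, bound $\si(G)$ from above via \cref{lemma:matching-labels}, and verify that $2^{2^{(1-\varepsilon)\si(G)}}\cdot n^k$ stays below $2^{n/2}$ for large even $n$. The only cosmetic difference is your choice of the auxiliary slack ($\varepsilon' = \varepsilon/(2(1-\varepsilon))$, yielding exponent $1-\varepsilon/2$) where the paper simply takes $\varepsilon'=\varepsilon$ and gets exponent $1-\varepsilon^2$; both work equally well.
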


\begin{proof}
    Let $n'$ be chosen as in  \cref{lemma:matching-labels}. Since $n^k \in o(2^{n^{1-\varepsilon^2}})$ and $n^{1-\varepsilon^2} \in o(n)$, there is an integer $n'' \geq n'$ such that for all integers $n \geq n''$ it holds that $n^k \leq 2^{n^{1-\varepsilon^2}}$ and $2^{2n^{1-\varepsilon^2}} \leq 2^{\frac n2}$.
    Let $n \geq n''$ be an even number and let $G$ be a complement of a matching with $n$ vertices.
    Then it holds:
    \begin{align*}
        2^{2^{(1-\varepsilon)\si(G)}} \cdot n^k &\leq 2^{2^{(1-\varepsilon)(1+\varepsilon)\log_2(n)}} \cdot n^k\\
                                                &= 2^{n^{1-\varepsilon^2}} \cdot n^k \\
                                                &\leq 2^{n^{1-\varepsilon^2}} \cdot 2^{n^{1-\varepsilon^2}} \\
                                                &= 2^{2n^{1-\varepsilon^2}} \\
                                                &\leq 2^{\frac n2}
    \end{align*}
     As $2^{\frac n2}$ is the number of maximal cliques of $G$, this proves the theorem.
\end{proof}

This result shows that the bound on the running time of our approach for the Maximum Clique problem cannot be significantly improved. 
Furthermore, the following result shows that the Maximum Clique problem cannot be solved with a single-exponential \FPT{} algorithm parameterized by the simultaneous interval number.

\begin{theorem}
    Unless $\P = \NP$, for any fixed $k \in \N$ there is no algorithm that solves the Maximum Clique problem on complements of cubic graphs with $n$ vertices in time $2^{\O(\si)} n^k$. 
\end{theorem}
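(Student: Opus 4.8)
The plan is to show that the Maximum Clique problem on complements of cubic graphs is \NP-hard, while at the same time the simultaneous interval number of every such graph is only $\O(\log n)$. A hypothetical $2^{\O(\si)}n^k$ algorithm would then run in polynomial time on exactly these instances, forcing $\P = \NP$.

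First I would record the \NP-hardness. A clique in the complement $\overline{H}$ of a graph $H$ is precisely an independent set in $H$, so Maximum Clique restricted to complements of cubic graphs is equivalent to Maximum Independent Set on cubic graphs, which is well known to be \NP-hard. This already supplies an infinite supply of \NP-hard instances, all of which are complements of cubic graphs.

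The core step is to bound $\si(\overline{H})$ for a cubic graph $H$ on $n$ vertices. By \Cref{thm:si-eccn} it suffices to bound $\eccn(\overline{H})$, i.e.\ to cover every edge of $\overline{H}$ (every non-edge of $H$) by a clique of $\overline{H}$ (an independent set of $H$). I would do this probabilistically, with $\Delta=3$ the maximum degree: sample a random independent set by activating each vertex independently with probability $p = \Theta(1/\Delta)$ and then keeping the activated vertices that have no activated neighbour. For a fixed non-edge $\{u,v\}$ of $H$ the events ``$u,v$ activated'' and ``no vertex of $N(u)\cup N(v)$ activated'' depend on disjoint sets of vertices (since $u,v$ are non-adjacent, neither lies in $N(u)\cup N(v)$), so both $u$ and $v$ survive into the sampled set with probability at least $p^2(1-p)^{2\Delta} = \Omega(1/\Delta^2)$. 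Taking $d = \O(\Delta^2\log n)$ independent samples and applying a union bound over the fewer than $n^2$ non-edges shows that with positive probability every non-edge is covered, hence $\eccn(\overline{H}) = \O(\Delta^2\log n)$. For cubic $H$ this is $\O(\log n)$, and therefore $\si(\overline{H}) \le \eccn(\overline{H}) = \O(\log n)$.

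Finally I would combine the two bounds. If some algorithm solved Maximum Clique on complements of cubic graphs in time $2^{\O(\si)}n^k$, then on input $\overline{H}$ with $\si(\overline{H}) \le C\log_2 n$ its running time would be $2^{\O(\log n)}n^k = n^{\O(1)}$, i.e.\ polynomial; since the problem is \NP-hard, this would yield $\P=\NP$. The main obstacle is the covering estimate $\eccn(\overline{H}) = \O(\log n)$ for bounded-degree complements: once the activation probability is tuned to $\Theta(1/\Delta)$ and the disjointness of the relevant events is verified, the bound follows from a routine union bound, and the reduction to $\P = \NP$ is then immediate.
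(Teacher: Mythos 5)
Your proposal is correct and follows essentially the same route as the paper: bound $\si$ of complements of cubic graphs by $\O(\log n)$ via the edge clique cover number and \cref{thm:si-eccn}, then invoke the \NP-hardness of Maximum Independent Set on cubic graphs to conclude that a $2^{\O(\si)}n^k$ algorithm would run in polynomial time on these instances and hence imply $\P=\NP$. The only difference is that where the paper simply cites Alon's theorem for the bound $\eccn(\overline{H})=\O(\Delta^2\log n)$, you re-derive it with a correct probabilistic sampling argument, which makes the proof self-contained.
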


\begin{proof}
    Alon~\cite{alon1986covering} showed that complements of cubic graphs have edge clique cover number of at most $c \log n$ for some constant $c$. 
    Due to \cref{thm:si-eccn}, their simultaneous interval number is also at most $c \log n$.
    Suppose for a contradiction that for some $k\in \N$, there is a $2^{\O(\si(G))} n^k$ algorithm for the Maximum Clique problem on complements of cubic graphs.
    Then we could solve the Maximum Clique problem on the complements of cubic graphs in polynomial time.
    However, as Mohar~\cite{mohar2001face} showed, the Maximum Independent Set problem is \NP-complete on cubic graphs, implying that the Maximum Clique problem is \NP-hard on their complements.
\end{proof}

Note that the above result does not rule out the possibility that it may be possible to solve the Maximum Clique problem in time $2^{\O(d)}n^k$ when a $d$-simultaneous interval representation of the graph is given. 

As graphs with bounded simultaneous interval number are $(\pw, \omega)$-bounded (\cref{thm:binding}), we can use the results from Chaplick et al.~\cite[Theorem~11]{chaplick2021topological} to show that the clique problem admits an \FPT{} algorithm when parameterized by the simultaneous interval number plus solution size.

\begin{theorem}
Given an $n$-vertex graph $G$ and an integer $k$, it can be determined in time $2^{\O(\si(G)k)} n$ whether $G$ contains a clique of size $k$.
\end{theorem}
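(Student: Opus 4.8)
The plan is to reduce this to the result of Chaplick et al.\ (Theorem~11 of \cite{chaplick2021topological}), which gives an \FPT{} clique algorithm parameterized by solution size $k$ for any graph class that is $(\tw,\omega)$-bounded (equivalently, has the clique-treewidth property). The key observation is that \Cref{thm:binding} already establishes that graphs with bounded simultaneous interval number are $(\pw,\omega)$-bounded, with the explicit bound $\pw(G')\le \si(G')\,\omega(G')-1$ for every induced subgraph $G'$ of $G$. Since $\tw(G')\le \pw(G')$ always holds, this immediately yields $(\tw,\omega)$-boundedness with the linear binding function $f(\omega)=\si(G)\cdot\omega$, where I use the monotonicity fact that $\si(G')\le\si(G)$ for any induced subgraph $G'$ (an induced subgraph inherits a simultaneous interval representation by restriction). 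So the first step is to record this binding function explicitly.

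The second step is to instantiate the Chaplick et al.\ algorithm. Their theorem states that for a $(\tw,\omega)$-bounded class with binding function $f$, one can decide in \FPT{} time whether an input graph contains a clique of size $k$; the running time is governed by the treewidth of the relevant induced subgraphs, which by the binding function is at most $f(k)$ when we are searching for a clique of size $k$ (any clique of size $k$ lives inside a subgraph of clique number at most $k$, or one can bound the treewidth of the whole graph in terms of $k$ once $\omega$ is controlled by the target size). With $f(k)=\si(G)\cdot k$, a treewidth-based dynamic program for clique runs in time single-exponential in the treewidth times a polynomial in $n$, i.e.\ $2^{\O(\si(G)k)}\cdot n$. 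The plan is therefore to plug $f(\omega)=\si(G)\cdot\omega$ into their running-time bound and simplify.

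I would phrase the proof as follows: fix the $\si(G)$-simultaneous interval representation and invoke \Cref{thm:binding} to get that the class of induced subgraphs of $G$ satisfies $\pw\le \si(G)\cdot\omega - 1$, hence $\tw\le \si(G)\cdot\omega$. Then apply \cite[Theorem~11]{chaplick2021topological} with binding function $f(\omega)=\si(G)\cdot\omega$; their algorithm either finds a $k$-clique or correctly reports none exists, in time $2^{\O(f(k))}\cdot n = 2^{\O(\si(G)k)}\cdot n$.

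The main obstacle I anticipate is matching the interface of the cited theorem precisely: one must verify that Chaplick et al.'s result takes the binding function as a parameter in the way needed, and that its running time is genuinely single-exponential in $f(k)$ with a linear dependence on $n$ (rather than, say, $f(\omega(G))$ for the global clique number, which could be much larger than $k$). The crucial point to get right is that the algorithm works \emph{relative to the target size $k$} — it suffices to control the treewidth of subgraphs whose clique number is bounded by $k$, so the relevant treewidth bound is $\si(G)\cdot k$ rather than $\si(G)\cdot\omega(G)$. Provided the cited theorem is stated in this size-parameterized form (which is standard for such clique-finding algorithms, since a $k$-clique certificate is a subgraph of clique number exactly $k$), the rest is a direct substitution and the proof is short.
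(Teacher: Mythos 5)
Your proposal matches the paper's argument exactly: the paper likewise derives the result by combining the $(\pw,\omega)$-boundedness from \Cref{thm:binding} (hence $(\tw,\omega)$-boundedness with linear binding function $f(\omega)=\si(G)\cdot\omega$) with Theorem~11 of Chaplick et al.~\cite{chaplick2021topological}, and gives no further detail. Your additional remarks on the heredity of $\si$ under induced subgraphs and on verifying the interface of the cited theorem are sensible elaborations of the same one-line proof.
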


\section{Coloring}\label{sec:coloring}

Circular-arc graphs have linear mim-width at most~$2$~\cite[Lemma 4]{belmonte2013graph}, path-independence number at most~2~\cite[proof of Theorem 4.5]{milanic2022tree} and track number at most~2. 
Since the Coloring problem is \NP-hard on circular-arc graphs~\cite{garey1980complexity}, the same holds for graphs whose linear mim-width, path-independence number, and track number are at most 2. 
This result does not transfer directly to the simultaneous interval number, as the simultaneous interval number of complements of matchings and, thus, of circular-arc graphs is unbounded, due to \cref{lem:complement-matching-1}. 
Nevertheless, we can adapt a proof for the \NP-hardness of the Coloring problem on circular-arc graphs given by Marx~\cite{marx2003short} to the case of graphs of simultaneous interval number 2. 
This proof uses the following definitions and results.

\begin{problem}[Disjoint Paths]~
\begin{description}
\item \textbf{Input:} Directed graphs $G$ and $H$ on the same vertex set.
\item \textbf{Question:} Are there paths $P_e$ in $G$ for each $e \in E(H)$ such that these paths
are edge disjoint and path $P_e$ together with edge $e$ forms a directed cycle?
\end{description}    
\end{problem}

Given a directed graph $G = (V,E)$, the \emph{in-degree} (resp.~\emph{out-degree}) of a vertex $v\in V$ in $G$ is the number of directed edges $(x,y)\in E$ such that $v = y$ (resp.~$v = x$), and the \emph{degree} $d_G(v)$ of $v$ in $G$ is the number of directed edges $(x,y)\in E$ such that $v \in \{x,y\}$.
A directed graph $G = (V,E)$ is \emph{Eulerian} if for each vertex $v\in V$, the in-degree of $v$ equals its out-degree.

\begin{theorem}[Vygen \cite{vygen1995np}]\label{thm:disjoint-paths-np}
    The Disjoint Paths problem remains \NP-complete even if $G$ is acyclic and $G + H$ is Eulerian.
\end{theorem}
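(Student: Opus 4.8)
The plan is to prove \NP-completeness by reduction from $3$-SAT; membership in \NP{} is immediate, since a certificate consists of the paths $P_e$ themselves, and one can check in polynomial time that they are pairwise edge-disjoint and that each closes a directed cycle with its edge $e$. The first thing I would record is a reformulation: a demand edge $e = (s,t) \in E(H)$ must be matched by a directed path $P_e$ in $G$ running from $t$ to $s$, and the demand paths must be pairwise edge-disjoint. Thus the problem is exactly the directed edge-disjoint paths problem with terminal pairs prescribed by $E(H)$, subject to the extra promise that $G$ is acyclic and $G+H$ is Eulerian. Given a $3$-SAT formula $\varphi$, I would construct $G$ and $H$ so that the routing of the demand paths encodes a truth assignment, with logical consistency enforced purely by edge-disjointness.

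For each variable $x$ I would introduce a demand edge in $H$ whose only two routing options in $G$ are two internally edge-disjoint \emph{rails}, a true-rail and a false-rail; committing the demand to one rail frees the other rail's edges for use by clause gadgets. For each clause I would introduce a demand whose path can be completed only if at least one of its three literals has been left routable by the corresponding variable choice, so that all demands are simultaneously routable if and only if $\varphi$ is satisfiable. To keep $G$ acyclic I would lay the entire construction out in topological layers, directing every supply arc from a lower layer to a higher one; the ``backward'' connections needed to close the cycles are supplied by the demand edges of $H$, which are not part of $G$ and therefore cannot create a directed cycle in $G$.

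The delicate part -- and the step I expect to be the main obstacle -- is forcing $G+H$ to be Eulerian while preserving the logic of the gadgets. Since a feasible solution uses each supply arc at most once and each demand arc exactly once, the balance of in- and out-degrees at every vertex has to be built into the gadgets by design, which severely limits the freedom to insert auxiliary arcs. I would handle this by pairing arcs so that each gadget is internally degree-balanced in the combined multigraph, and by adding \emph{balancing demands} that have a unique forced routing, so that they restore Eulerian balance without interfering with the encoding. Finally, I would verify both directions of the equivalence: a satisfying assignment yields a consistent edge-disjoint routing of all demands, while any feasible routing, read off from the rails selected at the variable gadgets, yields an assignment satisfying every clause. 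Combined with the layered (hence acyclic) and degree-balanced (hence Eulerian) construction, this establishes the claim.
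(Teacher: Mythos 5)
This statement is a cited result of Vygen; the paper gives no proof of it, so there is nothing internal to compare your argument against. Judged on its own, your proposal is a plan rather than a proof: it names the standard architecture of an edge-disjoint-paths hardness reduction (variable rails, clause gadgets, layered layout for acyclicity), but the one genuinely hard step --- making $G+H$ Eulerian without destroying the logic of the gadgets --- is exactly the step you defer. Saying you would ``pair arcs so that each gadget is internally degree-balanced'' and ``add balancing demands with a unique forced routing'' is a statement of intent, not a construction; until concrete gadgets are exhibited and both directions of the equivalence are checked against them, the theorem is not established.

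There is also a structural obstacle your plan does not confront. As the paper's Lemma~\ref{lemma:disjoint-paths-cover} (due to Marx) records, when $G$ is acyclic and $G+H$ is Eulerian, \emph{every} feasible solution must use \emph{every} edge of $G$: the demand paths together with the demand edges decompose all of $E(G)\cup E(H)$ into directed cycles. Consequently your picture of a variable demand ``committing to one rail and freeing the other rail's edges'' is misleading --- the other rail is not optional slack, it must be completely consumed by other demands in any solution, whether or not the corresponding literal is used to satisfy a clause. Designing gadgets in which every supply arc is covered in every solution, the coverage pattern still encodes a consistent truth assignment, and the degree balance holds at every vertex is precisely the content of Vygen's construction, and it is the part that is missing here. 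So the proposal identifies the right target but does not yet contain a proof.
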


\begin{lemma}[Marx~\cite{marx2003short}]\label{lemma:disjoint-paths-cover}
    If $G + H$ is Eulerian and $G$ is acyclic, then every solution of the Disjoint Path problem given $G$ and $H$ uses every edge of $G$.
\end{lemma}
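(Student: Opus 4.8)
The plan is to argue by flow conservation (degree counting): I would show that the edges of $G$ left unused by any solution must form a nonempty balanced subdigraph of the acyclic graph $G$, which is impossible.

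First I would unpack the structure of a solution. For each edge $e \in E(H)$ the solution provides a path $P_e$ in $G$ such that $P_e$ together with $e$ closes a directed cycle $C_e$, and the paths $P_e$ are pairwise edge-disjoint in $G$. Let $G'$ be the sub-multigraph of $G$ consisting of all $G$-edges lying on some $P_e$; by edge-disjointness each such edge is used exactly once. I would then examine the union $\bigcup_{e \in E(H)} C_e$. Each edge of $H$ lies on exactly one such cycle, namely $C_e$, since every path $P_{e'}$ consists solely of $G$-edges; and each edge of $G'$ lies on exactly one cycle, again by edge-disjointness. Hence, viewed as a directed multigraph, this union equals $G' + H$ and is an edge-disjoint union of directed cycles.

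The key observation is that an edge-disjoint union of directed cycles is \emph{balanced}: at every vertex the in-degree equals the out-degree, because each cycle contributes equally to both. Thus $G' + H$ is Eulerian in the sense defined above. By hypothesis $G + H$ is also Eulerian, and all of $H$ appears in both, so $(G+H) - (G'+H) = G - G'$, the set of $G$-edges \emph{not} used by the solution, is balanced as well, since deleting a balanced sub-multigraph from a balanced multigraph preserves the equality of in- and out-degrees at every vertex.

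Finally I would invoke the standard fact that a nonempty balanced directed multigraph must contain a directed cycle: starting from any edge and repeatedly leaving each reached vertex along an as-yet-unused outgoing edge (which exists by the balance condition) must eventually revisit a vertex and thereby close a cycle. But $G - G'$ is a subgraph of the acyclic graph $G$, so it cannot contain a directed cycle. Hence $G - G'$ has no edges, i.e.\ $G' = G$, and the solution uses every edge of $G$. The only delicate point is the bookkeeping in the second step, namely verifying that the union of the cycles is exactly $G' + H$ with every edge of multiplicity one; once the balance of $G' + H$ is established, the remainder follows immediately from acyclicity.
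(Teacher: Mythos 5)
Your argument is correct and is exactly the standard degree-counting proof: the cycles $P_e + e$ form a balanced sub-multigraph $G' + H$ of the Eulerian multigraph $G + H$, so the unused part $G - G'$ is balanced, and a nonempty balanced digraph would contain a directed cycle, contradicting the acyclicity of $G$. The paper does not reprove this lemma but imports it from Marx, and your proof coincides with the one given there.
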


\begin{lemma}
    The Disjoint Paths problem remains \NP-complete even if $G$ is acyclic, $G + H$ is Eulerian, and every vertex in $H$ has degree at most one.
\end{lemma}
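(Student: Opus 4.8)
The plan is to reduce from the version of Disjoint Paths shown to be \NP-complete in \cref{thm:disjoint-paths-np}, in which $G$ is acyclic and $G+H$ is Eulerian, and to enforce the extra degree condition on $H$ by a local vertex-splitting gadget that peels off one demand edge at a time. Membership in \NP{} is immediate, since a candidate family of paths can be checked for edge-disjointness and for closing the required cycles in polynomial time, so I focus on the hardness reduction.

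Given an instance $(G,H)$ with $G$ acyclic and $G+H$ Eulerian, I would repeatedly apply the following operation to any vertex $v$ with $d_H(v)\ge 2$. Pick one edge $e$ of $H$ incident with $v$ and introduce a fresh vertex $v'$. If $e=(v,w)$ is an out-edge of $v$ in $H$, replace $e$ by the edge $(v',w)$ in $H$ and add the arc $(v,v')$ to $G$; if $e=(u,v)$ is an in-edge of $v$ in $H$, replace $e$ by $(u,v')$ in $H$ and add the arc $(v',v)$ to $G$. Each step moves exactly one demand edge off $v$ onto a brand-new vertex of $H$-degree $1$ and leaves the $H$-degrees of all other original vertices untouched, so after at most $\sum_{v} d_H(v) = 2|E(H)|$ steps every vertex has $H$-degree at most $1$; the whole construction is plainly polynomial.

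The three invariants are maintained by design, and verifying this is the crux of the argument. Acyclicity of $G$ survives because the new vertex $v'$ is a sink of $G$ (out-edge case) or a source of $G$ (in-edge case), so the single new arc cannot close a directed cycle. The Eulerian property of $G+H$ is preserved locally: deleting $e$ lowers the out-degree (resp.\ in-degree) of $v$ in $G+H$ by one, and the single added arc restores it, while $v'$ is balanced because it carries exactly one $H$-edge and exactly one oppositely oriented $G$-arc. For equivalence, observe that $v'$ is incident in $G$ only to the forced arc $(v,v')$ (resp.\ $(v',v)$), so any routing path can reach $v'$ only as the terminal vertex of $P_e$ (resp.\ leave $v'$ only as the initial vertex of $P_e$); hence a $w$-to-$v'$ path in the new graph is precisely a $w$-to-$v$ path in the old graph followed by the forced arc, and symmetrically in the in-edge case. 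This gives a bijection between solutions of the two instances. By \cref{lemma:disjoint-paths-cover}, every solution of the new (still acyclic and Eulerian) instance must use every arc of $G$, which reconfirms that the forced arc is indeed traversed by $P_e$.

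The main obstacle is precisely to rebalance the Eulerian condition after detaching a demand edge without either destroying acyclicity or introducing spurious routing freedom. Attaching a single arc to a fresh source/sink vertex is what simultaneously settles all three requirements, and the fact that this arc is usable by $P_e$ alone is what upgrades the construction from a one-directional reduction to a genuine equivalence.
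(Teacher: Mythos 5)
Your reduction is correct and is essentially the same as the paper's: the paper also splits one demand edge at a time off a vertex of $H$-degree at least two onto a fresh vertex joined by a single forced $G$-arc, checks acyclicity and the Eulerian balance locally, and argues the solution correspondence via the forced arc (the paper organizes the iteration as an induction on $\xi(H)=\sum_v \max\{0,d_H(v)-1\}$, but that is only a bookkeeping difference). No gaps.
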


\begin{proof}
    We show that for every instance of the disjoint path problem there is an equivalent instance where every vertex is incident to at most one edge in $H$. 
    Let $\xi(H) := \sum_{v \in V} \max\{0,d_H(v) - 1\}$. 
    We prove our claim by induction on $\xi(H)$.
    
    If $\xi(H) = 0$, then for every vertex $v \in V$ it holds that $d_H(v) \leq 1$ and we are done. Thus, we may assume that for all instances with $\xi(H) \leq i$ the claim holds true. Let $(G,H)$ be an instance with $\xi(H) = i+1$ and let $w$ be a vertex with $d_H(w) \geq 2$. 
    
    First assume that there is an edge $(v,w) \in E(H)$. Then we insert a vertex $w'$ to $G$ and $H$. Furthermore, we add the edge $(w',w)$ to $G$ and we replace $(v,w)$ in $H$ by $(v,w')$. We call the newly obtained graphs $G'$ and $H'$.
    The graph $G'$ is acyclic since $w'$ is only incident to one edge in $G'$ and, thus, cannot be part of a cycle. 
    Furthermore, the graph $G' + H'$ is Eulerian since $v$ has lost one outgoing edge and obtained a new outgoing edge, $w$ has lost one incoming edge and obtained a new incoming edge, and $w'$ has one outgoing edge and one incoming edge. 

    If there is a solution of the disjoint path problem for the instance $(G,H)$, then we can replace the path $P_{(v,w)}$ by the path $(w',w) + P_{(v,w)}$ and this path closes a cycle with edge $(v,w')$. 
    On the other hand, if $(G',H')$ has a solution, then the path $P_{(v,w')}$ contains the edge $(w',w)$ and removing this edge leads to a path that closes a cycle with edge $(v,w)$. 
    Therefore, the instance $(G',H')$ is equivalent to the instance $(G,H)$. Since $\xi(H') = i$, there is an instance $(G'', H'')$ with $\xi(H'') = 0$ that is equivalent to $(G',H')$ and, thus, to $(G,H)$.

    If there is no edge $(v,w) \in E(H)$, then there must be an edge $(w,v) \in E(H)$. 
    In that case we add a vertex $w'$ to both $G$ and $H$, add the edge $(w,w')$ to $G$ and replace the edge $(w,v)$ in $H$ by $(w',v)$. 
    With the same argumentation as above it follows that the claim holds true.

 Since the desired equivalent instance can be computed from $H$ in polynomial time, by performing at most $\xi(H)$ of the above modification steps, the lemma follows.
\end{proof}

\begin{theorem}\label{thm:coloring-np}
The Coloring problem is \NP-complete on graphs $G$ with $\si(G) \leq 2$ even if a $2$-simultaneous interval representation of $G$ is given.
\end{theorem}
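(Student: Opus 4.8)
The plan is to adapt Marx's proof of \NP-hardness of Coloring on circular-arc graphs by reducing from the restricted Disjoint Paths problem established in the preceding lemma, namely the variant where $G$ is acyclic, $G+H$ is Eulerian, and every vertex of $H$ has degree at most one. The central idea is that colorings of a carefully built graph will correspond to solutions of the Disjoint Paths instance, and the extra structural restriction on $H$ (every vertex has degree at most one) is precisely what will allow the construction to be realized with simultaneous interval number $2$ rather than as a general circular-arc graph. Recall from \Cref{lemma:disjoint-paths-cover} that, under the Eulerian plus acyclic hypotheses, every solution must use \emph{every} edge of $G$; this rigidity is what makes the color-to-path correspondence exact.

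First I would describe the construction. For each edge of the acyclic graph $G$ I would introduce a gadget vertex (or small set of vertices), and arrange the intervals along the real line following a topological order of $G$, so that the acyclicity translates into a consistent left-to-right layout. The two labels $\{1\}$ and $\{2\}$ (together with $\{1,2\}$) would be used exactly as in the examples of \cref{fig:si-examples}: the "wrap-around" behaviour of a circular-arc model, which Marx exploits, can be simulated in a $2$-simultaneous interval model because two intervals that are far apart on the line can still be made adjacent by sharing a label, while two overlapping intervals can be made non-adjacent by giving them disjoint label sets. This is the same mechanism that lets cycles of arbitrary length have simultaneous interval number $2$. I would set the number of available colors so that a proper coloring forces the gadgets to "route" colors consistently, and then argue that a valid routing of colors is equivalent to an edge-disjoint collection of paths $P_e$ closing cycles with the edges $e\in E(H)$.

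The two directions of the correspondence would then be verified: from a solution of the Disjoint Paths instance I would read off a proper coloring by assigning to each edge-gadget the color corresponding to which path passes through it (using that each $H$-vertex has degree at most one so that the demands do not collide at a single vertex), and conversely from a proper coloring I would extract edge-disjoint paths, invoking \Cref{lemma:disjoint-paths-cover} to guarantee that all edges of $G$ are covered and hence that the extracted paths form a genuine solution. Throughout, I would keep the number of colors polynomially bounded in the instance size so the reduction is polynomial, and membership in \NP{} is immediate since a coloring is a polynomial-size certificate.

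The main obstacle I anticipate is the simultaneous-interval realizability step: I must exhibit an explicit $2$-simultaneous interval representation $(R,L)$ of the constructed graph and \emph{prove} that its adjacencies are exactly the intended ones. Marx's circular-arc construction relies on the cyclic topology of the line, so the delicate part is replacing each "arc crossing the wrap-around point" by an interval pair whose adjacencies are controlled by the labels in $\{1\},\{2\},\{1,2\}$ without introducing spurious edges or destroying required ones. The degree-at-most-one restriction on $H$ is the key lever here, since it limits how many demand-endpoints must be simultaneously encoded near any point of the line, keeping the label bookkeeping within two labels; verifying that this restriction genuinely suffices, and that the representation is given explicitly (so that the hardness holds even when the representation is part of the input), is where the bulk of the technical care will go.
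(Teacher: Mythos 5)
Your high-level plan coincides with the paper's: reduce from the restricted Disjoint Paths instance ($G$ acyclic, $G+H$ Eulerian, $d_H(v)\le 1$), lay intervals out along a topological order of $G$, use the labels $\{1\}$, $\{2\}$, $\{1,2\}$ to simulate the wrap-around of Marx's circular-arc construction, and invoke \cref{lemma:disjoint-paths-cover} to get the exact correspondence between color classes and cycles. However, what you have written is a plan rather than a proof: the entire technical core --- which you yourself flag as ``where the bulk of the technical care will go'' --- is left unexecuted. Concretely, you never specify the gadget, and without it neither direction of the equivalence can be checked. The paper's construction is: fix the topological order $v_1,\dots,v_n$, set $k=|E(H)|$ as the number of colors, represent each edge $(v_i,v_j)\in E(G)$ by the interval $(i,j)$ with label set $\{1\}$, and represent each edge $(v_i,v_j)\in E(H)$ (where necessarily $i>j$) by \emph{three} intervals: $(0,j)$ and $(i,n+1)$ with label set $\{1,2\}$ and $(j,i)$ with label set $\{2\}$.

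The missing content is not just the gadget but the two counting arguments that make the backward direction work, and these are exactly where the hypotheses you cite are consumed. First, the degree-at-most-one condition on $H$ guarantees that the label-$2$ intervals partition $(0,n+1)$ into chains in which, at every interior breakpoint, exactly one interval ends and exactly one begins while every other point is covered by exactly $k$ of them; this forces consecutive label-$2$ intervals to share a color and hence forces the two $\{1,2\}$-intervals of each $H$-edge to receive the same color. Second, the Eulerian property of $G+H$ guarantees that every non-integer point of $(0,n+1)$ lies in exactly $k$ of the label-$1$ intervals, so each of the $k$ colors appears at every such point and each color class traces a directed closed walk in $G+H$ containing exactly one edge of $H$, i.e., a valid path $P_e$. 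Saying that the degree restriction ``limits how many demand-endpoints must be simultaneously encoded'' gestures at the first argument but does not establish it, and the second argument does not appear in your proposal at all. Until the gadget is written down and these two facts are proved, the claimed equivalence between $k$-colorability and solvability of the Disjoint Paths instance is unsupported.
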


\begin{proof}
    We adapt a proof given by Marx~\cite{marx2003short} to establish \NP-completeness of the Coloring problem on circular-arc graphs. 
    Let $(G,H)$ be an instance of the Disjoint Paths problem such that $G$ is acyclic, $G+ H$ is Eulerian, and $d_H(v) \leq 1$ for all $v \in V(G)$. Let $k = |E(H)|$.

    Let $v_1, \dots, v_n$ be a topological sort of $G$. For every edge $(v_i, v_j) \in E(G)$ we construct an interval $(i,j)$ with label set $\{1\}$. Note that $i < j$, due to the property of the topological sort. 
    For every edge $(v_i,v_j) \in E(H)$ we may assume that $i > j$ since otherwise there is no path from $v_j$ to $v_i$ in $G$. We add the intervals $(0,j)$ and $(i,n+1)$ with label set $\{1,2\}$ and the interval $(j,i)$ with label set $\{2\}$. We call the resulting $2$-simultaneous interval graph $G'$.

    We claim that $(G,H)$ is a yes instance of the disjoint path problem if and only if $G'$ can be colored with $k$ colors. 
    First assume that $(G,H)$ is a yes instance. 
    Fix a solution, that is, paths $P_e$ in $G$ for each $e \in E(H)$ such that these paths are edge-disjoint and path $P_e$ together with edge $e$ forms a directed cycle.
    By \cref{lemma:disjoint-paths-cover}, the solution covers all the edges with $k$ directed cycles. 
    Let $C$ be the $\ell$-th cycle in the solution. For every edge $(v_i, v_j) \in E(C) \cap E(G)$ we color the corresponding interval $(i,j)$ that has label $\{1\}$ with color $\ell$. 
    For the edge $(v_i, v_j) \in E(C) \cap E(H)$ we color with color $\ell$ the intervals $(0,j)$ and $(i,n+1)$ that have label set $\{1,2\}$  as well as the interval $(j,i)$ that has label set $\{2\}$. This leads to a proper coloring of $G'$ since the only intervals with the same color that intersect each other do not share a label and, thus, their corresponding vertices are not adjacent.

    Now assume the graph $G'$ can be properly colored with $k$ colors. 
    As all the $k$ intervals with label set $\{1,2\}$ that start in $0$ pairwise intersect, they have different colors. 
    Now consider the subgraph of $G$ induced by the intervals containing label~$2$. 
    Since every vertex in $H$ has degree at most one, whenever an interval ends before point $n+1$, there is no other interval that ends at this point. 
    Furthermore, there is exactly one interval that starts at this point. 
    This implies that every point $p$ in the interval $(0,n+1)$ in which no interval ends belongs to exactly $k$ intervals. 
    Consequently, any two intervals such that the second one starts where the first one ends must have the same color.
    This implies, in particular, that the two intervals with label set $\{1,2\}$ representing the same edge of $H$ have the same color. 
    
    Now consider all the intervals that contain the label~$1$. 
    There are $k$ of those intervals that start in point $0$. 
    If exactly $j$ of those intervals end in point $i$, then there are exactly $j$ intervals that start in $i$, due to the Eulerian property of $G + H$. 
    Thus, any non-integer point in $(0,n+1)$ is contained in exactly $k$ intervals. 
    This also implies that for any of those points there is an interval with color $d \in \{1,\ldots, k\}$. 
    Therefore, the intervals with color $d$ represent a directed cycle in $G + H$ containing exactly one edge of $H$.
    Thus, $(G,H)$ is a yes instance of the disjoint path problem.
\end{proof}

As any class of graphs with bounded simultaneous interval number are $(\pw, \omega)$-bounded (\cref{thm:binding}), we can use the results from Chaplick et al.~\cite[Theorem~12]{chaplick2021topological} to show that the List $k$-Coloring problem admits an \FPT{} algorithm when parameterized by $k$ plus the simultaneous interval number.

\begin{theorem}
    Given a graph $G$, we can solve the List $k$-Coloring problem on $G$ in time $k^{\O(\si(G)k)} n$.
\end{theorem}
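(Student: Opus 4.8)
The plan is to reduce the instance to one of bounded pathwidth and then invoke the generic algorithm of Chaplick et al.~\cite[Theorem~12]{chaplick2021topological}, which solves List $k$-Coloring on any $(\pw,\omega)$-bounded graph class for which a binding function is known. The crucial ingredient is \cref{thm:binding}, which supplies the \emph{linear} binding function $f(\omega) = \si(G)\cdot\omega - 1$: for every induced subgraph $G'$ of $G$ we have $\pw(G') \le \si(G)\,\omega(G') - 1$. Hence the class of graphs of simultaneous interval number at most $\si(G)$ (in particular the singleton $\{G\}$) is $(\pw,\omega)$-bounded with this concrete $f$, which is exactly the hypothesis needed to apply the cited result.

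First I would dispose of the trivial obstruction to colorability. If $G$ contains a clique of size $k+1$, then $G$ has no proper $k$-coloring, and a fortiori no list $k$-coloring, so the algorithm may reject. Whether such a clique exists can be tested in time $2^{\O(\si(G)k)} n$ using the clique algorithm established in \cref{sec:cliques}. If no clique of size $k+1$ is found, then $\omega(G) \le k$, and substituting this into the binding function yields $\pw(G) \le \si(G)\cdot k - 1 = \O(\si(G)k)$.

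With the pathwidth now bounded, I would construct a path decomposition of $G$ of width $\O(\si(G)k)$ — using a linear-time \FPT{} pathwidth routine, which is valid precisely because the pathwidth is bounded by a function of the parameters — and run the standard dynamic program for List $k$-Coloring along it. Each bag contains $\O(\si(G)k)$ vertices, so at most $k^{\O(\si(G)k)}$ partial colorings of a bag need to be tracked; propagating these across the $\O(n)$ bags while respecting the lists and the intra-bag edges costs $k^{\O(\si(G)k)} n$ in total, which is the claimed running time.

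The conceptual work lies entirely in \cref{thm:binding}; once the linearity of the binding function is in hand, the remainder is a black-box application of the $(\pw,\omega)$-bounded framework. The only point that needs genuine care is the book-keeping: we never compute $\si(G)$ (which would be \NP-hard) but merely express the a posteriori running time in terms of it, and we must ensure that both the clique pre-test and the pathwidth construction run in time linear in $n$, so that the final bound is $k^{\O(\si(G)k)} n$ rather than carrying a worse polynomial factor.
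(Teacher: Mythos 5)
Your proposal matches the paper's argument exactly: the paper proves this theorem by observing that \cref{thm:binding} gives the $(\pw,\omega)$-binding function $f(\omega)=\si(G)\,\omega-1$ and then invoking Chaplick et al.~\cite[Theorem~12]{chaplick2021topological} as a black box. Your additional unpacking of that black box (clique pre-test, decomposition, dynamic program) is a reasonable sketch of what the cited result does internally, though if one really wanted to avoid the citation one should use an approximate (tree) decomposition computable in $2^{\O(\si(G)k)}n$ time rather than an exact linear-time pathwidth routine, whose dependence on the width is too large for the stated bound.
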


\section{Domination and Independent Sets}\label{sec:domination-independence}

The Dominating Set problem and the Independent Set problem can be solved in polynomial time on interval graphs~\cite{fanica1972algorithms,ramalingam1988unified}. 
However, when we parameterize these problems by the solution size and linear mim-width they are \W[1]-hard~\cite{jaffke2019mimiii}. If we parameterize the Dominating Set problem by $\tin$ and the solution size then it is \W[2]-hard~\cite{liu2011parameterized}. In contrast, when the problems are parameterized by simultaneous interval number and the solution size, then bounded-search-tree methods lead to FPT-algorithms.

\begin{theorem}\label{thm:fpt-dom}
    Given a graph $G$ with $n$ vertices and a $d$-simultaneous interval representation of $G$, we can decide whether $G$ has a dominating set of size at most $k$ or an independent set of size $k$ in time $\O(2^{kd} \cdot n)$.
\end{theorem}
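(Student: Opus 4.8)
The statement asks for an $\O(2^{kd}\cdot n)$ algorithm for both Dominating Set of size $\le k$ and Independent Set of size $k$, given a $d$-simultaneous interval representation $(R,L)$. The key structural fact is that a vertex $v$ can only interact (be adjacent to) another vertex $w$ if both the intervals $R(v),R(w)$ and the label sets $L(v),L(w)$ intersect. So I would organize the whole approach around the \emph{branching on label sets} idea, using the representation to bound the relevant candidate set at each step.

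\textbf{Independent Set.} I would run a bounded-search-tree recursion that grows an independent set one vertex at a time, always picking a not-yet-covered ``anchor'' and branching over a small set of candidates to add. The crucial claim is this: order the vertices by right endpoint $r(\cdot)$ as in the proof of \cref{thm:lmim}, and let $u$ be the surviving vertex with the smallest right endpoint. Any independent set must either avoid $u$ entirely or, more usefully, any vertex $v$ that is placed to the ``right'' of $u$ and is non-adjacent to $u$ either has $R(v)\cap R(u)=\emptyset$ or $L(v)\cap L(u)=\emptyset$. The intervals disjointness is essentially the interval-graph greedy structure (take the leftmost-ending interval), but because adjacency also requires label intersection, I branch on \emph{which nonempty label} of $u$ is ``used up.'' Concretely, at each step I branch over the $\le 2^d$ possible label sets that a next independent-set vertex can carry relative to the current partial solution; since an independent set of size $k$ uses pairwise label-disjoint label sets (as in the lower-bound arguments), the number of distinct nonempty label patterns encountered along any root-to-leaf path is controlled, giving a search tree of size $2^{\O(kd)}$ with $\O(n)$ work per node after an initial sort.

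\textbf{Dominating Set.} I would use the standard domination bounded-search-tree skeleton: maintain a set of not-yet-dominated vertices, pick one such vertex $x$ of minimum right endpoint, and branch over the possible ways to dominate $x$. The point is that a dominator of $x$ must share both an interval-point and a label with $x$; using the leftmost-endpoint choice, the interval condition forces a dominator's interval to contain $r(x)$ or lie appropriately, and the label condition restricts the dominator to vertices whose label set meets $L(x)$. Rather than branching over all $\O(n)$ candidate dominators, I branch over the $\le 2^d$ label-set ``types'' (together with the ``dominate $x$ by itself'' option), and for each chosen type add a canonical representative — the vertex of that type whose interval reaches furthest, analogous to the greedy interval-domination rule. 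This keeps the branching factor at $2^{\O(d)}$ and the depth at $k$.

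\textbf{Main obstacle.} The delicate part is justifying that branching over label-set \emph{types} rather than over individual vertices is sound and complete, i.e. that for each branch there is a canonical ``best'' representative vertex whose interval extends farthest (for domination) or ends earliest (for independence), so that committing to it never loses a solution. This is a standard exchange argument for interval greedily, but here it must be carried out simultaneously with the label constraint, so I would prove a small exchange lemma: if some valid solution uses a vertex of label-type $S$ to handle the current anchor, it can be modified to use the canonical representative of type $S$ without increasing size or destroying feasibility. Bounding the tree by $2^{kd}$ then amounts to noting the depth is $k$ and the branching factor is at most $2^d$ (plus lower-order options absorbed into the exponent), and each node costs $\O(n)$ after presorting by endpoints, yielding the claimed $\O(2^{kd}\cdot n)$ bound.
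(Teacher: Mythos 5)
Your proposal matches the paper's proof essentially exactly: both run a depth-$k$ bounded search tree that branches over the at most $2^d$ label-set types and commits to a canonical representative of each type (the earliest-ending vertex of that type for independent set, the farthest-reaching neighbour of the earliest-ending undominated vertex for dominating set), justified by precisely the exchange lemma you identify as the crux. The only slip --- the claim that an independent set of size $k$ uses pairwise label-disjoint label sets is false in general, since non-adjacency may instead come from interval disjointness --- is immaterial, because the $2^{kd}$ bound already follows from branching factor $2^d$ and depth $k$.
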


\begin{proof}
    We use the common technique of bounded search trees (see, e.g.,~\cite{cygan2015param}).
    First consider the Dominating Set problem. 
    If the graph has no vertices, the algorithm returns ``yes'' .
    Otherwise, if $k = 0$, the algorithm returns ``no''.
    For $k\ge 1$ and $V(G)\neq \emptyset$, our algorithm considers the vertex $v$ whose interval ends first. 
    For every possible label set $S$ we do the following:
    If there exists a vertex in $N_G[v]$ with label set $S$, then let $u$ be the vertex in $N_G[v]$ with label set $S$ whose interval ends last. 
    Remove $u$ and all its neighbors from $G$ and solve the Dominating Set problem with parameter $k-1$ on the remaining graph recursively. 

    To prove that the algorithm works correctly it is sufficient to show that in some of the branches the algorithm detects a dominating set of size at most $k$ if there is one.
    We prove this by induction on $k$ and, therefore, we assume that the algorithm works correctly for $k-1$. 
    Let $D$ be a dominating set of size at most $k$ in $G$. 
    We know that there is a vertex $w \in N_G[v] \cap D$. 
    Let $S$ be the label set of $w$. 
    Let $u$ be the vertex in $N_G[v]$ with label set $S$ whose interval ends last. 
    We claim that $(D \setminus \{w\}) \cup \{u\}$ is also a dominating set of $G$. 
    Assume for contradiction that this is not the case. 
    Then there is a vertex $x$ that belongs to $N_G[w]\setminus N_G[u]$.
    As $w$ and $u$ have the same label set and $u$ does not end before $w$, we know that $x$ has to end before $u$ starts. 
    However, this implies that $x$ ends before $v$; a contradiction to the choice of $v$.

    For the Independent Set problem, if $k = 0$, the algorithm returns ``yes''.
    Otherwise, if the graph has no vertices and $k \ge 1$, the algorithm returns ``no''.
    For $k\ge 1$ and $V(G)\neq \emptyset$, our algorithm considers for every label set $S$, the vertex $u$ with label set $S$ whose interval ends first. 
    The algorithm removes $u$ and all its neighbors from $G$ and solves the Independent Set problem with parameter $k-1$ on the remaining graph recursively.

    Again, we prove the correctness by induction. Let $I$ be an independent set of size $k$. 
    Let $w$ be the vertex in $I$ whose interval ends first and let $S$ be the label set of $w$. 
    Let $u$ be the vertex with label set $S$ whose interval ends first. 
    Again, we claim that $(I \setminus \{w\}) \cup \{u\}$ is an independent set. 
    Assume for contradiction that this is not the case. 
    Then there is a vertex $x \in I\setminus\{w\}$ that is adjacent to $u$ but not to $w$. 
    As $u$ and $w$ have the same label set and $u$ does not end after $w$, $x$ has to end before $w$; a contradiction to the choice of $w$.

    For the running time observe that every call of our algorithm produces at most $2^d$ new calls and in any of these calls the parameter $k$ is decreased by one. Therefore, we have at most $2^ {kd}$ calls and any of these calls needs $\O(n)$ steps to find vertex $u$ and to remove it together with all its neighbors.
\end{proof}

Using a technique due to Fomin et al.~\cite{fomin2020tractability}, in~\cite{jaffke2019mimiii}  Jaffke et al. showed that a whole range of domination-type problems (including dominating and independent set) are $\W[1]$-hard when parameterized by mim-width and solution size. While that approach cannot be easily adapted for the simultaneous interval number, it is possible to show that at least one of these problems is $\W[1]$-hard when parameterized just by $\si$.

\begin{problem}{Independent Dominating Set Problem (IDSP)}
\begin{description}
\item[\textbf{Instance:}] A graph $G$ and an integer $k$.
\item[\textbf{Question:}]
Does there exist a set $X$ of at most $k$ vertices of $G$ such that $G[X]$ is edgeless and $N_G[X] = V(G)$?
\end{description}
\end{problem}


The results in~\cite{fomin2020tractability} use a reduction from the \emph{Multicolored Clique problem} (MCP), a technique popularized by Fellows et al.~\cite{fellows2009parameterized}. 
We will use a reduction from the \emph{Multicolored Independent Set problem}.

\begin{problem}{Multicolored Independent Set Problem (MISP)}
\begin{description}
\item[\textbf{Instance:}] A graph $G$ with a proper coloring of $k$ colors.
\item[\textbf{Question:}]
Is there an independent set $I$ in $G$ such that $I$ contains exactly one vertex of each color?
 \end{description}
\end{problem}

The MCP (and thus the MISP) was shown to be $\W[1]$-hard when parameterized by solution size  by Pietrzak~\cite{pietrzak2003param} and by Fellows et al.~\cite{fellows2009parameterized}. 
In fact, in \cite{cygan2015param, lokshtanov2011lower} the authors show the following result under the assumption of the Exponential Time Hypothesis (ETH) which asserts that solving $n$-variable $3$-SAT requires time $2^{\Omega(n)}$ (see~\cite{MR1820597}).

\begin{theorem}[Cygan et al.~\cite{cygan2015param}, Lokshtanov et al.~\cite{lokshtanov2011lower}]\label{thm:lower}
Assuming the Exponential Time Hypothesis, there is no $f(k) n^{o(k)}$ time algorithm for the MCP  (MISP) for any computable function $f$.
\end{theorem}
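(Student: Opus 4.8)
The plan is to obtain this bound in the now-standard way, by combining the Exponential Time Hypothesis with the Sparsification Lemma and a \emph{grouping} reduction. Since MCP and MISP are interreducible---complementing the edges between color classes interchanges multicolored cliques and multicolored independent sets---it suffices to treat MCP. Throughout I write $n$ for the number of vertices of the MCP instance, i.e.\ the quantity denoted $n$ in the statement, and I reserve the letter $N$ for the size of the problem I reduce \emph{from}. As a convenient base with \emph{binary} constraints I would use the $3$-Coloring problem: via the constant-size gadget reduction from $3$-SAT together with the Sparsification Lemma, $3$-Coloring on graphs with $N$ vertices and $\O(N)$ edges admits no $2^{o(N)}$-time algorithm unless ETH fails (see~\cite{cygan2015param,MR1820597}). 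This is the fact I would start from.

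Next comes the grouping reduction from sparse $3$-Coloring to MCP. Given such an instance on a vertex set $U$ with $|U| = N$ and a parameter $k$ to be fixed later, I partition $U$ into $k$ blocks $U_1,\dots,U_k$ of size at most $\lceil N/k\rceil$ and introduce, for each block $U_i$ and each proper $3$-coloring $c_i$ of the subgraph induced by $U_i$, a vertex $(i,c_i)$ placed in color class $i$. Two vertices $(i,c_i)$ and $(j,c_j)$ with $i\neq j$ are joined by an edge exactly when $c_i\cup c_j$ properly colors all edges running between $U_i$ and $U_j$. Because every edge of the original graph lies inside one block or between two blocks, a multicolored clique (one vertex per color class) corresponds precisely to a proper $3$-coloring of the whole graph; hence the reduction is correct and the target solution size is exactly $k$. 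The resulting instance has $n \le k\cdot 3^{\lceil N/k\rceil}$ vertices and is produced in time polynomial in $n$.

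The decisive step, and the one I expect to require the most care, is the parameter bookkeeping. Suppose for contradiction that MCP could be solved in time $f(k)\, n^{g(k)}$ with $g(k)=o(k)$, i.e.\ $g(k)/k\to 0$. Fix an arbitrary $\varepsilon>0$ and, using $g(k)/k\to 0$, choose a \emph{constant} $k$ large enough that $g(k)/k\le\varepsilon$. For that fixed $k$ the produced instance is then decided in time
\[
  f(k)\,\bigl(k\cdot 3^{\lceil N/k\rceil}\bigr)^{g(k)}
  = \O\bigl(3^{N\,g(k)/k}\bigr)
  = \O\bigl(3^{\varepsilon N}\bigr),
\]
because $f(k)$, $k^{g(k)}$, and the contribution of the ceiling are all constants once $k$ is fixed. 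As $\varepsilon>0$ was arbitrary, this would decide sparse $3$-Coloring in time $2^{o(N)}$, contradicting the consequence of ETH recalled above.

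The subtle point throughout is precisely that the quantifier over $k$ must be handled by choosing $k$ as a large constant tailored to $\varepsilon$---rather than letting $k$ grow with $N$---so that the $o(k)$ in the exponent of $n$ collapses to an $o(N)$ in the exponent of the base; by comparison, verifying the edge definition and absorbing the ceiling estimates is routine. Since the full argument is carried out in detail by Cygan et al.~\cite{cygan2015param} and by Lokshtanov et al.~\cite{lokshtanov2011lower}, I would cite those sources for the technical verification rather than reproduce it here.
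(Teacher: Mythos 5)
The paper does not actually prove this statement: it is imported verbatim from the literature (Cygan et al.\ and Lokshtanov et al.) and used as a black box in the $\W[1]$-hardness argument for the independent dominating set problem, so there is no in-paper proof to compare yours against. On its own merits, your reconstruction follows the standard argument from those sources: sparsify, reduce $3$-Coloring to MCP by grouping the vertices into $k$ blocks and taking the proper $3$-colorings of each block as a color class, observe $n \le k\cdot 3^{\lceil N/k\rceil}$, and note that MCP and MISP are interchangeable by complementing the edges between color classes. The reduction, its correctness, and the bound on $n$ are all fine.

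The one step that does not hold as written is the final inference. Having, for every $\varepsilon>0$, an algorithm running in time $\O_\varepsilon(3^{\varepsilon N})$ (with $k$ fixed as a constant depending on $\varepsilon$) does not by itself yield a single $2^{o(N)}$-time algorithm for sparse $3$-Coloring: ``for every $\varepsilon$ there is an $\O(2^{\varepsilon N})$-time algorithm'' and ``there is a $2^{o(N)}$-time algorithm'' are different statements, and the constants hidden in $\O_\varepsilon$ may grow with $1/\varepsilon$. The standard repairs are either (a) to use the stronger form of the ETH consequence, namely that there is a \emph{fixed} $\delta>0$ such that sparse $3$-Coloring admits no $\O(2^{\delta N})$-time algorithm, and then pick a single $\varepsilon$ with $3^{\varepsilon}<2^{\delta}$ and the corresponding constant $k$; or (b) to let $k=k(N)$ grow slowly and unboundedly with $N$, as Cygan et al.\ do, which additionally forces one to address the computability of $f$ when selecting $k(N)$. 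You correctly identify the quantifier handling as the delicate point and defer the details to the cited sources, so this is a repairable slip rather than a broken proof, but the fix should go through (a) or (b) rather than ``$\varepsilon$ was arbitrary.''
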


For an instance $G$ of the MCP we can assume that all color classes are of the same size $q$, since adding isolated vertices does not affect the existence or nonexistence of a multicolored clique.
A similar assumption can be made for the MISP.
For each color class $i \in \{1, \ldots , k\}$, we denote the vertices in the class by $v^{i}_{1}, \ldots , v^{i}_{q}$. 

We will show that the IDSP is $\W[1]$-hard when parameterized by the simultaneous interval number. 
To this end we will construct a reduction from the MISP in the following way. 
Let $G$ together with a vertex partition $V(G) = V_1 \dot{\cup} \ldots \dot{\cup} V_k$ be an instance of the MISP, where $V_i = \{v_1^i, \ldots , v_q^i\}$ for all $i \in \{1, \ldots , k\}$. 

Let $E(G) = \{e_1, \ldots , e_m\}$. We will now define a $(k+2)$-simultaneous interval graph $G'$ together with its $(k+2)$-simultaneous interval representation. 
For each vertex $v_j^i \in V(G)$ we will define a collection of $m+1$ (open) intervals (see \cref{fig:construction}) \[W^i_j :=\left\{\left(\gamma - 1 + (j-1) \frac{1}{q} + i \epsilon,\gamma + (j-1) \frac{1}{q} + i \epsilon\right)\colon \gamma \in \{1, \ldots, m+1\}\right\},\] 
where $\epsilon \ll \frac{1}{kq}$, i.e., all $k$ shifts in sum are much smaller than one interval of a $W^i_j$.

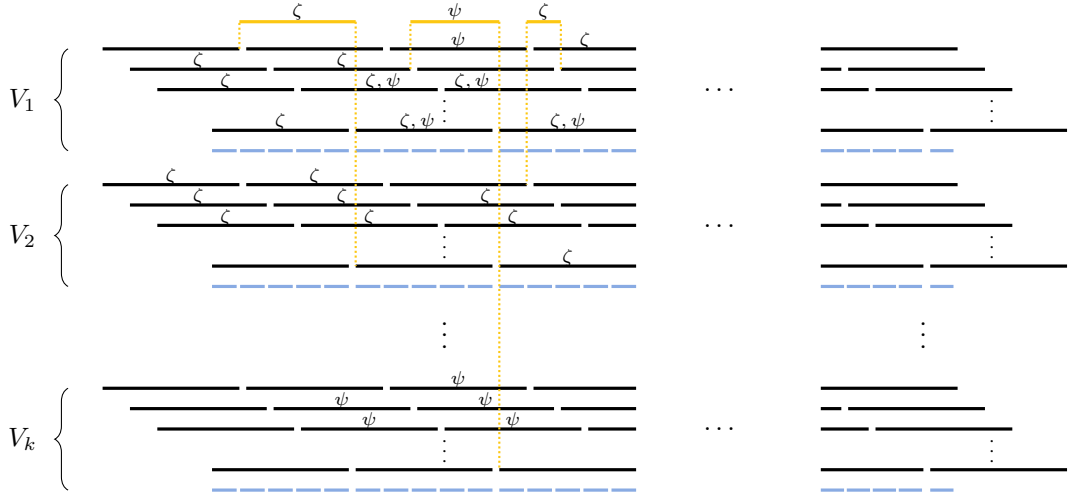
\begin{figure}
    \centering
    \begin{tikzpicture}[scale =0.9]
        \foreach \k in {0,2,5}
        {
            \foreach \i in {0,2.1,4.2,10.5}
            { 
                
                \foreach \j/\l in {0/0,0.3/0.4,0.6/0.8,1.2/1.6}  
                {
                    \draw[very thick](0+\i+\l,3-\j-\k) -- (2+\i+\l,3-\j-\k);
                }
                \newdimen\dummyDim
                \dummyDim = \i pt
                \ifdim \dummyDim < 10pt 
                \foreach \j in {0,...,4}
                {
                    \draw[very thick,new-blue](1.6+\i + \j *0.41,1.5-\k) -- (1.6+ \j*0.41+ 0.36 +\i,1.5-\k);

                }
                \fi
            }
            \foreach \l/\j in {0/0,0.4/0.3,0.8/0.6}
            {
                \draw[very thick] (6.3+ \l,3-\k-\j) -- (7.8, 3-\k -\j);
            }

            \foreach \l/\j in {0.4/0.3,0.8/0.6,1.6/1.2}
            {
                \draw[very thick] (10.5,3-\k-\j) -- (10.4 + \l, 3-\k -\j);
            }
            \foreach \j in {0,...,3}
            {
                \draw[very thick,new-blue](10.5 + \j *0.38,1.5-\k) -- (10.5+ \j*0.38+ 0.34,1.5-\k);
                
            }
            \draw[very thick,new-blue](12.1,1.5-\k) -- (12.44,1.5-\k);
            
            \path (13,2.3- \k) -- (13,1.8- \k) node [font= \scriptsize,midway,sloped] {$\dots$};
            \path (5,2.3- \k) -- (5,1.8- \k) node [font= \scriptsize,midway,sloped] {$\dots$};
            \path (6.8,2.4-\k) -- (11.3,2.4-\k) node [midway,sloped] {$\dots$}; 
        }

    \draw[thick,lipicsYellow,densely dotted] (2,3) -- (2,3.4);
    \draw[very thick,lipicsYellow] (2,3.4) -- (3.7,3.4);
    \path (2,3.55) -- (3.7,3.55) node [font= \scriptsize,midway,sloped] {$\zeta$};
    \draw[thick,lipicsYellow,densely dotted] (3.7,3.4) -- (3.7,-0.2);
    \draw[very thick,lipicsYellow] (4.5,3.4) -- (5.8,3.4);
    \path (4.5,3.55) -- (5.8,3.55) node [font= \scriptsize,midway,sloped] {$\psi$};
    \draw[thick,lipicsYellow, densely dotted] (4.5,2.7) -- (4.5,3.4);
    \draw[thick, lipicsYellow, densely dotted] (5.8,3.4) -- (5.8,-3.2);
    \draw[thick,lipicsYellow,densely dotted] (6.7,2.7)-- (6.7,3.4);
    \draw[thick,lipicsYellow,densely dotted] (6.2,3.4)-- (6.2,1);
    \draw[very thick,lipicsYellow] (6.7,3.4) -- (6.2,3.4);
    \path (6.7,3.55) -- (6.2,3.55) node [font= \scriptsize,midway,sloped] {$\zeta$};

    \path (4.2,3.125) -- (6.2,3.125) node [font= \scriptsize,midway,sloped] {$\psi$};
    
    \path (2.3,2.825) -- (4.7,2.825) node [font= \scriptsize,midway,sloped] {$\zeta$};
    
    \path (6.3,3.125) -- (7.8,3.125) node [font= \scriptsize,midway,sloped] {$\zeta$};
    
    \foreach \l\j in {0.4/0.3, 0.8/0.6, 1.6/1.2}
    {
    \path (0+\l,3.125-\j) -- (2+\l,3.125-\j) node [font= \scriptsize,midway,sloped] {$\zeta$};
    }

    \foreach \l\j in {1/0.6, 1.5/1.2}
    {
    \path (2.1+\l,3.125-\j) -- (4.1+\l,3.125-\j) node [font= \scriptsize,midway,sloped] {$\zeta,\psi$};
    }

    \foreach \l\j in {0.2/0.6, 1.6/1.2}
    {
    \path (4.2+\l,3.125-\j) -- (6.2+\l,3.125-\j) node [font= \scriptsize,midway,sloped] {$\zeta,\psi$};
    }


    \foreach \l\j in {0/0,0.4/0.3, 0.8/0.6}
    {
    \path (0+\l,1.125-\j) -- (2+\l,1.125-\j) node [font= \scriptsize,midway,sloped] {$\zeta$};
    }

    \foreach \l\j in {0/0,0.4/0.3, 0.8/0.6}
    {
    \path (2.1+\l,1.125-\j) -- (4.1+\l,1.125-\j) node [font= \scriptsize,midway,sloped] {$\zeta$};
    }

    \foreach \l\j in {0.4/0.3, 0.8/0.6, 1.6/1.2}
    {
    \path (4.2+\l,1.125-\j) -- (6.2+\l,1.125-\j) node [font= \scriptsize,midway,sloped] {$\zeta$};
    }

     \foreach \l\j in {0.4/0.3, 0.8/0.6}
    {
    \path (2.1+\l,-1.875-\j) -- (4.1+\l,-1.875-\j) node [font= \scriptsize,midway,sloped] {$\psi$};
    }
    
    \foreach \l\j in {0/0,0.4/0.3,0.8/0.6}
    {
    \path (4.2+\l,-1.875-\j) -- (6.2+\l,-1.875-\j) node [font= \scriptsize,midway,sloped] {$\psi$};
    }

    \path (5,-1) -- (5,-1.5) node [midway,sloped] {$\dots$};
    \path (12,-1) -- (12,-1.5) node [midway,sloped] {$\dots$};

    \draw [decorate,decoration={brace,amplitude=5pt,mirror,raise=3ex}] (0,3) -- (0,1.5) node[midway,xshift=-3em]{$V_1$};

    \draw [decorate,decoration={brace,amplitude=5pt,mirror,raise=3ex}] (0,1) -- (0,-0.5) node[midway,xshift=-3em]{$V_2$};

    \draw [decorate,decoration={brace,amplitude=5pt,mirror,raise=3ex}] (0,-2) -- (0,-3.5) node[midway,xshift=-3em]{$V_k$};
    
    \end{tikzpicture}
\caption{The yellow intervals represent the edges of $G$, the black intervals are the intervals of the $W^i_j$. 
The blue intervals are in $S_i$. 
Each of the rows marked $V_i$ represent that vertex set of $G$. For visual reasons the intervals belonging to the $V_i$ have not been shifted by $\epsilon$ as in the definition. 
For the same reason, we define $\zeta := k+1$ and $\psi:= k+2$. The labels of the edge intervals are denoted completely above these. 
Each of the other intervals also contains the label $i$ if it is associated with $V_i$. 
The intervals on the right have not been labeled.}
    \label{fig:construction}
\end{figure}

We will denote the $\gamma$-th interval of $v^i_j$ as $I^i_j(\gamma)$. 
These intervals will be referred to as the \emph{vertex intervals}.
Note that none of these intervals have common left endpoints or common right endpoints. 
Furthermore, for each of the $V_i$ we add an additional collection of $2mq+2$ intervals \[S_i := \left\{\left(\frac{q-1}{q}+\gamma \frac{1}{q} + i \epsilon,1+\gamma \frac{1}{q}+ i \epsilon\right)\colon  \gamma \in \{0, \ldots , 2mq+1\}\right\},\]
where again $\epsilon \ll \frac{1}{qk}$.

Finally, we add further intervals for each edge in $G$.
Let $e_{\gamma} = v^i_a v^j_b$ be an edge with $v^i_a \in V_i$, $v^j_b \in V_j$. W.l.o.g. we may assume that $a \leq b$ and if $a = b$, then $i < j$. We add an interval of the form $I(e_\gamma)=(r(I^i_a(\gamma)), \ell(I^j_{b}(\gamma+1))) $. 
These intervals will be referred to as the \emph{edge intervals}.
As none of the intervals of different vertices have common endpoints, we can be sure that each of these edge intervals has strictly positive length.
In the following, we will frequently identify the intervals and vertices of $G'$ in order to simplify the notation.

In the next step, we assign a label set to each of the intervals in order to construct a simultaneous interval graph. To each interval in $S_i$ we assign the label set $\{i\}$ and to each $I(e_{\gamma})$ we assign the label set $\{k+1\}$ if $\gamma$ is odd and $\{k+2\}$ if $\gamma$ is even.

Before we label the vertex intervals, we need the following observation which follows easily from the definitions above.

\begin{observation}\label{obs:twoint}
    Any interval $I^i_j(\gamma)$ intersects at most two edge intervals and these intervals have distinct labels.
\end{observation}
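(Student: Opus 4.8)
The plan is to pin down, by a direct endpoint computation, exactly which edge intervals a given vertex interval can meet, and then read off the labels from the parity rule. First I would write out all relevant endpoints explicitly. From the definition, the vertex interval is $I^i_j(\gamma)=\bigl(\gamma-1+\tfrac{j-1}{q}+i\epsilon,\ \gamma+\tfrac{j-1}{q}+i\epsilon\bigr)$, so it has length $1$ and straddles the integer $\gamma$. For an edge $e_\delta=v^{i'}_a v^{j'}_b$ with $a\le b$ (and $i'<j'$ if $a=b$), the definition $I(e_\delta)=(r(I^{i'}_a(\delta)),\ell(I^{j'}_b(\delta+1)))$ unwinds to $I(e_\delta)=\bigl(\delta+\tfrac{a-1}{q}+i'\epsilon,\ \delta+\tfrac{b-1}{q}+j'\epsilon\bigr)$, so each edge interval sits inside the unit window $(\delta,\delta+1)$.

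The key observation I would establish is that every fractional correction term appearing above, namely each expression of the form $\tfrac{s-1}{q}+t\epsilon$ with $s\in\{1,\dots,q\}$ and $t\in\{1,\dots,k\}$, lies in the half-open interval $[0,1)$; this is exactly where the hypothesis $\epsilon\ll\tfrac1{kq}$ is used, since then $\tfrac{q-1}{q}+k\epsilon<1$. Granting this, the two conditions for $I^i_j(\gamma)$ and $I(e_\delta)$ to intersect each reduce to an inequality between an integer and a quantity in $(-1,1)$: the left-endpoint-before-right-endpoint condition in one direction gives $\delta-\gamma<(\text{something in }(-1,1))$, forcing $\delta\le\gamma$, and in the other direction gives $\gamma-1-\delta<(\text{something in }(-1,1))$, forcing $\delta\ge\gamma-1$. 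Hence any edge interval meeting $I^i_j(\gamma)$ must have index $\delta\in\{\gamma-1,\gamma\}$.

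Since each index $\delta$ names a single edge and hence a single edge interval, this already yields the bound of at most two. For the second claim I would simply invoke the labeling rule: $I(e_\delta)$ carries label $\{k+1\}$ when $\delta$ is odd and $\{k+2\}$ when $\delta$ is even; as $\gamma-1$ and $\gamma$ have opposite parity, the two candidate edge intervals carry different labels, so whenever both are actually met they have distinct labels. The only genuinely careful step is the endpoint bookkeeping with the $\epsilon$-shifts and the $\tfrac{s-1}{q}$ offsets; once those are confined to $[0,1)$ the integrality argument is immediate, and the boundary cases $\gamma=1$ or $\delta\notin\{1,\dots,m\}$ only remove a candidate, never add one.
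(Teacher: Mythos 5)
Your proof is correct and is exactly the computation the paper leaves implicit: the observation is stated without proof as following ``easily from the definitions,'' and your endpoint bookkeeping (each edge interval $I(e_\delta)$ lies in $(\delta,\delta+1)$, each vertex interval $I^i_j(\gamma)$ straddles $\gamma$, and all fractional offsets differ by less than $1$) is the intended verification. The integrality argument pinning $\delta\in\{\gamma-1,\gamma\}$ and the parity of the labels are both handled correctly.
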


Any interval of a $W^i_j$ is given at least the label $i$. Let $I^i_j(\gamma)$ be one of the intervals representing the vertices of $G$. If $I^i_j(\gamma)$ does not intersect any edge interval such that one of the endpoints of that edge is contained in $V_i$, then $L(I^i_j(\gamma)) = \{i\}$. If $I^i_j(\gamma)$ intersects some edge interval such that one of that edges endpoints is contained in $V_i$ but is not identical to $v^i_j$, then we add the label of that edge to $L(I^i_j(\gamma))$. If $I^i_j(\gamma)$ intersects some edge interval such that one of that edges' endpoints is identical to $v^i_j$, then $L(I^i_j(\gamma))$ does not contain the label of that edge. Note that these last two rules cannot lead to a contradiction, due to \cref{obs:twoint}. Therefore, any interval $I^i_j(\gamma)$ has label set either $\{i\}$, $\{i,k+1\}$, $\{i, k+2\}$ or $\{i,k+1,k+2\}$.


\begin{lemma}\label{lem:intervals}
    Any minimum independent dominating set of $G'$ must contain all the vertices corresponding to the intervals in the set $W^1_{j_1} \cup \ldots \cup W^k_{j_k}$ for some set of indices $\{j_1, \ldots , j_k\}$.
\end{lemma}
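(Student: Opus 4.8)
The plan is to prove a per-color lower bound showing that any independent dominating set must spend at least $m+1$ ``dominators'' on each color, and then to argue that a minimum one meets this bound only by taking a single full row $W^i_{j_i}$ of vertex intervals.

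First I would analyze the role of the intervals in $S_i$ relative to the vertex intervals of color $i$; these are the only intervals sharing the label $i$, hence the only possible neighbors of an $S_i$ interval, since the $S_i$ intervals are pairwise non-intersecting. Reading off the explicit endpoints, each vertex interval $I^i_j(\gamma)$ meets exactly $q$ consecutive members of $S_i$, the coverages of consecutive columns overlap in $q-1$ of them, and for every column $\gamma\in\{1,\dots,m+1\}$ there is a unique ``marker'' interval of $S_i$ — the one with index $q(\gamma-1)$ — that is met by the column-$\gamma$ intervals $I^i_1(\gamma),\dots,I^i_q(\gamma)$ and by nothing else. Hence for each $\gamma$ every dominating set contains either some column-$\gamma$ vertex interval of color $i$ or the marker itself. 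Since the $q$ intervals of a fixed column pairwise intersect and share label $i$, they form a clique, so an independent set contains at most one of them per column; together with the $m+1$ markers this yields the lower bound of $m+1$ color-$i$ dominators.

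Next I would upgrade this to ``one vertex interval per column forming a full row''. A color-$i$ vertex interval dominates at most $q$ members of $S_i$, an $S_i$ interval dominates only itself, and the members of $S_i$ lying beyond the reach of all vertex intervals are isolated vertices of $G'$, hence forced into every dominating set. A short optimization then shows that a minimum independent dominating set dominates the reachable part of $S_i$ with exactly one color-$i$ vertex interval per column and no $S_i$ filler. Writing the chosen indices as $j_1,\dots,j_{m+1}$, non-adjacency of $I^i_{j_\gamma}(\gamma)$ and $I^i_{j_{\gamma+1}}(\gamma+1)$ forces $j_{\gamma+1}\ge j_\gamma$, while gapless coverage across consecutive columns forces $j_{\gamma+1}\le j_\gamma$; thus the indices are constant and the chosen intervals are precisely $W^i_{j_i}$. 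Performing this for all colors yields $W^1_{j_1}\cup\dots\cup W^k_{j_k}$.

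The step I expect to be the main obstacle is making the optimality argument robust in the presence of the edge intervals. The counting above isolates the $S_i$- and vertex-interval part cleanly, but a minimum independent dominating set must also dominate every edge interval, and a configuration that is not a single row (say one whose index increases from column to column at the cost of a few extra $S_i$ fillers) could conceivably dominate additional edge intervals and so be globally cheaper. To exclude this I would use the precise labelling rule for the vertex intervals together with \cref{obs:twoint} (each vertex interval meets at most two edge intervals, and these carry distinct labels) to show that any non-row configuration can be replaced by a suitable full row $W^i_{j_i}$ without losing the domination of any edge interval and without increasing the total size. This would show that a set omitting a full row for some color can always be strictly improved, and therefore is not minimum.
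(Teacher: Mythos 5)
Your route is genuinely different from the paper's. The paper argues by a local exchange: it fixes the row $W^i_{j^*}$ containing the most elements of the minimum set $D$, locates a maximal gap of $W^i_{j^*}$ missing from $D$, and swaps the corresponding block of $W^i_{j^*}$ in while deleting the elements of $D$ that meet it; the gain comes from the metric fact that covering a segment of length $L$ by pairwise non-intersecting intervals requires $\lceil L\rceil+1$ of them, whereas the row does it with $\lceil L\rceil$. Your first two steps are correct and in fact more explicit than anything in the paper: the ``marker'' $S_i[q(\gamma-1)]$ is indeed adjacent exactly to the $q$ column-$\gamma$ intervals of colour $i$, those form a clique, so each of the $m+1$ disjoint marker gadgets forces one element of $D$; and the monotonicity argument ($j_{\gamma+1}\ge j_\gamma$ from independence of consecutive columns, $j_{\gamma+1}\le j_\gamma$ from gapless coverage of $S_i$) is sound.

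The step you yourself flag as the main obstacle is, however, a genuine gap, and the intermediate claim you propose to prove there is false. By the labelling rule, a full row $W^i_j$ fails to dominate precisely the edge intervals of the edges incident to $v^i_j$ (their label is deliberately withheld from the row-$j$ intervals that meet them); this is exactly what makes \cref{lem:ids} work. Consequently, a staircase configuration that uses row $a$ on columns $1,\dots,\gamma_0$ and row $b>a$ afterwards (at the cost of one $S_i$ filler) dominates, among others, all edge intervals $I(e_\gamma)$ with $\gamma>\gamma_0$ incident to $v^i_a$ \emph{and} all those with $\gamma<\gamma_0$ incident to $v^i_b$, and no single row dominates a superset of what such a configuration dominates. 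So there is in general no ``suitable full row'' preserving the domination of every edge interval; after your replacement, up to as many edge intervals as $v^i_{j_i}$ has incident edges may become undominated, and since the edge intervals form an independent set in $G'$ each orphaned one must be inserted into the set itself, potentially costing far more than the one or two surplus elements your counting saves. \cref{obs:twoint} does not rescue this: it bounds how many edge intervals a single vertex interval meets, not how many a whole row abandons. Closing the argument requires an accounting that also charges the original set $D$ for the edge intervals it was already failing to dominate via colour $i$ — i.e., something closer to the paper's exchange on the most-populated row — so as written your plan for the final step would not go through.
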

\begin{proof}
    Let $D$ be some minimum independent dominating set of $G'$. For some index $i$ let $W^i_{j^*}$ be a set containing the most intervals belonging to vertices of $D$ among all $W^i_j$.
    

    Suppose there is some $\gamma$ such that $I^i_{j^*}(\gamma)$ does not belong to $D$. If there is no interval from $W^i_{j^*}$ to the left of $I^i_{j^*}(\gamma)$, then let $\lambda$ be the leftmost left endpoint of some vertex in $S_i$. If there are further intervals of $W^i_{j^*}$ in $D$ to the left of $I^i_{j^*}(\gamma)$ then let $\lambda$ be the right endpoint of the rightmost of these. We similarly define $\rho$ as either the rightmost right endpoint of a vertex in $S_i$ or the left endpoint of the leftmost interval of $W^i_{j^*}$ to the right of $I^i_{j^*}(\gamma)$. We now claim that adding all intervals of $W^i_{j^*}$ between $\lambda$ and $\rho$ and deleting from $D$ all intervals that intersect with these gives us a smaller independent dominating set.

    Either $\lambda$ or $\rho$ must be the endpoint of some $I^i_{j^*}(\delta)$, as $D$ must contain at least one interval of $W^i_{j^*}$. Without loss of generality, let this be $\lambda$. As any interval of a $W^i_j$ that intersects the interval $(\lambda, \lambda + \frac{1}{q})$ must also intersect $I^i_{j^*}(\delta)$ by definition, we can assume that $D$ must contain two intervals of $S_i$ in $(\lambda, \lambda + \frac{1}{q})$ to be an independent dominating set. Therefore, in order to dominate all intervals between $\lambda$ and $\rho$ with non intersecting intervals, the set $D$ must contain $\lceil \rho - \lambda \rceil + 1$ many intervals that cover $(\lambda, \rho)$. On the other hand, we can cover $(\lambda, \rho)$ with $\lceil \rho - \lambda  \rceil$ intervals from $W^i_{j^*}$. This is a contradiction to $D$ being a minimum independent dominating set of $G'$, proving the statement. 
\end{proof}

\begin{lemma}\label{lem:ids}
 The vertices belonging to $W:= W^1_{j_1} \cup \ldots \cup W^k_{j_k}$ form an independent dominating set of $G'$ if and only if $C:=\{v^1_{j_1}, \ldots , v^k_{j_k}\}$ is a multicolored independent set of $G$.
\end{lemma}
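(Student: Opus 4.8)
The plan is to prove the equivalence by checking separately that $W$ is an independent set and that $W$ dominates $G'$, and matching each of these two properties to the independence of $C$ in $G$. Since $C$ automatically contains exactly one vertex per colour class (namely $v^i_{j_i}$ for each $i$), the word \emph{multicolored} is free, so the whole statement reduces to showing that $W$ is an independent dominating set if and only if $C$ is independent in $G$. I would verify both properties of $W$ against the three types of vertices of $G'$: the vertex intervals $I^i_c(\delta)$, the intervals of the sets $S_i$, and the edge intervals $I(e_\gamma)$.

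For the domination I expect the following picture. Every interval of $S_i$ and every vertex interval lying in a row associated with $V_i$ carries the label $i$, and the selected column $W^i_{j_i}$ consists of the unit intervals $I^i_{j_i}(1),\dots,I^i_{j_i}(m+1)$, which tile the real line; sharing the label $i$, these dominate all such intervals, so domination of $G'$ hinges entirely on the edge intervals. The interval $I(e_\gamma)$ of an edge $e_\gamma = v^i_a v^{i'}_b$ carries a single auxiliary label ($k+1$ or $k+2$ according to the parity of $\gamma$), hence is dominated only by a member of $W$ that meets it and carries the same auxiliary label. Using the labelling rules together with \cref{obs:twoint}, I would show that a selected column can supply such a neighbour precisely when it does not correspond to an endpoint of $e_\gamma$; equivalently, $I(e_\gamma)$ fails to be dominated exactly when $j_i=a$ and $j_{i'}=b$, that is, when both endpoints of $e_\gamma$ lie in $C$. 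Thus $W$ dominates $G'$ if and only if no edge of $G$ has both endpoints in $C$. This already yields the forward implication: if $W$ is independent dominating, then the undominated-edge analysis forces $C$ to be independent, and since $C$ meets every colour class once, $C$ is a multicolored independent set.

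For the reverse implication the extra content is that independence of $C$ makes $W$ independent, and I expect this to be the main obstacle. Within a single column the intervals are pairwise disjoint, so any edge of $G'$ inside $W$ must join two columns $W^i_{j_i}$ and $W^{i'}_{j_{i'}}$ with $i\neq i'$, and for distinct colours such an edge requires the two intervals to share an auxiliary label $k+1$ or $k+2$. By \cref{obs:twoint} and the parity of edge-interval labels, each interval inherits an auxiliary label from at most one common edge interval; if two overlapping selected intervals took the \emph{same} auxiliary label, it would have to come from a single edge interval $I(e_\gamma)$ incident to both $V_i$ and $V_{i'}$ at vertices distinct from the selected ones. The crux is therefore to use the $\varepsilon$-perturbation of the rows and the placement of $I(e_\gamma)$ between indices $\gamma$ and $\gamma+1$ to pin down exactly which vertex intervals an edge interval can transmit its label to, and thereby to force the two overlapping selected intervals to sit at the same index $\gamma$, where the observation that $e_\gamma$ has its endpoint in a \emph{unique} colour class rules out a common auxiliary label. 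Establishing this geometric control — that edge intervals pass their auxiliary labels only to the intended intervals and so cannot manufacture an adjacency between two columns of a multicolored independent set — is the delicate part of the argument; once it is in place, an edge inside $W$ exists if and only if some edge of $G$ joins two selected vertices, and assembling this with the domination analysis completes the equivalence.
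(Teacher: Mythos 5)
Your domination analysis and your treatment of the direction ``$W$ is an independent dominating set $\Rightarrow$ $C$ is independent'' coincide with the paper's: the sets $S_i$ and all vertex intervals of colour $i$ are dominated by $W^i_{j_i}$ via the label $i$; an edge interval $I(e_\gamma)$ with both endpoints selected meets only intervals of $W$ that, by the labelling rules, do not carry its auxiliary label, so it is undominated; and an edge interval with a non-selected endpoint $v^p_x$ ($x\neq j_p$) meets one of $I^p_{j_p}(\gamma-1),I^p_{j_p}(\gamma),I^p_{j_p}(\gamma+1)$ and shares a label with it. Up to this point your proposal is essentially the paper's proof.

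The genuine gap is the step you yourself flag as ``the delicate part'': you never prove that independence of $C$ forces $W$ to be independent, and the mechanism you sketch for doing so does not close it. You want to argue that two overlapping selected intervals cannot receive the same auxiliary label from a single edge interval, appealing to $e_\gamma$ having ``its endpoint in a unique colour class.'' But an edge $e_\gamma=v^i_av^{i'}_b$ meets \emph{two} colour classes, and if neither endpoint is selected (i.e., $a\neq j_i$ and $b\neq j_{i'}$ --- a situation fully compatible with $C$ being independent), the labelling rule adds the label of $e_\gamma$ to \emph{every} interval of $W^i_{j_i}$ and of $W^{i'}_{j_{i'}}$ that meets $I(e_\gamma)$; for instance, when $j_i>a$ and $j_{i'}>a$ both $I^i_{j_i}(\gamma)$ and $I^{i'}_{j_{i'}}(\gamma)$ meet $I(e_\gamma)$ and hence share its label, while as unit intervals anchored at the same index $\gamma$ they also overlap each other. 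So \cref{obs:twoint} and the parity of the labels do not by themselves exclude an adjacency inside $W$, and a genuinely finer analysis of which intervals may inherit which labels is needed before the equivalence is established. For what it is worth, the paper's own proof is equally terse at exactly this point: it verifies domination in both directions and then simply asserts that $W$ is an independent dominating set, so the step you deferred is precisely the one the published argument does not spell out either.
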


\begin{proof}
    It is easy to see that the intervals of some $W^i_{j_i}$ dominate all intervals of $S_i$ as well as the intervals of all other $W_j^i$. In the following we will show that the edge intervals are also covered if $C$ is an independent set in $G$. 
    To this end, let $e_{\gamma}$ be some edge of $G$ with interval $I(e_{\gamma})$. Suppose that $e_{\gamma} = v^p_x v^s_t$. As $C$ is an independent set, we can assume without loss of generality that $v^p_x \notin C$ implying that $x \neq j_p$. 
    Furthermore, we can assume that $I(e_{\gamma})$ must intersect one of $I^p_{j_p}(\gamma -1 )$, $I^p_{j_p}(\gamma)$, and $I^p_{j_p}(\gamma +1)$ and must, by definition, share a label (either $k+1$, or $k+2$) with it. As all of these intervals are contained in $W^p_{j_p}$, the set $W$ is an independent dominating set.

    For the other direction, we need to show that if $W$ is an independent dominating set of $G'$, then $C$ is an independent set of $G$. Therefore let $W$ be an independent dominating set of $G'$ and suppose that there exists some edge $e_{\gamma} = v^p_x v^s_t$ such that $v^p_x$ and $v^s_t$ are contained in $C$. Assuming that $x \leq t$ and if $x = t$, then $p < s$, we know by definition that $I(e_{\gamma}) = (r(I^p_x(\gamma)), \ell(I^s_t(\gamma + 1)))$. Also by definition, any interval of $W$ intersecting $I(e_{\gamma})$ does not share a label with it.
\end{proof}

Combining \cref{lem:ids,lem:intervals} with the fact that MISP is $\W[1]$-hard when parameterized by the solution size and \cref{thm:lower} we get the following result.

\begin{theorem}
    The IDSP is $\W[1]$-hard when parameterized by the simultaneous interval number even if a $\si(G)$-simultaneous interval representation is given. 
    Furthermore, assuming the ETH, there is no $f(\si) n^{o(\si)}$-time algorithm for the ISDP for any computable function $f$. 
\end{theorem}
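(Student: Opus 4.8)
The plan is to finish the reduction from the Multicolored Independent Set Problem (MISP) that has been set up above. Given an MISP instance consisting of a graph $G$ with $k$ color classes of size $q$ and $m = |E(G)|$ edges, the construction yields the $(k+2)$-simultaneous interval graph $G'$ together with an explicit $(k+2)$-simultaneous interval representation; in particular $\si(G') \le k+2$. I would package this as the decision instance $(G', b)$ of the IDSP with threshold $b := k(m+1)$, the latter being exactly the number of vertices in a set of the form $W = W^1_{j_1}\cup\dots\cup W^k_{j_k}$ (each $W^i_{j_i}$ contributing the $m+1$ intervals $I^i_{j_i}(\gamma)$, $\gamma \in \{1,\dots,m+1\}$). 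Since the representation is produced by the reduction itself, this simultaneously handles the clause ``even if a $\si(G)$-simultaneous interval representation is given''.

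The correctness argument combines the two lemmas already proved. For the forward direction, if $C = \{v^1_{j_1},\dots,v^k_{j_k}\}$ is a multicolored independent set of $G$, then by \cref{lem:ids} the vertex set $W$ is an independent dominating set of $G'$ with $|W| = k(m+1) = b$; hence $G'$ has such a set of size at most $b$. For the converse, suppose $G'$ has an independent dominating set of size at most $b$ and let $D$ be a minimum one. By \cref{lem:intervals}, $D$ contains all vertices of some $W = W^1_{j_1}\cup\dots\cup W^k_{j_k}$, so $|D| \ge k(m+1) = b$; combined with $|D| \le b$ this forces $D = W$. Thus $W$ itself is an independent dominating set, and \cref{lem:ids} then returns a multicolored independent set $C$ of $G$. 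This establishes that $G$ is a yes-instance of the MISP if and only if $(G',b)$ is a yes-instance of the IDSP.

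Because $G'$ and its representation are built in polynomial time, $|V(G')|$ is polynomial in the size of $G$, and the new parameter satisfies $\si(G') \le k+2$, the map $(G,k)\mapsto (G',b)$ is a parameterized reduction. As the MISP is $\W[1]$-hard parameterized by the number of colors $k$, it follows that the IDSP is $\W[1]$-hard parameterized by $\si$, even when the representation is supplied.

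For the ETH lower bound, I would transfer \cref{thm:lower} along the same reduction. A hypothetical algorithm solving the IDSP in time $f(\si)\,n^{o(\si)}$, run on $(G',b)$, would take time $f(\si(G'))\,|V(G')|^{o(\si(G'))}$; using $\si(G') \le k+2$ and $|V(G')| = \mathrm{poly}(n)$, this collapses to $f'(k)\, n^{o(k)}$ for the MISP, contradicting \cref{thm:lower}. The only genuinely delicate point is the asymptotic bookkeeping that turns an $o(\si)$ exponent into an $o(k)$ exponent: one must check that replacing the unknown value $\si(G')$ by its upper bound $k+2$ keeps the exponent sublinear in $k$, which holds since $\max_{j\le k+2} g(j) = o(k)$ whenever $g(j) = o(j)$. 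I expect this little-o transfer, together with pinning down the exact threshold $b = k(m+1)$, to be the main (though mild) obstacle, since the structural heavy lifting is already carried out by \cref{lem:intervals} and \cref{lem:ids}.
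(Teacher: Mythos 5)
Your proposal is correct and follows essentially the same route as the paper, which simply combines \cref{lem:intervals} and \cref{lem:ids} with the $\W[1]$-hardness of the MISP and \cref{thm:lower}; you have merely made explicit the budget $b=k(m+1)$, the forcing argument $D=W$, and the little-o bookkeeping for the ETH transfer, all of which check out. No gaps.
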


Note that this reduction cannot be easily adapted to show that independent dominating set is $\W[1]$-hard when parameterized by the simultaneous interval number \emph{and} the solution size $k$, as the minimum size of an independent dominating set in $G'$ is of the order $\Omega(k m)$.

\section{Conclusion}

While we have presented some algorithmic properties for graphs of bounded simultaneous interval number, many open problems still remain. 
First and foremost is the computation of~$\si$. 
Unsurprisingly, the computation of $\si$ is \NP-hard, however, we are not aware of any results regarding the decision problem whether $\si$ is at most some fixed value $d$. 
It still remains to be seen whether there exists a computable function $f$ and an \FPT{} or an \XP{} algorithm that for a given graph $G$ and integer $d$, either correctly determines that $\si(G)>d$ or computes an $f(d)$-simultaneous interval representation of $G$. 
Such \FPT{} algorithms are known for treewidth~\cite{MR1417901,MR4399680} and cliquewidth~\cite{MR2232389}, and \XP{} algorithms are known for the tree-independence number~\cite{zbMATH06850324,DBLP:journals/corr/abs-2207-09993}. 

Furthermore, the complexity status of many important problems is still open when parameterized by $\si$, for example that of independent set and dominating set (see~\cref{table:comp}). 
Regarding the obtained \FPT{} results, it remains to be shown whether the running times are best possible. 
Especially in the case of the clique problem, there is still a large discrepancy between the achieved running time and the lower bound.

The simultaneous representation problem has also been considered for chordal graphs~\cite{jampani2009simconf}. 
This imposes the question whether similar results can be made for a \emph{simultaneous chordal number}. 
In fact, some of the results given here for the simultaneous interval number can be directly translated for the simultaneous chordal number as well. 
However, as the Dominating Set problem is \W[2]-hard for chordal graphs, the \FPT{} algorithm for that problem given in \cref{thm:fpt-dom} does not carry over.

\bibliographystyle{plainurl}
\bibliography{lit}

\newpage



\newcommand{\triv}{trivial}
\newcommand{\trans}{tra}
\newcommand{\interval}{int}
\newcommand{\paths}{paths}
\newcommand{\cliques}{$K_n$}
\newcommand{\trees}{trees}
\newcommand{\comatch}{co-$M$}
\newcommand{\cobip}{cobip}
\newcommand{\combip}{$K_{n,n}$}
\newcommand{\splitgraphs}{split}
\newcommand{\chordal}{chord}
\newcommand{\grid}{grid}

\begin{sidewaystable}
    \centering
    \caption{Justification of \cref{fig:parameters}. The entry in row $A$ and column $B$ explains why either bounded $A$ implies bounded $B$ or why this is not the case. 
    If there is given a graph class in this entry, then this class has bounded $A$ but unbounded $B$. 
    If there is another parameter $C$ given in the entry, then the statement follows via transitivity from the statement in the entry of row $A$ and column $C$.}
    \resizebox{\textheight}{!}{
    \begin{tabular}{l c c c c c c c c c c c c c c c c c}
    \toprule
        $\to$       & pw & tw & ecc & si & p-$\alpha$ & t-$\alpha$ & thin & box & lmim & mimw & omim & sim & cw & rw & twin & track & in \\
        \midrule
        pw          & - & \triv & \paths & T\ref{thm:pw} & si & si & si & si & si & si & si & si & tw & tw & tw & si & si  \\
        tw          & p-$\alpha$ & - & p-$\alpha$ & p-$\alpha$ & \trees & \triv & lmim & \cite{chandran2008boxicity} & \trees & cw & cw & cw & \cite{MR2148860} & cw & cw & \cite{knauer2016three} & track \\
        ecc         & tw & \cliques & - & C\ref{thm:si-eccn} & si & si & si & si & si & si & si & si & \triv & cw & cw & si & si \\
        si          & twin & twin & twin & - & C\ref{corol:si-pin} & p-$\alpha$ & T\ref{thm:thin} & thin & T\ref{thm:lmim} & lmim & lmim & lmim & twin & twin & \interval  & T\ref{thm:track} & track \\
        p-$\alpha$  & twin & twin & twin & \comatch & - & \triv & \comatch & \comatch & mim & \cobip~\cite{mengel2018lower} & t-$\alpha$ & t-$\alpha$  & twin & twin & \interval & in & \splitgraphs \\
        t-$\alpha$  & twin & twin & twin & \comatch & \trees & - & \comatch & \comatch & mim & \chordal~\cite{mengel2018lower} &  & omim & twin & twin & \interval & in & \splitgraphs \\
        thin        & twin & twin & twin & t-$\alpha$ & t-$\alpha$ & \combip & - & \cite{chandran2007independent} & \cite{bonomo2019thinness} & lmim & lmim & lmim & twin & twin & \interval & in & \combip~\cite{griggs1979extremal} \\
        box         & twin & twin & twin & sim & sim & sim & sim & - & sim & sim & sim & \grid & twin & twin & \interval & in & \combip \\
        lmim        & si & twin & twin & t-$\alpha$ & t-$\alpha$ & \combip~\cite{dallard2021treewidth} & \comatch & \comatch & - & \triv & mim & mim & twin & twin & \interval & in & \combip~\cite{griggs1979extremal}\\
        mim         & lmim & twin & twin & t-$\alpha$ & t-$\alpha$ & \combip\cite{dallard2021treewidth,vatshelle2012new} & \comatch & \comatch & \trees & - & \cite{bergougnoux2023new} & omim & \interval & cw & \interval & in & \combip~\cite{griggs1979extremal} \\
        omim        & twin & twin & twin & twin & twin & twin & \comatch & \comatch & twin & twin & - & \cite{bergougnoux2023new} & twin & twin & \interval & in & \splitgraphs \\
        sim         & twin & twin & twin & twin & twin & twin & \comatch & \comatch & twin & twin & twin & - & twin & twin & \interval & in & \splitgraphs \\
        cw          & t-$\alpha$ & t-$\alpha$ & t-$\alpha$ & t-$\alpha$ & t-$\alpha$ & \combip & \comatch & \comatch & \trees & \cite{vatshelle2012new} & mim & mim & - & \cite{oum2006approx} & \cite{bonnet2022twin} & in & \combip~\cite{griggs1979extremal} \\
        rw          & t-$\alpha$ & t-$\alpha$ & t-$\alpha$ & t-$\alpha$ & t-$\alpha$ & \combip & \comatch & \comatch & \trees & \cite{vatshelle2012new} & mim & mim & \cite{oum2006approx} & - & \cite{bonnet2022twin} & in & \combip~\cite{griggs1979extremal} \\
        twin        & t-$\alpha$ & t-$\alpha$ & t-$\alpha$ & t-$\alpha$ & t-$\alpha$ & \combip & \comatch & \comatch & \trees & \grid~\cite{bonnet2022twin,vatshelle2012new} & mim & mim & rw & \grid~\cite{bonnet2022twin,jelinek2010rank-width} & - & in & \combip~\cite{griggs1979extremal} \\
        track       & sim & sim & sim & sim & sim & sim & sim & \comatch & sim & sim & sim & \grid & twin & twin & \interval & - & \triv \\
        in          & sim & sim & sim & sim & sim & sim & sim & \comatch & sim & sim & sim & \grid & twin & twin & \interval &  \triv &  - \\
        \bottomrule
    \end{tabular}
    }
    \label{tab:my_label}
\end{sidewaystable}

\end{document}